\documentclass[11pt]{article}
\pdfoutput=1

\usepackage{fullpage}
\usepackage{graphicx}   
\usepackage{subfig}  
\usepackage{hyperref}   
\usepackage{mdwlist}
\usepackage{xspace}
\usepackage[usenames,dvipsnames]{color}
\usepackage{amsmath, amsthm,amssymb}   


\newcommand{\withoutentropy}[2]{#1}                     

\newcommand{\ifnotentropy}[2]{\withoutentropy{#1}{#2}}



\newcommand{\paris}[1]{{\color{SeaGreen} Paris: [{#1}]}}

\newcommand{\cut}[1]{}

\newenvironment{packed_enum}{
\begin{enumerate}
   \setlength{\itemsep}{1pt}
  \setlength{\parskip}{0pt}
   \setlength{\parsep}{0pt}
}
{\end{enumerate}}


\newcommand{\set}[1]{\{#1\}}                    
\newcommand{\setof}[2]{\{{#1}\mid{#2}\}}        
\usepackage{aliascnt}  		

\newtheorem{theorem}{Theorem}[section]          	
\newaliascnt{lemma}{theorem}				
\newtheorem{lemma}[lemma]{Lemma}              	
\aliascntresetthe{lemma}  					
\newaliascnt{conjecture}{theorem}			
\aliascntresetthe{conjecture}  				
\newaliascnt{remark}{theorem}				
              
\aliascntresetthe{remark}  					
\newaliascnt{fact}{theorem}				
              
\aliascntresetthe{fact}  					
\newaliascnt{corollary}{theorem}			
\newtheorem{corollary}[corollary]{Corollary}      
\aliascntresetthe{corollary}  				
\newaliascnt{definition}{theorem}			
\newtheorem{definition}[definition]{Definition}    
\aliascntresetthe{definition}  				
\newaliascnt{proposition}{theorem}			
\newtheorem{proposition}[proposition]{Proposition}  
\aliascntresetthe{proposition}  				
\newaliascnt{example}{theorem}			
\newtheorem{example}[example]{Example}  	
\aliascntresetthe{example}  				


\newcommand{\ba}[0]{\mathbf{a}}

\newcommand{\mpc}[0]{MPC}

\newcommand{\E}{\mathbf{E}}
\renewcommand{\P}{\mathbf{P}}

\newcommand{\contracted}{{\overline M}}

\usepackage{mathpazo} 
\linespread{1.05}        
\usepackage[scaled]{helvet} 
\usepackage{courier} 
\normalfont
\usepackage[T1]{fontenc}

\begin{document}

\title{Communication Steps for Parallel Query Processing}

\author{Paul Beame, Paraschos Koutris and Dan Suciu\\
\{beame,pkoutris,suciu\}@cs.washington.edu\\
University of Washington}

\maketitle

\begin{abstract}
We consider the problem of computing a relational query $q$ on a large input database of size $n$, using a large number $p$ of servers. The computation is performed in {\em rounds}, and each server can receive only $O(n/p^{1-\varepsilon})$ bits of data, where $\varepsilon \in [0,1]$ is a parameter that controls replication. We examine how many global communication steps are needed to compute $q$. We establish both lower and upper bounds, in two settings. 
For a single round of communication, we give lower bounds in the strongest possible model, where arbitrary bits may be exchanged; we show that any algorithm requires $\varepsilon \geq 1-1/\tau^*$,
where $\tau^*$ is the fractional vertex cover of the hypergraph of $q$. We also give an algorithm that
matches the lower bound for a specific class of databases.
For multiple rounds of communication, we present lower bounds in a model where routing decisions for a tuple are tuple-based. We show that for the class of {\em tree-like} queries there exists a tradeoff between the number of rounds and the space exponent $\varepsilon$.
The lower bounds for multiple rounds are the first of their kind. Our results also imply that transitive
 closure cannot be computed in $O(1)$ rounds of communication.
\end{abstract}

\section{Introduction}
\label{sec:introduction}

  Most of the time spent in big data analysis today is allocated in data
  processing tasks, such as identifying relevant data, cleaning,
  filtering, joining, grouping, transforming, extracting features, and
  evaluating results~\cite{DBLP:conf/pods/Chaudhuri12,datasciencesurvey}.  These
  tasks form the main bottleneck in big data analysis, and a major
  challenge for the database community is improving the performance
  and usability of data processing tools.  The motivation for this
  paper comes from the need to understand the complexity of query
  processing in big data management.

  Query processing is typically performed on a shared-nothing parallel 
  architecture. In this setting, the  data is stored on a large number of 
  independent servers interconnected by a fast network.  The servers 
  perform local computations, then exchange data in global data shuffling 
  steps.  This model of computation has been popularized by 
  MapReduce~\cite{DBLP:conf/osdi/DeanG04} and Hadoop~\cite{hadoop},
  and can be found in most big data processing systems, like
  PigLatin~\cite{DBLP:conf/sigmod/OlstonRSKT08},
  Hive~\cite{TSJSCALWM09}, Dremmel~\cite{DBLP:journals/pvldb/MelnikGLRSTV10}.

Unlike traditional query processing, the complexity is no longer
dominated by the number of disk accesses.  Typically, a query is
evaluated by a sufficiently large number of servers such that the
entire data can be kept in the main memory of these servers.  The new
complexity bottleneck is the communication.  Typical network speeds in
large clusters are 1Gb/s, which is significantly lower than main
memory access.  In addition, any data reshuffling requires a global
synchronization of all servers, which also comes at significant cost;
for example, everyone needs to wait for the slowest server, and, worse,
in the case of a straggler, or a local node failure, everyone must
wait for the full recovery.  Thus, the dominating complexity
parameters in big data query processing are the number of
communication steps, and the amount of data being exchanged.

\paragraph{MapReduce-related models}

Several computation models have been proposed in order to understand
the power of MapReduce and related massively parallel programming
methods~\cite{DBLP:journals/talg/FeldmanMSSS10,DBLP:conf/soda/KarloffSV10,DBLP:conf/pods/KoutrisS11,DBLP:journals/corr/abs-1206-4377}.
These all identify the number of communication steps/rounds as a
main complexity parameter, but differ in their treatment of the
communication.

The first of these models 
was the MUD (Massive, Unordered,
Distributed) model of Feldman et
al.~\cite{DBLP:journals/talg/FeldmanMSSS10}.  It takes as input a
sequence of elements and applies a binary merge operation repeatedly,
until obtaining a final result, similarly to a User Defined Aggregate
in database systems.  The paper compares MUD with streaming
algorithms: a streaming algorithm can trivially simulate MUD, and the
converse is also possible if the merge operators are computationally
powerful (beyond PTIME).

Karloff et al.~\cite{DBLP:conf/soda/KarloffSV10} define
$\mathcal{MRC}$, a class of multi-round algorithms based on
using the MapReduce primitive as the sole building block, and fixing
specific parameters for balanced processing.  The
number of processors $p$ is $\Theta(N^{1-\epsilon})$, and each can
exchange MapReduce outputs expressible in $\Theta(N^{1-\epsilon})$
bits per step, resulting in $\Theta(N^{2-2\epsilon})$ total storage
among the processors on a problem of size $N$.  Their focus was
algorithmic, showing simulations of other parallel models by $\mathcal{MRC}$,
as well as the power of two round algorithms for specific problems.

Lower bounds for the single round MapReduce model are first discussed
by Afrati et al.~\cite{DBLP:journals/corr/abs-1206-4377}, who derive
an interesting tradeoff between reducer size and replication rate.
This is nicely illustrated by Ullman's drug interaction
example~\cite{DBLP:journals/crossroads/Ullman12}.  There are $n$
($=6,500$) drugs, each consisting of about 1MB of data about patients
who took that drug, and one has to find all drug interactions, by
applying a user defined function (UDF) to all pairs of drugs.  To see
the tradeoffs, it helps to simplify the example, by assuming we are
given {\em two} sets, each of size $n$, and we have to apply a UDF to
every pair of items, one from each set, in effect computing their
cartesian product.  There are two extreme ways to solve this. One can
use $n^2$ reducers, one for each pair of items; while each reducer has
size $2$, this approach is impractical because the entire data is
replicated $n$ times.  At the other extreme one can use a single
reducer that handles the entire data; the replication rate is 1, but
the size of the reducer is $2n$, which is also impractical.  As a
tradeoff, partition each set into $g$ groups of size $n/g$, and use
one reducer for each of the $g^2$ pairs of groups: the size of a
reducer is $2n/g$, while the replication rate is $g$.  Thus, there is
a tradeoff between the replication rate and the reducer size, which
was also shown to hold for several other classes of
problems~\cite{DBLP:journals/corr/abs-1206-4377}.

\paragraph{Towards lower bound models}

There are two significant limitations of this prior work: 
(1) As powerful and as convenient as the MapReduce framework is, the operations
it provides may not be able to take full advantage of the resource
constraints of modern systems.   The lower bounds say nothing about
alternative ways of structuring the computation that send and receive the same
amount data per step.  
(2) Even within the MapReduce framework, the only lower bounds apply to a 
single communication round, and say nothing about the limitations of multi-round
MapReduce algorithms.

While it is convenient that MapReduce hides the number of servers
from the programmer, when considering the most efficient way to use resources
to solve problems it is natural to expose information about those resources
to the programmer.
In this paper, we take the view that the number of servers $p$
should be an explicit parameter of the model, which allows us to focus
on the tradeoff between the amount of communication and the number of rounds.  
For example, going back to our cartesian product problem, if the
number of servers $p$ is known, there is one optimal way to solve the
problem: partition each of the two sets into $g = \sqrt{p}$ groups,
and let each server handle one pair of groups.

A model with $p$ as explicit parameter was proposed by Koutris and
Suciu~\cite{DBLP:conf/pods/KoutrisS11}, who showed both lower and
upper bounds for one round of communication.  In this model
only tuples are sent and they must be routed independent
of each other.  For example, \cite{DBLP:conf/pods/KoutrisS11} proves
that multi-joins on the same attribute can be computed in one round,
while multi-joins on different attributes, like $R(x),S(x,y),T(y)$
require strictly more than one round. The study was mostly focused
on understanding data skew, the model was limited, and the results do
not apply to more than one round.  

In this paper we develop more general
models, establish lower bounds that hold even in the absence of skew,
and use a bit model, rather than a tuple model, to represent data.

\paragraph{Our lower bound models and results}

We define the {\em Massively Parallel Communication} (\mpc) model,
to analyze the tradeoff between the number of rounds and the amount of
communication required in a massively parallel computing
environment.  We include the number of servers $p$ as a
parameter, and allow each server to be infinitely powerful, subject
only to the data to which it has access.
The model requires that each server receives only $O(N/p^{1-\varepsilon})$ 
bits of data at any step, where $N$ is the problem size, and $\varepsilon\in [0,1]$ is a
parameter of the model. This implies that the replication factor is
$O(p^\varepsilon)$ per round.
A particularly natural case is $\varepsilon=0$,
which corresponds to a replication factor of $O(1)$, or $O(N/p)$ bits per
server; $\varepsilon=1$ is degenerate, since it allows the entire data
to be sent to every server.

We establish both lower and upper bounds for computing a full
conjunctive query $q$, in two settings.  First, we restrict the
computation to a single communication round and examine the minimum
parameter $\varepsilon$ for which it is possible to compute $q$ with
$O(N/p^{1-\varepsilon})$ bits per processor; we call this the {\em space
exponent}. 
We show that the space exponent for connected queries is always at least
$1 -1/\tau^*(q)$, where $\tau^*(q)$ is the
{\em fractional (vertex) covering number} of the hypergraph associated with
$q$~\cite{DBLP:journals/siamdm/ChungFGG88}, which is the optimal value of the
vertex cover linear program (LP) for that hypergraph.  
This lower bound applies to the strongest possible model in which servers can
encode any information in their messages, and have access
to a common source of randomness.  This is stronger than the
lower bounds
in~\cite{DBLP:journals/corr/abs-1206-4377,DBLP:conf/pods/KoutrisS11},
which assume that the units being exchanged are tuples.

Our one round lower bound holds
even in the special case of {\em matching databases}, when all
attributes are from the same domain $[n]$ and
all input relations are (hypergraph) matchings, in other words, every relation has exactly $n$ tuples, and every attribute contains every value $1,2,\ldots,n$ exactly
once. Thus, the lower bound holds even in a case in which there is no
data skew. We describe a simple tuple-independent algorithm that  is easily
implementable in the MapReduce framework, which, in the special case
of matching databases,  matches our lower bound for any conjunctive query.
The algorithm uses the optimal solution for the fractional vertex
cover to find an optimal split of the input data to the servers. For
example, the linear query $L_2 = S_1(x,y),S_2(y,z)$ has an optimal
vertex cover $0,1,0$ (for the variables $x,y,z$), hence its space 
exponent is $\varepsilon=0$, whereas the cycle query 
$C_3 = S_1(x,y), S_2(y,z), S_3(z,x)$ has optimal vertex cover
$1/2,1/2,1/2$ and space exponent $\varepsilon = 1/3$.
We note that recent 
work~\cite{DBLP:conf/soda/GroheM06,DBLP:conf/focs/AtseriasGM08,DBLP:conf/pods/NgoPRR12}
gives  upper bounds on the query size in terms of a fractional {\em
  edge cover}, while our results are in terms of the {\em vertex cover}.
Thus, our first result is:

\begin{theorem} \label{th:intro1}
\label{th:one:round}
For every connected conjunctive query $q$,
any $p$-processor randomized \mpc\ algorithm computing $q$ in one round
requires space exponent $\varepsilon\ge 1-1/\tau^*(q)$.
This lower bound holds even over matching databases, for which it
is optimal.
\end{theorem}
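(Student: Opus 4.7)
My plan is to establish the lower bound via Yao's minimax principle on a hard distribution of uniformly random matching databases, and then to match it with a hypercube-style hashing algorithm.

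\textbf{Lower bound.} I would take each relation $S_i$ to be an independent uniformly random matching on $[n]$. This distribution is invariant under the symmetric group $S_n$ acting on the domain values, so by averaging the algorithm over this group action I may assume without loss of generality that it is symmetric: for each server $j$ and each atom $S_i$, all tuples of $S_i$ influence the message $M_j$ sent to $j$ in the same expected way. Define $I_{ij}=I(M_j;S_i)$; the bit budget forces $\sum_i I_{ij}\le L$ for every $j$. The technical core of the proof is to show, via a Shearer- or Brascamp--Lieb-style entropy inequality applied to the query hypergraph, that the expected number of correct answer tuples that $j$ can certify is bounded by a product of the form $\prod_i (I_{ij}/(n\log n))^{\alpha_i}\cdot \E[|q(D)|]$, where $(\alpha_i)$ ranges over fractional matchings of the query hypergraph. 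By LP duality the maximum total weight $\sum_i\alpha_i$ equals $\tau^*(q)$. The matching constraint is what makes this clean: it forces every tuple to carry $\Theta(\log n)$ bits of entropy and renders the atoms essentially independent, so the Shearer-type inequality reduces to a product indexed by atoms. Summing over servers and applying correctness ($\sum_j \E[|\mathrm{output}_j\cap q(D)|]\ge \E[|q(D)|]$), then combining with weighted H\"older/AM--GM and the per-server budget, yields $p\cdot (L/(n\log n))^{\tau^*(q)}\ge \Omega(1)$, i.e.\ $L\ge \Omega(n\log n/p^{1/\tau^*(q)})$. Since $N=\Theta(n\log n)$ on matching databases, this rearranges to $\varepsilon\ge 1-1/\tau^*(q)$.

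\textbf{Matching upper bound.} I would present a hypercube-style algorithm: fix an optimal fractional vertex cover $y$ of $q$ with $\sum_v y_v=\tau^*$, allocate $p^{y_v/\tau^*}$ hash buckets to each variable $v$ (so the total number of servers is $p$), and send each tuple $t\in S_i$ to every server whose bucket coordinates agree with $t$ on the variables in $S_i$. The vertex-cover constraint $\sum_{v\in S_i}y_v\ge 1$ bounds the replication of $t$ by $p^{1-1/\tau^*}$, so each server receives $O(n/p^{1/\tau^*})$ tuples in total and can compute its share of $q(D)$ locally. This matches the lower bound over matching databases.

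\textbf{Main obstacle.} The delicate step is the per-server entropy inequality. In the bit model a server may encode information about many input tuples arbitrarily in its $L$ bits, so standard AGM-style counting---which yields fractional \emph{edge}-cover bounds---does not directly apply. I expect the proof to combine an entropy version of Shearer's lemma with the independence and symmetry of random matchings to reduce the certification bound to a product indexed by atoms whose exponents are governed by a fractional matching (equivalently, by LP duality, a fractional vertex cover). Lining these exponents up with the vertex-cover LP rather than the edge-cover LP is the crux of the argument, and it is precisely what makes the lower bound tight on matching databases.
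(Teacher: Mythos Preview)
Your proposal is correct and follows essentially the same route as the paper: Yao over uniformly random matching databases, a per-server bound whose exponents come from a fractional \emph{edge packing} (so by LP duality the total exponent is $\tau^*(q)$), a union bound over the $p$ servers, and the HyperCube algorithm driven by an optimal fractional vertex cover for the matching upper bound.

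The one place worth comparing is the step you flag as the main obstacle. The paper does not work with mutual information or symmetrize the algorithm; instead, for each server it defines the set $K_m(S_j)$ of tuples of $S_j$ that are \emph{determined} by the received message $m$, and proves directly (a short entropy argument, Lemma~3.4) that if the $S_j$-message uses an $f_j$ fraction of the $\log((n!)^{a_j-1})$ bits needed to encode a matching, then $\E[|K_m(S_j)|]\le f_j n$. The combination step is Friedgut's hypergraph inequality---which is exactly a Brascamp--Lieb/H\"older instance---applied not to $q$ but to an \emph{extended} query $q'$ obtained by adding a unary atom $T_i(x_i)$ for each variable; the optimal fractional edge packing $(u_j)$ of $q$ is completed to a \emph{tight} cover of $q'$ by setting $u'_i=1-\sum_{j\ni x_i} u_j$, and the unary weights are set to $1$. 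This trick is precisely what converts the AGM-style edge-cover exponents into edge-packing exponents and makes the bound line up with $\tau^*(q)$ rather than with a fractional edge cover. Your ``Shearer/Brascamp--Lieb with fractional-matching exponents'' is the right intuition; the extended-query device is how the paper makes it go through cleanly.
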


Second, we establish lower bounds for multiple communication steps,
for a restricted version of the \mpc\ model, called \emph{tuple-based}
\mpc\ model.  The messages sent in the first round are still
unrestricted, but in subsequent rounds the servers can send only
tuples, either base tuples in the input tables, or join tuples
corresponding to a subquery; moreover, the destinations of each tuple
may depend only on the tuple content, the message received in the
first round, the server, and the round.  We note that any multi-step
MapReduce program is tuple-based, because in any map function the key
of the intermediate value depends only on the input tuple to the map
function.
Here, we prove that the number of rounds required is, essentially,
given by the depth of a query plan for the query, where each operator
is a subquery that can be computed in one round for the given
$\varepsilon$.  For example, to compute a length $k$ chain query
$L_k$, if $\varepsilon=0$, the optimal computation is a bushy join
tree, where each operator is $L_2$ (a two-way join) and the optimal
number of rounds is $\log_2 k$. If $\varepsilon=1/2$, then we can use
$L_4$ as operator (a four-way join), and the optimal number of rounds
is $\log_4 k$.  More generally, we can show nearly matching upper and
lower bounds based on graph-theoretic properties of the query such as
the following:

\begin{theorem}
 \label{th:intro2}
For space exponent $\varepsilon$, the number of rounds required for
any tuple-based \mpc\ algorithm to
compute any tree-like conjunctive query $q$ is at least
$\lceil\log_{k_\varepsilon} (\text{diam}(q))\rceil$
where $k_{\varepsilon} = 2 \lfloor  1/(1-\varepsilon) \rfloor$ and
$\text{diam}(q)$ is the diameter of $q$.  
Moreover, for any connected conjunctive query $q$, 
this lower bound is nearly matched (up to a difference of
essentially one round) by a tuple-based \mpc\ algorithm with space exponent
$\varepsilon$.
\end{theorem}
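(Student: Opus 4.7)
The plan is to handle the lower and upper bounds separately and tie them together via the parameter $k_\varepsilon$, which represents the ``diameter reach'' achievable in a single round. The starting point is \autoref{th:intro1}: in one round with space exponent $\varepsilon$, a query with fractional vertex cover exceeding $1/(1-\varepsilon)$ cannot be computed. For the length-$k$ chain query $L_k$, $\tau^*(L_k)=\lceil k/2\rceil$, so \autoref{th:intro1} already forces $k\le 2\lfloor 1/(1-\varepsilon)\rfloor = k_\varepsilon$. This identifies $k_\varepsilon$ as the longest chain (hence largest diameter piece of a tree-like query) that can be computed in a single round.

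For the lower bound I would argue by induction on the number of rounds $r$. The inductive invariant is: after $r$ rounds of any tuple-based \mpc\ algorithm with space exponent $\varepsilon$, the set of input tuples whose joint content a single server can have causally influenced is ``compact'' in the sense that any subquery it can evaluate has diameter at most $k_\varepsilon^{\,r}$. The base case $r=1$ is essentially the chain lower bound derived from \autoref{th:intro1} applied to a length-$(k_\varepsilon+1)$ subpath of $q$: if such a long path could be computed in one round, then $\tau^*$ would exceed $1/(1-\varepsilon)$, contradicting \autoref{th:intro1}. The inductive step uses the tuple-based restriction: in rounds $2,\ldots,r$, the only objects transmitted are base or join tuples, and their destinations depend only on the tuple content, the round, the destination server, and the (round-1) message. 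So the ``reach'' of a server after one additional round extends by at most a factor of $k_\varepsilon$ in the hypergraph of $q$, because any wider combination would again violate the one-round vertex-cover bound when one contracts the previously computed subqueries to single variables. Embedding a path of length $\text{diam}(q)$ as a witness gadget then gives $k_\varepsilon^{\,r}\ge \text{diam}(q)$, i.e., $r\ge \lceil\log_{k_\varepsilon}\text{diam}(q)\rceil$.

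For the upper bound, I would design a recursive query plan. Pick a centroid (or a longest-path midpoint) of the tree-like hypergraph of $q$ and split $q$ into subqueries whose own diameter is at most $k_\varepsilon$; by \autoref{th:intro1} each such subquery is computable in one round with space exponent $\varepsilon$. Treating each resulting intermediate relation as a ``super-edge'' produces a residual query whose diameter shrinks by a factor of roughly $k_\varepsilon$. Iterating yields a plan of depth $\lceil\log_{k_\varepsilon}\text{diam}(q)\rceil$, with possibly one extra round to patch the leftover pieces at the root, accounting for the ``difference of essentially one round'' in the statement. Extending beyond trees to connected conjunctive queries is done by first computing a spanning tree of the query hypergraph and handling the remaining (non-tree) hyperedges as cheap one-round semijoin filters.

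The main obstacle I expect is making the inductive step of the lower bound rigorous: one must show that contracting the subqueries computed in rounds $1,\ldots,r-1$ to single variables really does reduce the problem to a one-round problem whose fractional vertex cover inherits $\tau^*\ge (r\text{-th level diameter})/2$. This contraction must respect the tuple-based nature of rounds $\ge 2$ and the unrestricted, bit-level freedom of round 1. The argument essentially amounts to a round-elimination lemma tailored to conjunctive queries, combining the vertex-cover-based entropy/counting lower bound of \autoref{th:intro1} with a careful definition of which tuples a server ``knows about'' after each round.
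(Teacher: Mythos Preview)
Your high-level strategy is the same as the paper's: a round-elimination argument for the lower bound (contract what was computed, reduce $r+1$ rounds to $r$, and bottom out at the one-round vertex-cover bound from \autoref{th:intro1}), and a recursive query plan for the upper bound. So the shape is right.

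Where you diverge from the paper, and where your sketch is thinnest, is the actual mechanism of round elimination. Your invariant that a server's ``reach'' grows by at most a factor $k_\varepsilon$ per round is the right intuition, but the paper does not argue this way. Instead it fixes an $\varepsilon$-good set $M\subseteq\text{atoms}(q)$ (every one-round-computable subquery touches at most one atom of $M$, and the complement $\contracted$ is tree-like), freezes the $\contracted$-part of the input, and builds an explicit $r$-round algorithm $A'$ for $q/\contracted$ from the $(r+1)$-round algorithm $A$ for $q$. The crucial, non-obvious step is what the paper calls \emph{Retraction}: using the tuple-based restriction, one can push all atomic-tuple routing into round~1, after which round~2 sends nothing; the only loss is the set $J^{A,q}_\varepsilon$ of join tuples for subqueries \emph{not} in $\Gamma^1_\varepsilon$ that were already known after round~1, and this set is small precisely by the one-round lower bound of \autoref{th:onestep}. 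Your proposal gestures at this (``contracting the previously computed subqueries to single variables'') but does not explain why the unrestricted-bit first round cannot leak enough information to make the contracted problem easy; the Retraction construction is exactly what plugs that hole. You also need a semijoin/averaging lemma (the paper's \autoref{lemma:onebyp}) to translate ``few $q'$-tuples known'' into ``few $q$-answers containing such a $q'$-tuple,'' which your sketch does not mention.

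On the upper bound, the paper does something slightly different from your centroid decomposition: it picks a center $v$ realizing $\text{rad}(q)$, decomposes $q$ into paths of length $\le\text{rad}(q)$ from $v$, computes each path in $\lceil\log_{k_\varepsilon}\text{rad}(q)\rceil$ rounds via repeated $L_{k_\varepsilon}$-joins, and spends one final round joining on $v$. For general connected $q$ it extends the paths by one more edge rather than using a spanning tree plus semijoin filters as you suggest. Both approaches give the same bound up to the one-round slack; yours would also work, but the paper's path decomposition is what makes the comparison with the $\text{diam}(q)$ lower bound (obtained by embedding $L_{\text{diam}(q)}$ and invoking \autoref{prop:tree-like}) line up cleanly.
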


We further show that our results for conjunctive path queries imply that
any tuple-based \mpc\ algorithm with
space exponent $\varepsilon<1$ requires $\Omega(\log p)$ rounds to
compute the transitive closure or connected components of sparse
undirected graphs.
This is an interesting contrast to the results
of~\cite{DBLP:conf/soda/KarloffSV10}, which show that connected components
(and indeed minimum spanning trees) of undirected graphs can be computed in
only two rounds of
MapReduce provided that the input graph is sufficiently dense.

These are the first lower bounds that apply to multiple rounds of MapReduce.
Both lower bounds in \autoref{th:intro1} and \autoref{th:intro2} are
stated in a strong form: we show that any algorithm on the \mpc\ model
retrieves only a $1/p^{\Omega(1)}$ fraction of the answers to the
query in expectation, when the inputs are drawn uniformly at random
(the exponent depends on the query and on $\varepsilon$); Yao's
Lemma~\cite{yao83} immediately implies a lower bound for any
randomized algorithm over worst-case inputs.  Notice that the fraction
of answers gets worse as the number of servers $p$ increases.  In
other words, the more parallelism we want, the worse an algorithm
performs, if the number of communication rounds is bounded.

\paragraph{Related work in communication complexity}

The results we show belong to the study of communication complexity,
for which there is a very large body of existing research~\cite{kn97}.
Communication complexity considers the number of bits that need to be
communicated between cooperating agents in order to solve
computational problems when the agents have unlimited computational
power.  Our model is related to the so-called number-in-hand
multi-party communication complexity, in which there are multiple
agents and no shared information at the start of communication.  This
has already been shown to be important to understanding the processing
of massive data: Analysis of number-in-hand (NIH) communication
complexity has been the main method for obtaining lower bounds on the
space required for data stream algorithms~(e.g. \cite{ams:freq}).

However, there is something very different about the results that we prove here.
In almost all prior lower bounds, there is at least one agent that has
access to all communication between agents\footnote{\footnotesize Though private-messages
models have been defined before, we are aware of only two lines of work where
lower bounds make use of the fact that no single agent has access to all
communication:  
(1) Results of \cite{DBLP:conf/focs/GalG07,DBLP:conf/icalp/GuhaH09}
use the assumption that communication is both private and (multi-pass) one-way,
but unlike the bounds we prove here, their lower bounds 
are smaller than the total input size;
(2) Tiwari~\cite{tiw87} defined a distributed model of communication 
complexity in networks in which in input is given to two processors that
communicate privately using other helper processors.
However, this model is equivalent to ordinary public two-party
communication when the network allows direct private communication between any
two processors, as our model does.}.
(Typically, this is either
via a shared blackboard to which all agents have access or a referee who
receives all communication.)   
In this case, no problem on $N$ bits whose answer is
$M$ bits long can be shown to require more than $N+M$ bits of communication.

In our \mpc\ model, all communication between servers is {\em private} and
we restrict the communication per processor per step, rather than the total
communication.  
Indeed, the privacy of communication is essential to our lower bounds, since
we prove lower bounds that apply when the total communication is much larger
than $N+M$. (Our lower bounds for some problems apply when the total
communication is as large as $N^{1+\delta}$.)

\section{Preliminaries}
\label{sec:defs}

\subsection{Massively Parallel Communication}
\label{subsec:model}

We fix a parameter $\varepsilon \in [0,1]$, called the {\em space
  exponent}, and define the \mpc($\varepsilon$) model as follows.  The
computation is performed by $p$ servers, called {\em workers}, 
connected by a complete network of private channels. 
The input data has size $N$ bits, and is
initially distributed evenly among the $p$ workers.  The computation
proceeds in rounds, where each round consists of local computation at
the workers interleaved with global communication.  The complexity is
measured in the number of communication rounds.  The servers have
unlimited computational power, but there is one important restriction:
at each round, a worker may receive a total of only
$O(N/p^{1-\varepsilon})$ bits of data from all other workers
combined.  Our goal is to find lower and upper bounds on the number of
communication rounds.

The space exponent represents the degree of replication during
communication; in each round, the total amount of data exchanged is
$O(p^\varepsilon)$ times the size of the input data.  When
$\varepsilon=0$, there is no replication, and we call this the
basic \mpc\ model. The case $\varepsilon=1$ is degenerate because each
server can receive the entire data, and any problem can be solved in
a single round.  Similarly, for any fixed $\varepsilon$, if we allow
the computation to run for $\Theta(p^{1-\varepsilon})$ rounds, the
entire data can be sent to every server and the model is again
degenerate.

We denote $M_{uv}^r$ the message sent by server $u$ to server $v$
during round $r$ and denote $M_v^r=(M_v^{r-1},(M_{1v}^r,\ldots,M_{pv}^r))$ 
the concatenation of all messages sent to $v$ up to round $r$.  Assuming $O(1)$ 
rounds, each message $M_v^r$ holds $O(N/p^{1-\varepsilon})$ bits.  For our
multi-round lower bounds in Section~\ref{sec:multistep}, we will
further restrict what the workers can encode in the messages
$M^r_{uv}$ during rounds $r \geq 2$.

\subsection{Randomization}

The \mpc\ model allows randomization.  The random bits are
available to all servers, and are computed independently of the input
data.  The algorithm may fail to produce its output with a small
probability $\eta>0$, independent of the input.  For example, we use
randomization for load balancing, and abort the computation if the
amount of data received during a communication would exceed the
$O(N/p^{1-\varepsilon})$ limit, but this will only happen with
exponentially small probability.

To prove lower bounds for randomized algorithms, we use Yao's
Lemma~\cite{yao83}.  We first prove bounds for {\em deterministic}
algorithms, showing that any algorithm fails with probability at
least $\eta$ over inputs chosen randomly from a distribution $\mu$. 
This implies, by Yao's Lemma, that every randomized algorithm with the 
same resource bounds will fail on some input (in the support of $\mu$) with 
probability at least $\eta$ over the algorithm's random choices.

\subsection{Conjunctive Queries}
\label{subsec:cq}

In this paper we consider a particular class of problems for the \mpc\
model, namely computing answers to conjunctive queries over an input
database.  We fix an input vocabulary $S_1, \ldots, S_\ell$, where
each relation $S_j$ has a fixed arity $a_j$; we denote $a = \sum_{j
  =1}^{\ell} a_j$.  The input data consists of one relation instance
for each symbol.  We denote $n$ the largest number of tuples in any
relation $S_j$; then, the entire database instance can be encoded using
$N = O(n \log n)$ bits, because $\ell = O(1)$ and $a_j = O(1)$ for
$j=1,\dots, \ell$.

We consider full conjunctive queries (CQs) without
self-joins, denoted as follows:
\begin{equation} \label{eq:q}
  q(x_1,\ldots, x_k) = S_1(\bar x_1), \ldots, S_\ell(\bar x_\ell) 
\end{equation}

The query is {\em full}, meaning that every variable in the body
appears the head (for example $q(x) = S(x,y)$ is not full), and {\em
  without self-joins}, meaning that each relation name $S_j$ appears
only once (for example $q(x,y,z) = S(x,y), S(y,z)$ has a
self-join). The {\em hypergraph} of a query $q$ is defined by
introducing one node for each variable in the body and one hyperedge
for each set of variables that occur in a single atom. We say that a
conjunctive query is {\em connected} if the query hypergraph is
connected (for example, $q(x,y) = R(x),S(y)$ is not connected).
We use $\text{vars}(S_j)$ to denote the set of variables in the atom $S_j$, and
$\text{atoms}(x_i)$ to denote the set of atoms where $x_i$ occurs; $k$
and $\ell$ denote the number of variables and atoms in $q$, as
in~\eqref{eq:q}. The {\em connected components} of $q$ are the maximal 
connected subqueries of $q$. \autoref{tab:queries} illustrates example queries 
used throughout this paper.

We consider two query evaluation problems. In {\sc Join-Reporting}, we
require that all tuples in the relation defined by $q$ be produced. In
{\sc Join-Witness}, we require the production of at least one tuple
in the relation defined by $q$, if one exists;
{\sc Join-Witness} is the verified version of the natural decision problem
{\sc Join-NonEmptiness}.

\paragraph{Characteristic of a Query} 

The {\em characteristic} of a conjunctive query $q$ as in \eqref{eq:q} is
defined as $\chi(q) = k +\ell - \sum_j a_j - c$, where $k$ is the
number of variables, $\ell$ is the number of atoms, $a_j$ is the
arity of atom $S_j$, and $c$ is the number of connected components of $q$.

For a query $q$ and a set of atoms $M \subseteq \text{atoms}(q)$, define
$q/M$ to be the query that results from contracting the edges in the hypergraph
of $q$. As an example, for the query $L_5$ in \autoref{tab:queries}, 
$L_5/ \{S_2, S_4 \}= S_1(x_0, x_1), S_3(x_1,x_3), S_5(x_3,x_5)$.

\begin{lemma} \label{lemma:chi} The characteristic of a query $q$ satisfies the following 
properties:
\begin{itemize}
\item[(a)] If $q_1, \ldots, q_c$ are the connected components of $q$, then
$\chi(q)=\sum_{i=1}^c \chi(q_i)$.
\item[(b)] For any $M \subseteq \text{atoms}(q)$, $\chi(q/M)=\chi(q) -\chi(M)$.
\item[(c)] $\chi(q) \leq 0$.
\item[(d)] For any $M \subseteq \text{atoms}(q)$, $\chi(q) \leq \chi(q/M)$.
\end{itemize}
\end{lemma}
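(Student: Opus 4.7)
The plan is to prove the four parts in the order (a), (b), (c), (d), with (b) doing the heavy lifting and (c) being the only place where a nontrivial combinatorial inequality appears.

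Part (a) should be a one-line computation. Since the connected components $q_1, \dots, q_c$ partition both the variable set and the atom set of $q$, the quantities $k$, $\ell$ and $\sum_j a_j$ add up over components. Each $q_i$ has exactly one connected component, so the $-c$ term in $\chi(q)$ exactly matches $\sum_i(-1)$. Summing the definition of $\chi(q_i)$ over $i$ therefore yields $\chi(q)$.

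Part (b) is where one must be careful. I would first spell out what $q/M$ is combinatorially: the atoms of $q/M$ are $\text{atoms}(q) \setminus M$, so $\ell_{q/M} = \ell - |M|$ and $\sum_{j \in q/M} a_j = \sum_j a_j - \sum_{j \in M} a_j$. For the variables, let $V_M$ be the set of variables appearing in atoms of $M$ and let $c_M$ be the number of connected components of the sub-hypergraph on $M$; contracting each hyperedge in $M$ merges precisely the $|V_M|$ vertices of $V_M$ into one representative per component of $M$, so $k_{q/M} = k - |V_M| + c_M$. The step that needs an explicit argument is the claim $c_{q/M} = c$: since every atom of $M$ lies inside one connected component of $q$, contracting it cannot merge or split components, so the number of connected components is preserved. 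Plugging these four counts into the definition and regrouping gives $\chi(q/M) = \chi(q) - (|V_M| + |M| - \sum_{j \in M} a_j - c_M) = \chi(q) - \chi(M)$, as desired.

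Part (c) is the only genuinely inequality-flavored step, and I expect it to be the main obstacle. By (a) it suffices to treat a connected query, where $c = 1$ and we must show $k - 1 \le \sum_j (a_j - 1)$. The natural argument is to replace each hyperedge $S_j$ of arity $a_j$ by any $a_j - 1$ ordinary edges spanning its $a_j$ vertices (e.g.\ a star on the atom), producing an ordinary multigraph on $k$ vertices with $\sum_j (a_j - 1)$ edges that has the same connectivity as the original hypergraph. A connected graph on $k$ vertices needs at least $k-1$ edges, giving the inequality. Part (d) is then immediate: by (b), $\chi(q) \le \chi(q/M)$ is equivalent to $\chi(M) \le 0$, which is (c) applied to $M$ viewed as its own query.
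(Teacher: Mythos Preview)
Your proof is correct. Parts (a), (b), and (d) essentially coincide with the paper's argument: the paper also derives (a) directly from disjointness of components, proves (b) by the same vertex/atom/arity bookkeeping (though it first reduces to connected $M$ via (a) and induction, whereas you handle general $M$ in one shot by introducing $c_M$ explicitly), and obtains (d) from (b) together with (c) applied to $M$. The real difference is in (c). The paper proves $\chi(q)\le 0$ by iterated single-atom contraction: it shows $\chi(q)\le \chi(q/S_j)$ for any atom $S_j$ (since contracting $S_j$ changes $\chi$ by $a_j-m\ge 0$, where $m$ is the number of distinct variables in $S_j$), and then contracts down to a single atom, for which $\chi\le 0$ is immediate. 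Your spanning-subgraph argument---replace each hyperedge by $a_j-1$ ordinary edges and invoke the $k-1$-edge lower bound for connected multigraphs---is more direct and arguably cleaner. The paper's route has the minor side benefit that the intermediate inequality $\chi(q)\le \chi(q/S_j)$ already yields (d) by iteration without appealing to (b), but since you obtain (d) from (b)+(c) anyway, nothing is lost.
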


\begin{proof}
Property (a) is immediate from the definition of $\chi$, since the connected
components of $q$ are disjoint with respect to variables and atoms.
Since $q/M$ can be produced by contracting according to each connected
component of $M$ in turn, by property (a) and induction it suffices to show
that property (b) holds in the case that $M$ is connected.  
If a connected $M$ has $k_M$ variables,
$\ell_M$ atoms, and total arity $a_M$, then the query after contraction, $q/M$,
will have the same number of connected components, $k_M-1$ fewer variables,
and the terms for the number of atoms and total arity will be reduced by
$\ell_M-a_M$ for a total reduction of $k_M+\ell_M-a_M-1=\chi(M)$.  
Thus, property (b) follows.

By property (a), it suffices to prove (c) when $q$ is connected. 
If $q$ is a single atom then $\chi(q) \leq 0$, since the number
of variables is at most the arity of the atom in $q$.
We reduce to this case by repeatedly contracting the atoms of $q$ until 
only one remains and showing that $\chi(q)\le \chi(q/S_j)$:  
Let $m\le a_j$ be the number of distinct variables in atom $S_j$.
Then, $\chi(q/S_j) = (\ell-1) + (k-m+1) - (a - a_j) - 1 = \chi(q)
+(a_j-m) \geq \chi(q)$.
Property (d) also follows inductively from $\chi(q)\le \chi(q/S_j)$ or 
by the combination of property (b) and property (c) applied to $M$.
\end{proof}

\begin{sloppypar}
Finally, let us call a query $q$ {\em tree-like} if $q$ is connected and
$\chi(q) = 0$.  For example, the query 
$L_k$ is tree-like, and so is any query over a binary 
vocabulary whose graph is a tree. Over non-binary vocabularies, any tree-like 
query is acyclic, but the converse does not hold: $q =
S_1(x_0,x_1,x_2),S_2(x_1,x_2,x_3)$ is acyclic but not tree-like. 
An important property of tree-like queries is that every connected subquery will
be also tree-like.
\end{sloppypar}

\paragraph{Vertex Cover and Edge Packing}

A {\em fractional vertex cover} of a query $q$ is any feasible
solution of the LP shown on the left of \autoref{tab:LP}.  The vertex
cover associates a non-negative number $u_i$ to each variable $x_i$
s.t. every atom $S_j$ is ``covered'', $\sum_{i: x_i \in
  \text{vars}(S_j)} v_i \geq 1$.
The dual LP corresponds to a {\em fractional edge packing} problem (also known
as a {\em fractional matching} problem),
which associates non-negative numbers $u_j$ to each atom $S_j$.  The two
LPs have the same optimal value of the objective function, known as the
{\em fractional covering number}~\cite{DBLP:journals/siamdm/ChungFGG88} of the hypergraph associated with $q$
and denoted by $\tau^*(q)$.  Thus, $\tau^*(q) = \min \sum_i v_i = \max
\sum_j u_j$.  Additionally, if all inequalities are satisfied as
equalities by a solution to the LP, we say that the solution is {\em
  tight}.

\begin{figure}[b]
\centering{
\begin{tabular}[c]{|l|l|} \hline
  Vertex Covering LP & Edge Packing LP \\ \hline
\parbox{5cm}{
   \begin{align}
    & \forall j \in [\ell]: 
     \sum_{i: x_i \in \text{vars}(S_j)} \kern -1em v_i \geq 1 \label{eq:cover} \\
    & \forall i \in [k] : v_i \geq 0 \nonumber
  \end{align}
}
&
\parbox{5cm}{
  \begin{align}
    & \forall i \in [k]: 
     \sum_{j: x_i \in \text{vars}(S_j)} \kern -1em u_j \leq 1 \label{eq:cover:dual} \\
    &  \forall j \in [\ell] : u_j \geq 0 \nonumber
  \end{align}
}
\\ \hline
$\text{minimize } \sum_{i=1}^k v_i$ &
$\text{maximize } \sum_{j=1}^{\ell} u_j$ \\ \hline
\end{tabular}
}
\caption{The vertex covering LP of the hypergraph of a query $q$, and its
dual edge packing LP.}
\label{tab:LP}
\end{figure}

\begin{sloppypar}
\begin{example}
For a simple example, a fractional vertex cover of the
query\footnote{We drop the head variables when clear from the
  context.} $L_3 = S_1(x_1, x_2),S_2(x_2,x_3), S_3(x_3, x_4)$ is any
solution to $v_1+v_2 \geq 1$, $v_2+v_3 \geq 1$ and $v_3+v_4 \geq 1$;
the optimal is achieved by $(v_1,v_2,v_3,v_4)=(0,1,1,0)$, which is not
tight.  An edge packing is a solution to $u_1 \leq 1$, $u_1+u_2 \leq
1$, $u_2 + u_3 \leq 1$ and $u_3 \leq 1$, and the optimal is achieved
by $(1,0,1)$, which is tight.
\end{example}
\end{sloppypar}

The fractional edge {\em packing} should not be confused with the
fractional edge {\em cover}, which has been used recently in several
papers to prove bounds on query size and the running time of a
sequential algorithm for the
query~\cite{DBLP:conf/focs/AtseriasGM08,DBLP:conf/pods/NgoPRR12}; for
the results in this paper we need the fractional packing.  The two
notions coincide, however, when they are tight.

\subsection{Input Servers}

We assume that, at the beginning of the algorithm, each relation $S_j$
is stored on a separate server, called an {\em input server}, which
during the first round sends a message $M^1_{ju}$ to every worker $u$.
After the first round, the input servers are no longer used in the
computation.  All lower bounds in this paper assume that the relations
$S_j$ are given on separate input servers.  All upper bounds hold for
either model.

The lower bounds for the model with separate input servers carry over
immediately to the standard \mpc\ model, because any algorithm in the
standard model can be simulated in the model with separate input
servers.  Indeed, the algorithm must compute the output correctly for
any initial distribution of the input data on the $p$ servers: we
simply choose to distribute the input relations $S_1, \ldots, S_\ell$
such that the first $p/\ell$ servers receive $S_1$, the next $p/\ell$
servers receive $S_2$, etc., then simulate the algorithm in the model
with separate input servers (see~\cite[proof of Proposition
3.5]{DBLP:conf/pods/KoutrisS11} for a detailed discussion).  Thus, it
suffices to prove our lower bounds assuming that each input relation
is stored on a separate input server.  In fact, this model is even more
powerful, because an input server has now access to the entire
relation $S_j$, and can therefore perform some global computation on
$S_j$, for example compute statistics, find outliers, etc., which are
common in practice.

\subsection{Input Distribution}

We find it useful to consider input databases of the following form that
we call a {\em matching} database:
The domain of the input database will be $[n]$, for $n>0$.  
In such a database each relation $S_j$ is an {\em $a_j$-dimensional
  matching}, where $a_j$ is its arity.  In other words, $S_j$ has
exactly $n$ tuples and each of its columns contains exactly the values
$1, 2, \ldots, n$; each attribute of $S_j$ is a key.  For example, if
$S_j$ is binary, then an instance of $S_j$ is a permutation on $[n]$;
if $S_j$ is ternary then an instance consists of $n$ node-disjoint
triangles.  Moreover, the answer to a connected conjunctive query $q$ on a
matching database is a table
where each attribute is a key, because we have assumed that $q$ is
full; in particular, the output to $q$ has at most $n$
tuples.  In our lower bounds we assume that a matching database is
randomly chosen with uniform probability, for a fixed $n$.

Matching databases are database instances {\em without skew}.  By
stating our lower bounds on matching databases we make them even stronger,
because they imply that a query cannot be computed even in the absence
of skew; of course, the lower bounds also hold for arbitrary
instances.  Our upper bounds, however, hold only on matching databases.  Data
skew is a known problem in parallel processing, and requires
dedicated techniques. Lower and upper bounds
accounting for the presence of skew are discussed
in~\cite{DBLP:conf/pods/KoutrisS11}.

\begin{table*}
  \centering
  \renewcommand{\arraystretch}{1.8}
\resizebox{\textwidth}{!}
{
  \begin{tabular}[c]{|l|c|c|c|c|c|} \hline
    Conjunctive Query & Expected  & Minimum      & Share         &   Value & Space \\
          & answer size       & Vertex Cover &  Exponents&   $\tau^*(q)$ & Exponent \\ \hline
$C_k(x_1,\ldots,x_k) = \bigwedge_{j=1}^{k} S_j(x_j,x_{(j \bmod k)+1}) $ & $1$ & $\frac{1}{2}, \dots, \frac{1}{2}$ 
 & $\frac{1}{k}, \dots, \frac{1}{k}$& $k/2$ & $1- 2/k$ \\ \hline
$T_k(z, x_1, \ldots, x_k)= \bigwedge_{j=1}^k S_j(z,x_j) $ & $n$ & $1,0,\dots,0$ & $1, 0,\dots, 0$ & $1$ & $0$ \\ \hline
$L_k(x_0, x_1,\ldots, x_k) =  \bigwedge_{j=1}^k S_j(x_{j-1},x_j)$  & $n$ & $0,1,0,1,\dots$
  & $0, \frac{1}{\lceil k/2 \rceil},0,\frac{1}{\lceil k/2 \rceil}, \dots$& $\lceil k/2 \rceil$ & $1 - 1/\lceil k/2 \rceil$\\ \hline
$B_{k,m}(x_1, \ldots, x_k) = \bigwedge_{I \subseteq [k], |I|=m} S_{I}(\bar x_{I})$  & $n^{k-(m-1)\binom{k}{m}}$ 
 & $\frac{1}{m}, \dots, \frac{1}{m}$ & $\frac{1}{k}, \dots, \frac{1}{k}$ & $k/m$ & $1- m/k$\\ \hline
  \end{tabular}
}
  \caption{Running examples in this paper: $C_k=$ cycle query, $L_k=$ linear
    query, $T_k=$ star query, and $B_{k,m}=$ query with $\binom{k}{m}$ relations, where each relation
    contains a distinct set of $m$ out of the $k$ head variables.  Assuming the
    inputs are random permutation, the answer sizes represent exact values
    for $L_k, T_k$, and expected values for $C_k, B_{k,m}$.  
   }
  \label{tab:queries}
\end{table*}

\ifnotentropy{}{
\subsection{Entropy}

\paris{I moved the entropy discussion here.}
 
Let us fix a finite probability space. For random variables $X$
and $Y$, the {\em entropy} and the {\em conditional entropy} are 
defined as follows:
\begin{align}
  H(X) = & - \sum_x P(X=x)  \log P(X=x) \nonumber \\
  H(X \mid Y) = & \sum_y P(Y=y) H(X \mid Y=y)  \label{eq:cond:ent:v}
\end{align}
The entropy satisfies the following basic inequalities, assuming $X$
has a support of size $n$:
\begin{align}
  H(X) &\leq  \log n  \nonumber \\
  H(X \mid Y) &\leq  H(X) \nonumber \\
  H(X, Y) &=  H(X \mid Y) + H(Y) \label{eq:cond:ent}
\end{align}
}

\subsection{Friedgut's Inequality}

Friedgut~\cite{friedgut2004hypergraphs} introduces
the following class of inequalities.  Each inequality is described by
a hypergraph, which in our paper corresponds to a query, so we will
describe the inequality using query terminology.  Fix a query $q$ as
in \eqref{eq:q}, and let $n > 0$.  For every atom $S_j(\bar x_j)$ of
arity $a_j$, we introduce a set of $n^{a_j}$ variables $w_{j}(\ba_j) \geq  0$, 
where $\ba_j \in [n]^{a_j}$. If $\ba \in [n]^a$, we denote by $\ba_j$ the
vector of size $a_j$ that results from projecting on the variables of the
relation $S_j$.
Let $\mathbf{u} = (u_1, \dots, u_{\ell})$ be a 
fractional {\em edge cover} for $q$.  Then:
\begin{align}
  \sum_{\ba \in [n]^k} \prod_{j=1}^{\ell} w_{j}( \ba_j) \leq & \prod_{j=1}^{\ell}
  \left(\sum_{\ba_j \in [n]^{a_j}}  w_{j} (\ba_j)^{1/u_j}\right)^{u_j} \label{eq:friedgut}
\end{align}
We illustrate Friedgut's inequality on $C_3$ and $L_3$:
\begin{align}
C_3(x,y,z) = S_1(x,y),S_2(y,z),S_3(z,x) \nonumber\\
L_3(x,y,z,w) = S_1(x,y),S_2(y,z),S_3(z,w) \label{eq:ccc}
\end{align}
$C_3$ has cover $(1/2,1/2,1/2)$, and $L_3$ has cover $(1,0,1)$.  Thus, we
obtain the following inequalities, where $\alpha,\beta,\gamma$ stand for
$w_1,w_2,w_3$ respectively:
\begin{align*}
  \sum_{x,y,z \in [n]}\kern -1em \alpha_{xy}\cdot \beta_{yz} \cdot \gamma_{zx} \leq &
  \sqrt{\sum_{x,y \in [n]} \alpha_{xy}^2 \sum_{y,z \in [n]} \beta_{yz}^2
    \sum_{z,x \in [n]} \gamma_{zx}^2} \\
 \kern -2em \sum_{\kern +2em x,y,z,w \in [n]}\kern -2.2em  \alpha_{xy}\cdot \beta_{yz} \cdot \gamma_{zw} \leq &
  \sum_{x,y \in [n]} \alpha_{xy}\,\cdot\, \max_{y,z \in [n]}
    \beta_{yz}\,\cdot\,\sum_{z,w \in [n]} \gamma_{zw}
\end{align*}
where we used the fact that $\lim_{u\rightarrow 0} (\sum
\beta_{yz}^{\frac{1}{u}})^u = \max \beta_{yz}$.

Friedgut's inequalities immediately imply a well known result
developed in a series of
papers~\cite{DBLP:conf/soda/GroheM06,DBLP:conf/focs/AtseriasGM08,DBLP:conf/pods/NgoPRR12}
that gives an upper bound on the size of a query answer as a function
on the cardinality of the relations.  For example in the case of
$C_3$, consider an instance $S_1, S_2, S_3$, and set $\alpha_{xy} = 1$ if
$(x,y) \in S_1$, otherwise $\alpha_{xy}=0$ (and similarly for
$\beta_{yz},\gamma_{zx}$).  We obtain then $ |C_3| \leq \sqrt{|S_1| \cdot |S_2|
  \cdot |S_3|}$.  Note that all these results are expressed in terms
of a fractional edge {\em cover}.  When we apply Friedgut's inequality
in Section~\ref{subsec:onestep} to a fractional edge {\em packing}, we 
ensure that the packing is tight.

\section{One Communication Step}
\label{sec:onestep}

Let the {\em space exponent} of a query $q$ be the smallest $\varepsilon
\geq 0$ for which $q$ can be computed using one communication step in
the \mpc($\varepsilon$) model.  In this section, we prove
Theorem~\ref{th:one:round}, which gives both a general lower bound
on the space exponent for evaluating connected conjunctive queries
and a precise characterization of the space exponent for evaluating them
them over matching databases.
The proof consists of two parts: we show the optimal algorithm in
\ref{subsec:algorithms}, and then present the matching lower bound
in \ref{subsec:onestep}. We will assume w.l.o.g. throughout this
section that the queries do not contain any unary relations.   Indeed,
by definition,  the only unary matching relation is the set
$\{1,2,\dots,n\}$, and hence it is trivially known to all servers, so we
can simply remove all unary relations before evaluating the query.

\subsection{An Algorithm for One Round}
\label{subsec:algorithms}

We describe here an algorithm, which we call \textsc{HyperCube} (HC),
 that computes a conjunctive query in one
step.  It uses ideas that can be traced back to
Ganguly~\cite{DBLP:journals/jlp/GangulyST92} for parallel processing
of Datalog programs, and were also used by Afrati and
Ullman~\cite{DBLP:conf/edbt/AfratiU10} to optimize joins in
MapReduce, and by Suri and
Vassilvitskii~\cite{DBLP:conf/www/SuriV11} to count triangles. 

Let $q$ be a query as in \eqref{eq:q}.  Associate to each variable $x_i$
a real value $e_i \geq 0$, called the {\em share exponent} of $x_i$,
such that $\sum_{i=1}^k e_i = 1$.  If $p$ is the number of servers,
define $p_i = p^{e_i}$: these values are called {\em
  shares}~\cite{DBLP:conf/edbt/AfratiU10}.  We assume that the shares
are integers.  Thus, $p = \prod_{i=1}^k p_i$, and each server can be
uniquely identified with a point in the $k$-dimensional hypercube $[p_1]
\times \dots \times [p_k]$.

The algorithm uses $k$ independently chosen random hash functions $h_i
: [n] \rightarrow [p_i]$, one for each variable $x_i$.  During the
communication step, the algorithm sends every tuple $S_j(\ba_j) =
S_j(\alpha_{i_1}, \dots, \alpha_{i_{a_j}})$ to all servers $\mathbf{y} \in [p_1]
\times \dots \times [p_k]$ such that $h_{i_m}(\alpha_{i_m}) =
\mathbf{y}_{i_m}$ for any $1 \leq m \leq a_j$.  In other words, the
tuple $S_j(\ba_j)$ knows the server number along the dimensions $i_1,
\ldots, i_{a_j}$, but does not know the server number along the other
dimensions, and there it needs to be replicated.  After receiving the data, 
each server outputs all
query answers derivable from the received data.  The algorithm finds
all answers, because each potential output tuple $(\alpha_1, \ldots, \alpha_k)$
is known by the server $\mathbf{y} = (h_1(\alpha_1), \dots, h_k(\alpha_k))$.

\begin{example}
  We illustrate how to compute $C_3(x_1,x_2,x_3) = S_1(x_1,
  x_2),S_2(x_2, x_3),S_3(x_3, x_1)$.
  Consider the share exponents $e_1=e_2=e_3=1/3$. Each of the $p$ servers is
  uniquely identified by a triple $(y_1,y_2,y_3)$, where $y_1, y_2,
  y_3\in [p^{1/3}]$.  In the first communication round, the input
  server storing $S_1$ sends each tuple $S_1(\alpha_1,\alpha_2)$ to all servers
  with index $(h_1(\alpha_1), h_2(\alpha_2), y_3)$, for all $y_3 \in [p^{1/3}]$:
  notice that each tuple is replicated $p^{1/3}$ times.  The input
  servers holding $S_2$ and $S_3$ proceed similarly with their tuples.
  After round 1, any three tuples $S_1(\alpha_1, \alpha_2)$,
  $S_2(\alpha_2, \alpha_3)$, $S_3(\alpha_3, \alpha_1)$ that contribute to the output tuple
  $C_3(\alpha_1,\alpha_2,\alpha_3)$ will be seen by the server $\mathbf{y} = (h_1(\alpha_1),
  h_2(\alpha_2), h_3(\alpha_3))$: any server that detects three matching tuples
  outputs them.
\end{example}

\begin{proposition} \label{th:onestep:upper} Fix a fractional vertex
  cover $\mathbf{v} = (v_1, \ldots, v_k)$ for a connected conjunctive
  query $q$, and let $\tau = \sum_i v_i$.  The  HC algorithm with share
  exponents $e_i=v_i/\tau$ computes $q$ on any matching database 
  in one  round in $\mpc(\varepsilon)$, where $\varepsilon = 1-1/\tau$,
  with probability of failure $\eta \leq \text{exp}(-O(n/p^\varepsilon))$. 
\end{proposition}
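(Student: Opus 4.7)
The plan is to prove this in three parts: correctness, an expected load bound, and a concentration argument.

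First, I would dispense with correctness quickly. Any output tuple $(\alpha_1,\ldots,\alpha_k)$ is witnessed by some $S_j(\bar a_j)$ for each atom, and by construction the server $\mathbf{y} = (h_1(\alpha_1),\ldots,h_k(\alpha_k))$ receives every such $S_j(\bar a_j)$, since the coordinates matching $\mathrm{vars}(S_j)$ agree with $\mathbf{y}$ on precisely those dimensions. Hence every answer is produced by some server, which then enumerates it locally.

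Next I would bound the expected load. A tuple $S_j(\bar a_j)$ is replicated to exactly $\prod_{i\notin\mathrm{vars}(S_j)} p_i = p^{\,1 - \sum_{i: x_i \in \mathrm{vars}(S_j)} e_i}$ servers. Because $\mathbf{v}$ is a vertex cover, $\sum_{i: x_i\in\mathrm{vars}(S_j)} v_i \ge 1$, so $\sum_{i: x_i\in\mathrm{vars}(S_j)} e_i \ge 1/\tau$ and the replication of every tuple is at most $p^{1-1/\tau} = p^{\varepsilon}$. For a fixed server $\mathbf{y}$ and a fixed atom $S_j$, the probability that a particular tuple $\bar a_j$ is routed to $\mathbf{y}$ equals $p^{-\sum_{i: x_i\in\mathrm{vars}(S_j)} e_i} \le p^{-1/\tau}$. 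Since $|S_j|=n$ in a matching database, the expected number of tuples of $S_j$ received by $\mathbf{y}$ is at most $n/p^{1/\tau} = n/p^{1-\varepsilon}$, and summing over the $\ell = O(1)$ atoms gives expected load $O(n/p^{1-\varepsilon})$ tuples, i.e., $O(N/p^{1-\varepsilon})$ bits.

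For concentration, fix a server $\mathbf{y}$ and an atom $S_j$, and let $X_{\bar a_j} \in \{0,1\}$ indicate that $\bar a_j$ is routed to $\mathbf{y}$. Because $S_j$ is a matching, for any two distinct tuples $\bar a_j, \bar a'_j \in S_j$ the values $\alpha_{i_m}$ and $\alpha'_{i_m}$ differ in every column $m$; under a fully random hash $h_{i_m}$ these distinct arguments yield independent uniform values in $[p_{i_m}]$, and therefore the indicators $X_{\bar a_j}$ for distinct tuples in $S_j$ are mutually independent. A multiplicative Chernoff bound then gives
\[
\Pr\!\left[\,\textstyle\sum_{\bar a_j} X_{\bar a_j} > 2\,n/p^{1-\varepsilon}\,\right] \le \exp\!\left(-\Omega(n/p^{1-\varepsilon})\right),
\]
and a union bound over the $p$ servers and $\ell = O(1)$ atoms absorbs the $\log p$ factor (which is dominated by $n/p^{1-\varepsilon}$ in the regime where the model is nontrivial), yielding the desired overall failure probability.

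The part that deserves the most care, and is the main obstacle, is the independence argument in the last step: it relies crucially on the matching assumption (distinct values in every column of every $S_j$) and on treating the $h_i$ as truly random hash functions. Without the matching structure, two tuples of $S_j$ could share a column value, their routing decisions would be correlated, and a plain Chernoff bound would fail; one would need either $k$-wise independent hashes together with a Janson- or Kim–Vu-style argument, or an explicit rebalancing to defeat skew. Under the matching hypothesis, however, independence holds cleanly and the proof reduces to routine Chernoff plus a union bound.
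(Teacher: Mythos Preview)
Your proof is correct and follows essentially the same approach as the paper: use the vertex-cover inequality to bound replication/expected load per atom, exploit the matching structure so that distinct tuples of $S_j$ are hashed independently, and finish with a Chernoff bound. The paper's write-up is terser (it omits the explicit union bound over servers and atoms that you spell out), but the substance is identical.
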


This proves the optimality claim of \autoref{th:one:round}:
choose a vertex cover with value $\tau^*(q)$, the fractional covering
number of $q$.  Proposition~\ref{th:onestep:upper} shows that $q$ can
be computed in one round in $\mpc(\varepsilon)$, with
$\varepsilon = 1 -1/\tau^*$.

\begin{proof}
  Since $\mathbf{v}$ forms a fractional vertex cover, for every
  relation symbol $S_j$ we have $\sum_{i: x_i \in \text{vars}(S_j)}
  e_i \geq 1/\tau$. Therefore, $\sum_{i: x_i \not\in
    \text{vars}(S_j)} e_i \leq 1 - 1/\tau$.  Every tuple
  $S_j(\ba_j)$ is replicated $\prod_{i: x_i \not\in \text{vars}(S_j)}
  p_i \leq p^{1-1/\tau}$ times.  Thus, the total number of tuples
   that are received by all servers is $O(n \cdot
  p^{1-1/\tau})$.  We claim that these tuples are uniformly
  distributed among the $p$ servers: this proves the theorem, since
  then each server receives $O(n/p^{1/\tau})$ tuples.

  To prove the claim, we note that for each tuple $t \in S_j$, the
  probability over the random choices of the hash functions $h_1,
  \ldots, h_k$ that the tuple is sent to server $s$ is precisely
  $\prod_{i:x_i
    \in \text{vars}(S_j)} p_i^{-1}$. Thus, the expected number of
  tuples from $S_j$ sent to $s$ is $n/\prod_{i:x_i \in S_j} p_i \leq
  n/p^{1-\varepsilon}$.  Since $S_j$ is an $a_j$-matching, different
  tuples are sent by the random hash functions to independent
  destinations, since any two tuples differ in every attribute.  Using
  standard Chernoff bounds, we derive that the probability that the
  actual number of tuples per server deviates more than a constant
  factor from the expected number is $\eta \leq
  \text{exp}(-O(n/p^{1-\varepsilon}))$.
\end{proof}

\subsection{A Lower Bound for One Round}
\label{subsec:onestep}

For a fixed $n$, consider a probability distribution where the input
$I$ is chosen randomly, with uniform probability from all matching
database instances.  Let $\mathbf{E}[|q(I)|]$ denote the expected
number of answers to the query $q$.  We prove in this
section\footnote{Recall that we have assumed that $q$ has no unary
  relations.  Otherwise, the theorem fails, as illustrated by the
  query $q = S_1(x), S_2(x,y),S_3(y)$, which has $\tau^* = 2$, yet can
  be computed with space exponent $\varepsilon = 0$, on arbitrary
  databases (not only matching databases). Indeed, notice that both
  unary relations $S_1, S_3$ require $n$ bits to be represented (as bit
  vectors), whereas $S_2$ requires $\Omega(n \log n)$ bits. Hence,
  if $p \leq \log n$, $S_1, S_2$ can be broadcast to every server.}:

\begin{theorem} \label{th:onestep} Let $q$ be a connected conjunctive query,
 let $\tau^*$ be
  the fractional covering number of $q$, and $\varepsilon < 1 -
  1/\tau^*$.  Then, any deterministic \mpc($\varepsilon$) algorithm
  that runs in one communication round on $p$ servers reports
  $O(\mathbf{E}[|q(I)|]/p^{\tau^*(1-\varepsilon)-1})$ answers in
  expectation.
\end{theorem}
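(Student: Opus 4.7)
The plan is to prove a per-server bound on $\mathbf{E}[|A_v \cap q(I)|]$ by combining Friedgut's inequality with an information-theoretic analysis of the received messages, and then sum over the $p$ servers. By Yao's Lemma it suffices to consider a deterministic $\mpc(\varepsilon)$ algorithm on a uniformly random matching database; assume each server correctly reports $A_v \subseteq q(I)$, so the quantity to bound is $\sum_v \mathbf{E}[|A_v|]$.

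For a fixed server $v$, exploit the independence of the relations $S_1,\ldots,S_\ell$ under the matching distribution and the fact that each message $M_{jv}$ depends only on $S_j$ to write
\[
\mathbf{E}[|A_v|] \;=\; \sum_{\vec m}\,\sum_{\ba \in A_v(\vec m)} \prod_{j} s_j(\ba_j, m_j),
\]
where $s_j(\ba_j, m_j) := \Pr[\ba_j \in S_j,\, M_{jv} = m_j]$. After relaxing $\ba\in A_v(\vec m)$ to $\ba\in [n]^k$, Friedgut's inequality \eqref{eq:friedgut} with a tight fractional edge packing $(u_1,\ldots,u_\ell)$ of $q$ satisfying $\sum_j u_j = \tau^*(q)$ and $\sum_{j:\,x_i\in S_j} u_j = 1$ for every variable $x_i$ gives a product-over-$j$ upper bound on the inner sum (we may assume such a tight packing exists by, if necessary, augmenting the hypergraph with trivial domain atoms, which are free on matching databases since every attribute is a key).

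Using the matching identity $\sum_{\ba_j} s_j(\ba_j, m_j) = n\Pr[M_{jv}=m_j]$ together with the uniform bound $s_j(\ba_j,m_j)\leq 1/n^{a_j-1}$ and $1/u_j\geq 1$, each Friedgut factor decomposes into a product of a power of $n$ and a power of $\Pr[M_{jv}=m_j]$. Taking the product over $j$ and applying the tightness identity $\sum_j a_j u_j = k$, the $n$-powers telescope to exactly $n^{k-\sum_j a_j + \ell} = n^{\chi(q)+1} = \mathbf{E}[|q(I)|]$, leaving a Rényi-type quantity $\sum_{\vec m}\prod_j \Pr[M_{jv}=m_j]^{u_j}$ to control.

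The main obstacle is this last step: converting the per-server bit budget $\sum_j |M_{jv}| \leq O(N/p^{1-\varepsilon})$ into a bound $O(1/p^{\tau^*(1-\varepsilon)})$ on the Rényi factor (per server, on average), so that summing over the $p$ servers yields the target $\mathbf{E}[|q(I)|]/p^{\tau^*(1-\varepsilon)-1}$. Here the concavity of $x^{u_j}$ for $u_j\le 1$ together with Jensen/Hölder and the packing identity $\sum_j u_j = \tau^*$ give the desired exponent; for queries whose tight optimum has some $u_j = 0$ (e.g.\ $L_3$), the Friedgut factor for that atom degenerates to a maximum-probability term, and one must aggregate across servers rather than bound each server individually. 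This bit-to-R\'enyi conversion, carried out under the symmetry of the matching distribution, is the technical heart of the proof and is what distinguishes the fully general bit-communication model of this section from the tuple-based lower bounds of prior work.
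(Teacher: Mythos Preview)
Your proposal has a genuine gap at the relaxation step, and the subsequent ``R\'enyi factor'' cannot be controlled in the direction you need.

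Concretely: once you drop the constraint $\ba\in A_v(\vec m)$ and sum over all $\ba\in[n]^k$, the inner sum becomes exactly $\Pr[\ba\in q(I),\,M=\vec m]$ summed over $\ba$, and summing over $\vec m$ gives precisely $\E[|q(I)|]$. Applying Friedgut to the inner sum for each fixed $\vec m$ only produces an \emph{upper} bound on this quantity, so your chain of inequalities yields $\E[|A_v|]\le \E[|q(I)|]\le \E[|q(I)|]\cdot\big(\text{R\'enyi factor}\big)$, which is never better than the trivial bound. Indeed, by independence,
\[
\sum_{\vec m}\prod_j \Pr[M_{jv}=m_j]^{u_j}=\prod_j\sum_{m_j} p_{m_j}^{u_j}\ \ge\ 1
\]
for $u_j\in[0,1]$ (equality when each $M_{jv}$ is constant), and it \emph{grows} with the entropy of the messages. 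So a short bit budget drives this factor toward $1$, not toward $1/p^{\tau^*(1-\varepsilon)}$; the ``bit-to-R\'enyi conversion'' you describe goes the wrong way.

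What you are missing is the correctness constraint itself: since $A_v$ must be contained in $q(I)$ for \emph{every} $I$ consistent with the received message $\vec m$, one has $A_v(\vec m)\subseteq K_{\vec m}(q)$, the set of output tuples \emph{known} from $\vec m$. The paper exploits this by working with $w_j(\ba_j)=\Pr[\ba_j\in K_{m_j(S_j)}(S_j)]$ rather than your $s_j(\ba_j,m_j)=\Pr[\ba_j\in S_j,\,M_{jv}=m_j]$. The crucial entropy lemma then shows $\sum_{\ba_j} w_j(\ba_j)=\E[|K_m(S_j)|]\le f_j\, n$ where $f_j$ is the fraction of bits allotted to $S_j$; it is this $f_j$ (not a R\'enyi term) that carries the $1/p^{1-\varepsilon}$ saving into Friedgut's inequality and, after Jensen on $\prod_j f_j^{u_j}$, yields the $p^{-\tau^*(1-\varepsilon)}$ factor. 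Your identity $\sum_{\ba_j} s_j(\ba_j,m_j)=n\,p_{m_j}$ has no such saving, because every matching $S_j$ has exactly $n$ tuples regardless of the message.
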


In particular, the theorem implies that the space exponent of $q$ is
at least $1-1/\tau^*$. Before we prove the theorem, we show how to
extend it to randomized algorithms using Yao's principle.  For this,
we show a lemma that we also need later.

\begin{lemma} 
\label{lem:expected_size} 
The expected number of answers
  to connected query $q$ is $\mathbf{E}[|q(I)|] = n^{1+\chi(q)}$, where the
  expectation is over a uniformly chosen matching database I.
\end{lemma}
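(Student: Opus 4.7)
The plan is to apply linearity of expectation over all $n^k$ candidate tuples $\ba = (\alpha_1,\ldots,\alpha_k) \in [n]^k$:
\[
\mathbf{E}[|q(I)|] \;=\; \sum_{\ba \in [n]^k} \Pr\bigl[\,\forall j \in [\ell]:\ \ba_j \in S_j\,\bigr],
\]
where $\ba_j$ denotes the projection of $\ba$ onto the variables of atom $S_j$. Since the matchings $S_1,\ldots,S_\ell$ are chosen independently, the probability factors as $\prod_{j=1}^{\ell} \Pr[\ba_j \in S_j]$.

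The one real computation is a small counting lemma: for a uniformly random $a$-dimensional matching $S$ on domain $[n]$, and any fixed tuple $\mathbf{b} = (b_1,\ldots,b_a) \in [n]^a$,
\[
\Pr[\mathbf{b} \in S] \;=\; \frac{1}{n^{a-1}}.
\]
I would establish this by representing each matching canonically by sorting its $n$ tuples in ascending order of the first coordinate; then the first column is fixed to $1,2,\ldots,n$ and each of the remaining $a-1$ columns is an arbitrary permutation of $[n]$, giving $(n!)^{a-1}$ matchings in total. To have $\mathbf{b}\in S$, the tuple in row $b_1$ must equal $(b_1,b_2,\ldots,b_a)$; the remaining rows extend to a matching in $((n-1)!)^{a-1}$ ways. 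The ratio is $1/n^{a-1}$, and the argument is insensitive to whether the coordinates $b_1,\ldots,b_a$ have repeats.

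Combining these facts,
\[
\mathbf{E}[|q(I)|] \;=\; n^{k} \cdot \prod_{j=1}^{\ell} \frac{1}{n^{a_j - 1}} \;=\; n^{\,k + \ell - \sum_j a_j}.
\]
Since $q$ is connected, its number of connected components is $c=1$, so $\chi(q) = k + \ell - \sum_j a_j - 1$, and therefore $k + \ell - \sum_j a_j = 1 + \chi(q)$, yielding the claimed identity.

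There is no serious obstacle here: the proof is a textbook linearity-of-expectation plus independence argument, with the only subtle point being to confirm that the marginal probability $1/n^{a_j-1}$ holds uniformly over all $a_j$-tuples, including those with repeated entries. The counting above settles this directly.
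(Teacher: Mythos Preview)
Your proof is correct and follows essentially the same route as the paper's: linearity of expectation over all $\ba\in[n]^k$, independence of the $S_j$'s, the marginal $\Pr[\ba_j\in S_j]=n^{1-a_j}$, and then the arithmetic $n^{k+\ell-a}=n^{1+\chi(q)}$ using connectedness. The only difference is that you supply an explicit counting justification for the marginal probability (which the paper simply asserts), and your observation that it holds even for tuples with repeated entries is correct.
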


\begin{proof}
  For any relation $S_j$, and any tuple $\mathbf{a}_j \in [n]^{a_j}$, 
  the probability that $S_j$ contains $\mathbf{a}_j$
  is $\P(\mathbf{a_j} \in S_j) = n^{1-a_j}$.  Given a tuple $\mathbf{a}
  \in [n]^k$ of the same arity as the query answer, let
  $\mathbf{a}_j$ denote its projection on the variables in $S_j$.  Then:
  \begin{align*}
    \mathbf{E}[|q(I)|]  
    & = \sum_{\ba \in [n]^k}  \P(\bigwedge_{j=1}^{\ell} (\mathbf{a}_j \in S_j)) \\
    & = \sum_{\ba \in [n]^k} \prod_{j=1}^{\ell} \P(\mathbf{a}_j \in S_j)  \\
    & = \sum_{\ba \in [n]^k} \prod_{j=1}^{\ell} n^{1-a_j} \\
    & = n^{k+\ell-a}
\end{align*}
Since query $q$ is connected, $k+\ell-a = 1+\chi(q)$ and hence
$\mathbf{E}[|q(I)|] = n^{1+\chi(q)}$.
\end{proof}

\autoref{th:onestep} and \autoref{lem:expected_size}, together with
Yao's lemma, imply the following lower bound for
randomized algorithms. 

\begin{corollary} \label{cor:yao}
Let $q$ be any connected conjunctive query.
Any one round randomized \mpc($\varepsilon$) algorithm 
with $p=\omega(1)$ and $\varepsilon < 1 - 1/\tau^*(q)$ 
fails to compute $q$ with probability
$\eta = \Omega(n^{\chi(q)}) = n^{-O(1)}$.
\end{corollary}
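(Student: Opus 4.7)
The plan is to combine \autoref{th:onestep} and \autoref{lem:expected_size} through Yao's principle, where the main subtlety is converting a bound on the expected number of reported answers into a lower bound on the failure probability.

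First I would fix the uniform distribution $\mu$ over matching databases on domain $[n]$ and consider an arbitrary deterministic one-round \mpc($\varepsilon$) algorithm $A$. Let $X(I)$ denote the number of correct answers reported by $A$ on input $I$, and $Y(I)=|q(I)|$ the number of true answers. By \autoref{th:onestep},
\[
\mathbf{E}_\mu[X] \;=\; O\!\left(\mathbf{E}_\mu[Y]/p^{\tau^*(1-\varepsilon)-1}\right),
\]
and by \autoref{lem:expected_size}, $\mathbf{E}_\mu[Y] = n^{1+\chi(q)}$. The hypothesis $\varepsilon < 1-1/\tau^*$ gives $\tau^*(1-\varepsilon)-1 > 0$, so together with $p=\omega(1)$ the denominator $p^{\tau^*(1-\varepsilon)-1}$ tends to infinity and $\mathbf{E}_\mu[X] = o(\mathbf{E}_\mu[Y])$. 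Thus
\[
\mathbf{E}_\mu[Y-X] \;=\; n^{1+\chi(q)} - o(n^{1+\chi(q)}) \;=\; \Omega(n^{1+\chi(q)}).
\]

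The next step is to convert this gap into a failure probability. The key observation I would use here is that on a matching database every attribute of the query answer is a key, so deterministically $Y(I) \le n$, and therefore $Y(I) - X(I) \le n$ on every input. Since $A$ fails on $I$ precisely when $X(I) < Y(I)$, I bound
\[
\mathbf{E}_\mu[Y-X] \;=\; \mathbf{E}_\mu\big[(Y-X)\,\mathbf{1}_{\text{fail}}\big] \;\le\; n \cdot \Pr_\mu[A \text{ fails}],
\]
which yields $\Pr_\mu[A \text{ fails}] \ge \mathbf{E}_\mu[Y-X]/n = \Omega(n^{\chi(q)})$. This lower bound holds uniformly over deterministic algorithms, so Yao's lemma immediately promotes it to the desired bound $\eta = \Omega(n^{\chi(q)})$ for randomized algorithms over worst-case inputs. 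Since $\chi(q)$ is a fixed integer depending only on $q$, we have $n^{\chi(q)} = n^{-O(1)}$ as claimed.

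The main obstacle is the Markov-type step: a naive application of Markov's inequality to $Y-X$ gives only an \emph{upper} bound on the failure probability, the wrong direction. The trick is instead to use the deterministic upper bound $Y \le n$ coming from the matching structure (which is specific to the ``no skew'' input model) together with the fact that $Y-X$ is supported on the failure event, to get a lower bound. Everything else is a direct chaining of the two results already established in this section.
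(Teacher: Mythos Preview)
Your proposal is correct and is essentially identical to the paper's proof: both use \autoref{th:onestep} and \autoref{lem:expected_size} to get $\E[|q(I)|-|A(I)|]=\Omega(n^{1+\chi(q)})$, then invoke the deterministic bound $|q(I)|\le n$ on matching databases to convert this into $\Pr[\text{fail}]\ge \Omega(n^{\chi(q)})$, and finally apply Yao's lemma. The only difference is notation ($X,Y$ versus $|A(I)|,|q(I)|$) and your explicit remark that $(Y-X)$ is supported on the failure event, which the paper phrases as $\E[|q(I)-A(I)|]\le \P[|q(I)-A(I)|>0]\cdot \max_I |q(I)-A(I)|$.
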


\begin{proof}
Choose a matching database $I$ input to $q$ uniformly at random.
Let $A(I)$ denote the set of correct answers returned by the
algorithm on $I$: $A(I)\subseteq q(I)$.
Observe that the algorithm fails on $I$ iff $|q(I)-A(I)|>0$.

Let $\gamma= 1/p^{\tau^*(q)(1-\varepsilon)-1}$.
Since $p = \omega(1)$ and
$\varepsilon < 1 - 1/\tau^*(q)$, it follows that $\gamma = o(1)$.
By \autoref{th:onestep}, for any deterministic 
one round \mpc($\varepsilon$) algorithm 
we have $\E[|A(I)|]=O(\gamma) \E[|q(I)|]$ and hence, by
\autoref{lem:expected_size},
$$\E[|q(I)-A(I)|]=(1-o(1)) \E[|q(I)|]=(1-o(1))n^{1+\chi(q)}$$
However, we also have that
$$\E[|q(I)-A(I)|]\le \P[|q(I)-A(I)|>0]\cdot \max_I |q(I)-A(I)|.$$

Since $|q(I) - A(I)|\le |q(I)| \leq n$ for all $I$, we see that the
failure probability of the algorithm for randomly chosen $I$,
$\P[|q(I)-A(I)|>0]$, is at least $\eta=(1-o(1))n^{\chi(q)}$ which is
$n^{-O(1)}$ for any $q$. 
Yao's lemma implies that every one round randomized \mpc($\varepsilon$)
algorithm will fail to compute
$q$ with probability at least $\eta$ on some matching database input.
\end{proof}

In the rest of the section we prove \autoref{th:onestep}, which deals
with one-round deterministic algorithms and random matching 
databases $I$. Let us fix some server and let $m(I)$ denote the function 
specifying the message the server receives on input $I$.
Intuitively, this server can only report those tuples that it knows
are in the input based on the value of $m(I)$. To make this notion precise, 
for any fixed value $m$ of $m(I)$, define the set of tuples of a relation $R$
of arity $a$ {\em known} by the server given message $m$ as
\begin{align*}
K_m(R)=\{t\in [n]^a\mid \mbox{ for all matching databases }I,
m(I)=m\Rightarrow t\in R(I) \}
\end{align*}

We will particularly apply this definition with $R=S_j$ and $R=q$.
Clearly, an output tuple $\ba \in K_m(q)$ iff for every $j$,
$\ba_j \in K_m(S_j)$, where $\ba_j$ denotes the projection of $\ba$ on
the variables in the atom $S_j$.

We will first prove an upper bound for each $|K_m(S_j)|$
in Section~\ref{subsec:relation_bound}.
Then in Section~\ref{subsec:onestepB} we use this bound,
along with Friedgut's inequality, to establish an upper bound for
$|K_m(q)|$ and hence prove \autoref{th:onestep}.

\subsubsection{Bounding the Knowledge of Each Relation}
\label{subsec:relation_bound}

Let us fix a server, and an input relation $S_j$. Observe that, for a randomly
chosen matching database $I$, $S_j$ is a uniformly chosen $a_j$-dimensional
matching. There are precisely $(n!)^{a_j-1}$ different $a_j$-dimensional matchings
of arity $a_j$ and thus the number of bits $N$ necessary to represent the relation
is $(a_j-1)\log(n!)$. 

Let $m(S_j)$ be the part of the message $m$ received from the server that corresponds to 
$S_j$. The following lemma provides a bound on the expected knowledge $K_{m(S_j)}(S_j)$ 
the server may obtain from $S_j$: 

\begin{lemma} \label{lem:entropy_ratio} Suppose that for all
  $(n!)^{a_j-1}$ matchings $S_j$ of arity $a_j>1$, the message $m(S_j)$
  is at most $f_j \cdot (a_j-1)\log(n!)$ bits long.  Then
  $\E[|K_{m(S_j)}(S_j)|] \le f_j \cdot n$, where the expectation is
  taken over random choices of the matching $S_j$.
\end{lemma}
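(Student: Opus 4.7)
The plan is to combine a counting bound on matchings consistent with each possible message with an averaging inequality for $\log$. Fix $j$ and abbreviate $K_m := K_{m(S_j)}(S_j)$. Let $N_m$ denote the number of $a_j$-dimensional matchings $T$ for which $m(T)=m$. By the definition of $K_m$, every such $T$ must contain every tuple of $K_m$; once those $|K_m|$ tuples are fixed, the remaining $n-|K_m|$ rows form an $a_j$-dimensional matching on a domain of size $n-|K_m|$, so they can be described by $a_j-1$ independent permutations. Hence
$$N_m \le \bigl((n-|K_m|)!\bigr)^{a_j-1}.$$

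I would next invoke the elementary fact that, for any $0\le k\le n$, the sum $\log(n-k+1)+\cdots+\log n$ of the $k$ largest elements of $\{\log 1,\ldots,\log n\}$ is at least $(k/n)\log(n!)$, since the top-$k$ sum of nonnegative reals is at least $k$ times their overall mean. Rearranging gives $\log((n-|K_m|)!) \le (1-|K_m|/n)\log(n!)$. Substituting into the bound on $N_m$ and writing $P_m := N_m/(n!)^{a_j-1}$ for the probability of receiving message $m$ under a uniformly random matching $S_j$, the bound rearranges to
$$|K_m| \le \frac{n}{(a_j-1)\log(n!)}\cdot(-\log P_m).$$

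To finish, I would take expectations over $S_j$ and use the standard inequality $\sum_m P_m(-\log P_m) \le \log|\mathrm{supp}(m(S_j))|$. By hypothesis each message has length at most $f_j(a_j-1)\log(n!)$ bits, so its support has size at most $2^{f_j(a_j-1)\log(n!)}$ and thus $\sum_m P_m(-\log P_m) \le f_j(a_j-1)\log(n!)$. Substituting yields $\E[|K_{m(S_j)}(S_j)|] \le f_j\cdot n$, as required.

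The main technical step is the averaging bound $\log((n-k)!) \le (1-k/n)\log(n!)$: this is the quantitative bridge from ``message length'' (bits) to ``number of tuples the server can identify with certainty.'' The other ingredients---the matching-count upper bound and the support-size bound on $\sum_m P_m\log(1/P_m)$---are routine. A secondary subtlety is that the hypothesis bounds the message length pointwise (for every $S_j$), not in expectation, but this only strengthens the conclusion, since it bounds $|\mathrm{supp}(m(S_j))|$ deterministically.
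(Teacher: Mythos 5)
Your proof is correct and follows essentially the same route as the paper's: the key step in both is the counting bound $\log N_m \le (1-|K_m|/n)(a_j-1)\log(n!)$ on matchings consistent with a given message, combined with the fact that the average of $-\log P_m$ is the message entropy, bounded by the message length. The paper packages this computation as the chain rule $H(S_j)=H(m(S_j))+H(S_j\mid m(S_j))$ and rearranges; you unfold the same identity by hand via $-\log P_m = (a_j-1)\log(n!)-\log N_m$ and then bound $\sum_m P_m(-\log P_m)=H(m(S_j))$ by the support size --- a purely cosmetic difference.
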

  
In other words, if the message $m(S_j)$ contains only a fraction $f_j$
of the bits needed to encode $S_j$, then a server receiving this
message knows only a fraction $f_j$ of the $n$ tuples in $S_j$.  We
will apply the lemma separately to each input $S_j$, for
$j=1,\dots,\ell$ in the next section, for appropriate choices of
$f_j$.

  \begin{proof} 
  Let $m$ be a possible value for $m(S_j)$.
  Since $m$ fixes precisely $|K_m(S_j)|$ tuples of $S_j$,
  \begin{align}
  \log|\setof{S_j}{m(S_j)=m}| & \le (a_j-1) \sum_{i=1}^{n-|K_m(S_j)|} \log i \nonumber\\
  & \le (1-|K_m(S_j)|/n)(a_j-1) \sum_{i=1}^n \log i  \nonumber \\
  &= (1-|K_m(S_j)|/n) \log (n!)^{a_j-1}. \label{eq:ratiobound}
  \end{align}
  We can bound the value we want by considering the binary entropy of the
  distribution $S_j$. By applying the chain rule for entropy, we have
  \begin{align}
  H(S_j) &=H(m(S_j))+ \sum_{m}\P(m(S_j)=m)\cdot H(S_j|m(S_j)=m)\nonumber\\
  &\le f_j \cdot  H(S_j) +  \sum_{m}\P(m(S_j)=m)\cdot H(S_j|m(S_j)=m)\nonumber\\
  &\le f_j \cdot  H(S_j)+
      \sum_{m}\P(m(S_j)=m)\cdot (1-\frac{|K_m(S_j)|}{n}) H(S_j)\nonumber\\
  &= f_j \cdot  H(S_j)+ 
      (1-\sum_{m}\P(m(S_j)=m)\ \frac{|K_m(S_j)|}{n}) H(S_j)\nonumber\\
  &= f_j \cdot H(S_j)+  (1-\frac{\E [|K_{m(S_j)}(S_j)|]}{n}) H(S_j) \label{eq:entropy}
  \end{align}
  where the first inequality follows from the assumed upper bound on $|m(S_j)|$,
  the second inequality follows by \eqref{eq:ratiobound}, and the last two
  lines follow by definition.
  Dividing both sides of \eqref{eq:entropy} by $H(S_j)$ since $H(S_j)$ is not
  zero and rearranging we obtain the required statement.
  \end{proof}

\subsubsection{Bounding the Knowledge of the Query}
\label{subsec:onestepB}

Let $c$ be a constant such that each server receives at most
$c N/p^{1-\varepsilon}$ bits. Let us also fix some server. The message
$m=m(I)$ received by the server is the concatenation of $\ell$ messages, 
one for each input relation. $K_m(S_j)$ depends only on $m(S_j)$, so we 
can assume w.l.o.g. that $K_m(S_j) = K_{m(S_j)}(S_j)$.

In order to represent the total input $I$, we need 
$\sum_{j=1}^{\ell}(a_j-1) \log (n!) = (a- \ell) \log(n!)$ bits. Hence, the message $m(I)$ 
will contain at most $c(a-\ell) \log(n!) / p^{1-\varepsilon}$ bits. Now, for
each relation $S_j$, let us define 
$$f_j =  \frac{\max_{S_j} |m(S_j)|}{(a_j-1) \log(n!)} .$$
Note that this is well-defined, since $a_j > 1$.
Thus, $f_j$ is the largest fraction of bits of $S_j$ that the server
receives, over all choices of the matching $S_j$.  We now derive an
upper bound on the $f_j$'s.  As we discussed before, each part
$m(S_j)$ of the message is constructed independently of the other
relations. Hence, it must be that
$$\sum_{j=1}^{\ell} \max_{S_j} |m(S_j)| \leq c(a-\ell) \log(n!) / p^{1-\varepsilon}$$
By substituting the definition of $f_j$ in this equation, we obtain that 
 $\sum_{j=1}^{\ell} f_j (a_j-1) \leq c (a-\ell)/p^{1-\varepsilon}$.
\autoref{lem:entropy_ratio}  further implies that, for a randomly chosen matching database $I$,
 $\E[|K_{m(I)}(S_j)|]=\E[|K_{m(S_j)}(S_j)|] \leq f_j \cdot n$ for
  all $j\in [\ell]$. We prove:

  \begin{lemma} \label{prop:friedgut} $\mathbf{E}[|K_{m(I)}(q)|] \leq
    g_{q,c} \cdot \mathbf{E}[|q(I)|] / p^{(1-\varepsilon)\tau^*}$ for
    randomly chosen matching database $I$, where $g_{q,c} =
    (c(a-\ell)/\tau^*)^{\tau^*}$ is a constant that depends only on
    the constant $c$ and the query $q$.
\end{lemma}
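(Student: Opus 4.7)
My plan is to express the expected knowledge as a product-sum, apply Friedgut's inequality \eqref{eq:friedgut} with a tight optimal fractional edge packing, and close with weighted AM--GM.

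\textbf{Step 1.} Because $m(S_j)$ depends only on $S_j$ and the matchings $S_j$ are independent, the events $\{\ba_j \in K_{m(S_j)}(S_j)\}$ are mutually independent across $j$. Hence
$$\E[|K_{m(I)}(q)|]=\sum_{\ba\in[n]^k}\prod_{j=1}^\ell p_j(\ba_j),\quad p_j(\ba_j):=\P[\ba_j\in K_{m(S_j)}(S_j)].$$
The matching structure immediately yields the pointwise bound $p_j(\ba_j)\leq \P[\ba_j\in S_j]=1/n^{a_j-1}$, while $\sum_{\ba_j}p_j(\ba_j)=\E[|K_{m(S_j)}(S_j)|]\leq f_j n$ by \autoref{lem:entropy_ratio}.

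\textbf{Step 2.} I fix a tight optimal fractional edge packing $\mathbf{u}^*$ with $\sum_j u_j^*=\tau^*$. By tightness, $\mathbf{u}^*$ is simultaneously a fractional edge cover of $q$, so Friedgut's inequality applies. Since $q$ has no unary relations, $u_j^*\le 1$ for all $j$, and combining this with the pointwise bound $p_j\leq 1/n^{a_j-1}$ I estimate, for each $j$,
$$\sum_{\ba_j}p_j(\ba_j)^{1/u_j^*}\leq \bigl(n^{-(a_j-1)}\bigr)^{1/u_j^*-1}\sum_{\ba_j}p_j(\ba_j)\leq n^{(a_j-1)(1-1/u_j^*)}f_j n.$$
Raising to the $u_j^*$-th power and multiplying over $j$, the total $n$-exponent becomes $\sum_j\bigl(a_j u_j^*-(a_j-1)\bigr)=\sum_j a_j u_j^* - a + \ell$. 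Tightness gives $\sum_j a_j u_j^*=\sum_i\sum_{j:x_i\in\mathrm{vars}(S_j)}u_j^*=k$, so the exponent collapses to $k-a+\ell=1+\chi(q)$ (since $q$ is connected). Combining with \autoref{lem:expected_size},
$$\E[|K_{m(I)}(q)|]\leq n^{1+\chi(q)}\prod_j f_j^{u_j^*}=\E[|q(I)|]\cdot\prod_j f_j^{u_j^*}.$$

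\textbf{Step 3.} With the weights $u_j^*/\tau^*$ summing to $1$, weighted AM--GM combined with $u_j^*\leq 1\leq a_j-1$ yields
$$\prod_j f_j^{u_j^*}=\Bigl(\prod_j f_j^{u_j^*/\tau^*}\Bigr)^{\tau^*}\leq \Bigl(\tfrac{1}{\tau^*}\sum_j u_j^* f_j\Bigr)^{\tau^*}\leq \Bigl(\tfrac{1}{\tau^*}\sum_j (a_j-1)f_j\Bigr)^{\tau^*}\leq (C/\tau^*)^{\tau^*},$$
where $C=c(a-\ell)/p^{1-\varepsilon}$. Substituting the value of $C$ produces exactly $g_{q,c}/p^{(1-\varepsilon)\tau^*}$, and combining with Step 2 completes the proof.

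\textbf{The main obstacle} is Step 2: without the tightness identity $\sum_j a_j u_j^*=k$, the Friedgut bound would instead collapse only to $n^{\tau^*}\prod_j f_j^{u_j^*}$, too large by a factor of roughly $n^{\tau^*-1-\chi(q)}$. Thus everything hinges on exploiting tightness to trade one factor of $n$ per packing-constraint slack away. For queries that do not admit a tight optimal edge packing of value exactly $\tau^*$ (for instance $L_4$), one handles the proof by peeling off pendant atoms: if a variable appears in a single atom $S_j$, the matching property forces the contribution of $S_j$ to factor out as a multiplicative $f_j\leq 1$, reducing the problem to a connected sub-query with the same $\tau^*$ on which a tight packing is available.
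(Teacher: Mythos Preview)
Your Steps~1--3 mirror the paper's argument closely when a tight optimal edge packing exists, and your Step~3 is exactly the paper's Jensen/AM--GM estimate. The real divergence is in how non-tightness is handled, and here your proposal has a gap.

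\textbf{The peeling fix is not correct as stated.} You write that if $x$ is a variable appearing only in $S_j$, the matching property ``forces the contribution of $S_j$ to factor out as a multiplicative $f_j\le 1$.'' It does not. Summing $p_j(\ba_j)$ over the pendant coordinate gives a function of the remaining coordinates that is $\le 1$ pointwise (since $S_j$ is a matching, at most one tuple of $K_{m(S_j)}(S_j)$ can have any given value in a non-pendant column), but there is no reason it is $\le f_j$ pointwise; only its \emph{average} over the remaining coordinates is $\le f_j$. So you lose the $f_j$ factor entirely for the peeled atom, which then forces you to recover the full exponent $\tau^*$ from the remaining atoms alone. Your claim that the residual subquery ``has the same $\tau^*$ on which a tight packing is available'' is unjustified: peeling an atom can drop $\tau^*$, and even when it does not (as it happens for $L_4\to L_3$), one would need an argument that iterating this always terminates at a tight-packing subquery of the original $\tau^*$. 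No such argument is supplied.

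\textbf{What the paper does instead.} Rather than assuming or reducing to tightness, the paper \emph{manufactures} it. Starting from any optimal edge packing $\mathbf u$ of $q$, it forms the extended query
\[
q'(x_1,\ldots,x_k)=S_1(\bar x_1),\ldots,S_\ell(\bar x_\ell),\,T_1(x_1),\ldots,T_k(x_k),
\]
assigns $u_i'=1-\sum_{j:\,x_i\in\mathrm{vars}(S_j)}u_j\ge 0$ to the dummy unary $T_i$, and sets $w_i'\equiv 1$. Then $(\mathbf u,\mathbf u')$ is automatically a tight packing \emph{and} a cover of $q'$, so Friedgut applies to $q'$, and the identity you wanted, $\sum_j a_j u_j+\sum_i u_i'=k$, holds with no hypothesis on $q$. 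The unary factors contribute only $\prod_i n^{u_i'}$, which is exactly what makes the exponent collapse to $k-a+\ell=1+\chi(q)$. (A separate small limiting argument handles $u_j=0$.) This extended-query device is the missing idea in your write-up; once you have it, your Steps~1--3 go through verbatim for every connected $q$.
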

 
This proves \autoref{th:onestep}, since we can apply a union bound to show
that the total number of tuples known by all $p$ servers is bounded by: 
\begin{align*}
  p \cdot \mathbf{E}[|K_{m(I)}(q)|] 
    \leq p\cdot g_{q,c} \cdot \mathbf{E}[|q(I)|] / p^{(1-\varepsilon)\tau^*}
\end{align*} 
which is the upper bound in \autoref{th:onestep} since $g_{q,c}$
is a constant when $q$ is fixed. 

In the rest of the section we prove \autoref{prop:friedgut}.  
We start with some notation.  For $\ba_j \in [n]^{a_j}$, let
$w_j({\ba_j})$ denote the probability that the server knows the tuple
$\ba_j$.  In other words $w_j(\ba_j) = \P(\ba_j \in K_{m_j(S_j)}(S_j))$, where
the probability is over the random choices of $S_j$.

\begin{lemma} \label{lem:bounds}
For any relation $S_j$ of arity $a_j > 1$:
\begin{itemize}
\item[(a)] $\forall \ba_j \in [n]^{a_j}: w_j(\ba_j) \leq n^{1-a_j}$, and
\item[(b)] $\sum_{\ba_j \in [n]^{a_j}} w_j({\ba_j}) \leq f_j \cdot n$.
\end{itemize}
\end{lemma}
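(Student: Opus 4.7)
The plan is to observe that both inequalities are essentially immediate from the definition of $w_j$ once we unpack the randomness and invoke \autoref{lem:entropy_ratio}; no new tool is needed.

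For part (a), the first step is to recall that in the separate-input-server model, the message piece $m(S_j)$ is a deterministic function of $S_j$ alone (since $S_j$'s input server acts independently of the other relations in round~1), and that $K_m(S_j)$ depends only on this piece. Next, unfolding the definition of $K_m(S_j)$: if $\ba_j \in K_{m(S_j)}(S_j)$, then for every matching $S_j'$ with $m(S_j') = m(S_j)$ we must have $\ba_j \in S_j'$; applying this to $S_j' = S_j$ itself shows that the event $\{\ba_j \in K_{m(S_j)}(S_j)\}$ is contained in the event $\{\ba_j \in S_j\}$. It then remains to compute $\P(\ba_j \in S_j)$ where $S_j$ is a uniformly chosen $a_j$-dimensional matching on $[n]$. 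By symmetry (or by a direct count: fix one column to hold $\ba_j$ and choose an arbitrary matching on the remaining $n-1$ values in each of the other $a_j-1$ coordinates, giving $(n-1)!^{a_j-1}$ good matchings out of $(n!)^{a_j-1}$), this probability equals $n^{1-a_j}$, which gives $w_j(\ba_j) \leq n^{1-a_j}$.

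For part (b), the plan is a one-line application of linearity of expectation followed by \autoref{lem:entropy_ratio}. Writing
\[
\sum_{\ba_j \in [n]^{a_j}} w_j(\ba_j)
  \;=\; \sum_{\ba_j \in [n]^{a_j}} \P\bigl(\ba_j \in K_{m(S_j)}(S_j)\bigr)
  \;=\; \E\bigl[|K_{m(S_j)}(S_j)|\bigr],
\]
where the expectation is over the uniformly random matching $S_j$, and then invoking \autoref{lem:entropy_ratio} with the already-available bound $|m(S_j)| \leq f_j(a_j-1)\log(n!)$ gives $\E[|K_{m(S_j)}(S_j)|] \leq f_j \cdot n$, as required.

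There is no real obstacle in this proof; it is a bookkeeping step that cashes in the definitions and the earlier entropy lemma. The only mild subtlety is to be careful with the two different sources of randomness: the expectation in the definition of $w_j(\ba_j)$ is over the random relation $S_j$ (the other relations and the server are fixed), and the implication chain in part~(a) hinges on the fact that $m(S_j)$ does not depend on any of those other relations.
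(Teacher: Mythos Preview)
Your proof is correct and follows essentially the same approach as the paper's own proof, which is very terse: for (a) it simply notes $w_j(\ba_j)\le \P(\ba_j\in S_j)=n^{1-a_j}$, and for (b) it writes $\sum_{\ba_j} w_j(\ba_j)=\E[|K_{m_j(S_j)}(S_j)|]\le f_j\cdot n$. Your version just unpacks these two lines with more justification (the containment $K_{m(S_j)}(S_j)\subseteq S_j$, the explicit count for the probability, and the invocation of \autoref{lem:entropy_ratio}), but the argument is identical.
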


\begin{proof}
  To show (a), notice that $w_j(\ba_j) \leq \P(\ba_j \in S_j) = n^{1-a_j}$,
  while (b) follows from the fact $\sum_{\ba_j \in [n]^{a_j}} w_j({\ba_j}) =
  \mathbf{E}[|K_{m_j(S_j)}(S_j)|] \leq f_j \cdot n$.
\end{proof}

Since the server receives a separate message for each relation
$S_j$, from a distinct input server, the events $\ba_1 \in K_{m_1}(S_1),
\ldots, \ba_\ell \in K_{m_\ell}(S_\ell)$ are independent, hence:
\begin{align*}
  \mathbf{E}[|K_{m(I)}(q)|] = \sum_{\ba \in [n]^k} \P(\ba \in K_{m(I)}(q)) = \sum_{\ba \in [n]^k} \prod_{j=1}^{\ell}
  w_j({\ba_j})
\end{align*}
We now prove \autoref{prop:friedgut} using Friedgut's
inequality. Recall that in order to apply the inequality, we need to 
find a fractional edge cover.  Fix an optimal fractional edge packing
$\mathbf{u} = (u_1, \ldots, u_\ell)$ as in \autoref{tab:LP}.  By
duality, we have that $ \sum_j u_j = \tau^* $,
 where $\tau^*$ is the fractional covering number (which is
 the value of the optimal {\em fractional vertex cover}, and equal to the
 value of the optimal {\em fractional edge
   packing}).  Given $q$, defined as in \eqref{eq:q},
 consider the {\em extended query}, which has a new unary
 atom for each variable $x_i$:
\begin{align*}
  q'(x_1,\ldots, x_k) = S_1(\bar x_1), \ldots, S_\ell(\bar x_\ell),
  T_1(x_1), \ldots, T_k(x_k)
\end{align*}
For each new symbol $T_i$, define 
$  u_i' = 1 - \sum_{j: x_i \in \text{vars}(S_j)} u_j$.
Since $\mathbf{u}$ is a packing,  $u_i' \geq 0$. 
Let us define $\mathbf{u}'=(u_1', \ldots, u_k')$.

\begin{lemma}
(a) The assignment $(\mathbf{u},\mathbf{u}')$ is both a tight
fractional edge packing and a tight fractional edge cover for $q'$. 
(b) $\sum_{j=1}^\ell a_j u_j + \sum_{i=1}^k u'_i = k$
\end{lemma}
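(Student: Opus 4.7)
The plan is to verify both parts of the lemma by directly checking the vertex constraints for the extended query $q'$, using only the definition of $u_i'$ and the fact that $\mathbf{u}$ is a fractional edge packing of $q$.

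For part (a), first observe that each $u_j \ge 0$ by hypothesis, and each $u_i' \ge 0$ precisely because $\mathbf{u}$ satisfies the packing constraint \eqref{eq:cover:dual} at vertex $x_i$ in $q$. In the extended query $q'$, every variable $x_i$ appears in exactly one additional atom, namely $T_i(x_i)$. Hence the vertex constraint at $x_i$ in $q'$ reads
\begin{align*}
\sum_{j: x_i \in \text{vars}(S_j)} u_j \; + \; u_i' \; = \; 1,
\end{align*}
by the definition of $u_i'$. This single equality simultaneously witnesses that the packing constraint $(\le 1)$ holds with equality and that the cover constraint $(\ge 1)$ holds with equality at $x_i$. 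Since this is true for every $i \in [k]$, the assignment $(\mathbf{u},\mathbf{u}')$ is a tight fractional edge packing and a tight fractional edge cover of $q'$.

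For part (b), sum the equality from part (a) over all $i \in [k]$:
\begin{align*}
\sum_{i=1}^{k}\Bigl(\sum_{j:\, x_i \in \text{vars}(S_j)} u_j\Bigr) \; + \; \sum_{i=1}^{k} u_i' \; = \; k.
\end{align*}
Swapping the order of summation in the double sum, each $u_j$ is counted once for every variable occurring in $S_j$, i.e., exactly $a_j$ times. Therefore $\sum_{i=1}^{k}\sum_{j:\, x_i \in \text{vars}(S_j)} u_j = \sum_{j=1}^{\ell} a_j u_j$, and the identity $\sum_{j=1}^{\ell} a_j u_j + \sum_{i=1}^{k} u_i' = k$ follows.

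There is no real obstacle here: both claims are essentially bookkeeping with the definition of $u_i'$, and the only small care needed is the swap of summation in part (b), which is justified by simply counting occurrences of each variable in each atom $S_j$.
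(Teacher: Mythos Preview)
Your proof is correct and follows essentially the same approach as the paper: verify that the definition of $u_i'$ makes the vertex constraint at each $x_i$ hold with equality, then sum these equalities over $i$ and swap the order of summation to obtain part (b). The paper's argument is just a terser version of what you wrote.
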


\begin{proof}
  (a) is straightforward, since for every variable $x_i$ we have $u_i'+\sum_{j:
    x_i \in \text{vars}(S_j)} u_j = 1$.  Summing up:
  \begin{align*}
k=\sum_{i=1}^k \left( u_i'+\sum_{j: x_i \in \text{vars}(S_j)} u_j \right) = 
\sum_{i=1}^k u_i' + \sum_{j=1}^\ell a_j u_j 
  \end{align*}
which proves (b).  
\end{proof}

We will apply Friedgut's inequality to the extended query $q'$ to prove
\autoref{prop:friedgut}.  Set the variables $w(-)$ used in Friedgut's
inequality as follows:
\begin{align*}
  w_j(\ba_j) = & \P(\ba_j \in K_{m_j(S_j)}(S_j)) \mbox{ for $S_j$,
    tuple $\ba_j \in [n]^{a_j}$} \\
  w_i'(\alpha) = & 1\kern 1.1in  \mbox{ for $T_i$, value $\alpha \in [n]$}
\end{align*}

Recall that, for a tuple $\ba \in [n]^k$ we use $\ba_j \in [n]^{a_j}$
for its projection on the variables in $S_j$; with some abuse, we
write $\ba_i \in [n]$ for the projection on the variable $x_i$.
Assume first that $u_j>0$, for $j=1,\ell$.  Then:
%
\begin{align*}
\allowdisplaybreaks
\mathbf{E}[|K_m(q)|] 
& =  \sum_{\ba \in [n]^k} \prod_{j=1}^{\ell} w_j({\ba_j}) \\
& = \sum_{\ba \in [n]^k} \prod_{j=1}^{\ell} w_j({\ba_j})\prod_{i=1}^k w_i'({\ba_i})\\
&\leq \prod_{j=1}^{\ell} \left( \sum_{\ba \in [n]^{a_j}} w_j({\ba})^{1/u_j} \right)^{u_j}
\prod_{i=1}^{k} \left( \sum_{\alpha \in [n]} w'_i(\alpha)^{1/u_i'} \right)^{u_i'} \\
 &= \prod_{j=1}^{\ell} \left( \sum_{\ba \in [n]^{a_j}} w_j({\ba})^{1/u_j} \right)^{u_j} \prod_{i=1}^k n^{u_i'}
\end{align*}
Note that, since $w'_i(\alpha) = 1$ we have $w'_i(\alpha)^{1/u_i'} =
1$ even if $u_i'=0$.  Write $w_j({\ba})^{1/u_j} = w_j({\ba})^{1/u_j-1}
w_j({\ba})$, and use~\autoref{lem:bounds} to obtain:
\begin{align*}
\sum_{\ba \in [n]^{a_j}} w_j({\ba})^{1/u_j} 
& \leq (n^{1-a_j})^{1/u_j-1} \sum_{\ba \in [n]^{a_j}} w_j({\ba}) \\
& \leq n^{(1-a_j)(1/u_j-1)} f_j \cdot n \\
&  = f_j \cdot n^{( a_j -a_j/u_j + 1/u_j)}
\end{align*}
Plugging this in the bound, we have shown that:
\begin{align}
\mathbf{E}[|K_m(q)|]& \leq \prod_{j=1}^{\ell} (f_j \cdot n^{(a_j -a_j/u_j + 1/u_j)})^{u_j} \cdot \prod_{i=1}^k n^{u_i'} \nonumber\\
  &= \prod_{j=1}^{\ell }f_j^{u_j} \cdot n^{(\sum_{j=1}^{\ell}a_j u_j - a + \ell)} \cdot n^{\sum_{i=1}^k u_i'} \nonumber\\
  &= \prod_{j=1}^{\ell }f_j^{u_j} \cdot n^{(\ell-a)} \cdot n^{(\sum_{j=1}^{\ell}a_j u_j+ \sum_{i=1}^k u_i')} \nonumber\\
  &= \prod_{j=1}^{\ell }f_j^{u_j} \cdot n^{\ell+k-a}
   = \prod_{j=1}^{\ell }f_j^{u_j} \cdot  n^{1+\chi(q)} \nonumber\\
   & = \prod_{j=1}^{\ell }f_j^{u_j} \cdot  \E[|q(I)|]  \label{eq:end}
\end{align}
If some $u_j=0$, then we can derive the same lower bound as follows:
We can replace each $u_j$ with $u_j+\delta$ for any $\delta>0$ still
yielding an edge cover. 
Then we have $\sum_j a_j u_j + \sum_i u_i' = k + a\delta$,
and hence an extra factor $n^{a\delta}$ multiplying the term
$n^{\ell+k-a}$ in \eqref{eq:end}; however, we
obtain the same upper bound since, in the limit as $\delta$ approaches 0,
this extra factor approaches 1.

Let $f_q = \prod_{j=1}^{\ell }f_j^{u_j}$; the final step is to upper bound the quantity $f_q$
using the fact that $\sum_{j=1}^{\ell} f_j (a_j-1) \leq c(a-\ell)/p^{1-\varepsilon}$. Indeed:
\begin{align*}
\allowdisplaybreaks
\log f_q & = \sum_{j=1}^{\ell} u_j \log f_j \\
& = \sum_{j=1}^{\ell} u_j \log \frac{f_j (a_j-1)}{u_j} +  \sum_{j=1}^{\ell} u_j \log \frac{u_j}{a_j-1}  & \\
& \leq \tau^* \sum_{j=1}^{\ell} \frac{u_j}{\tau^*} \log \frac{f_j (a_j-1)}{u_j} \\
& \leq \tau^* \log \frac{\sum_{j=1}^{\ell} f_j(a_j-1)}{\tau^*} \\
& \leq \tau^* \log \frac{c (a-\ell)}{\tau^* p^{1-\varepsilon} } 
\end{align*}
Here, the first inequality comes from the fact that $u_j \leq 1$ and $a_j-1 \geq 1$
(since the arity is at least 2), 
and hence $\log \frac{u_j}{a_j-1}  \leq 0$, for all $j=1, \dots, \ell$. The second inequality follows from Jensen's inequality and concavity of  $\log$. Thus, we obtain
\begin{align}
\label{eq:fq}
f_q = \prod_{j=1}^{\ell} f_j^{u_j} \leq \left( \frac{c (a-\ell)}{\tau^*} \right)^{\tau^*} \cdot
\frac{1}{p^{(1-\varepsilon)\tau^*}}
\end{align}

Recall that we have defined $g_{q,c} = (c(a-\ell)/\tau^*)^{\tau^*}$. Thus, combining 
\eqref{eq:end} with \eqref{eq:fq}  concludes the proof of \autoref{prop:friedgut}.

%
%

\subsection{Extensions}

\autoref{th:onestep:upper} and \autoref{th:onestep} imply that, over
matching databases, the space exponent of a query $q$ is $1-1/\tau^*$,
where $\tau^*$ is its fractional covering number.  \autoref{tab:queries}
illustrates the space exponent for various families of conjunctive
queries.  We now discuss a few extensions and corollaries. 
As a corollary of \autoref{th:onestep} we can 
characterize the queries with space exponent zero, i.e. those that can be
computed in a single round without any replication.

\begin{corollary} \label{cor:root:variable} A query $q$ has
  covering number $\tau^*(q)=1$ if and only if there exists a variable shared
  by all atoms.
\end{corollary}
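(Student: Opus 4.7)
The plan is to prove both directions directly from the vertex-covering LP of \autoref{tab:LP}, without appealing to the algorithmic content of \autoref{th:onestep}; the LP argument is cleanest here.

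For the easy direction ($\Leftarrow$), I would observe that if a variable $x_i$ occurs in every atom, then setting $v_i=1$ and $v_{i'}=0$ for $i'\ne i$ is a feasible integral vertex cover of value $1$, so $\tau^*(q)\le 1$. Combined with the trivial lower bound $\tau^*(q)\ge 1$ (which follows by picking any atom $S_j$ and noting that $\sum_i v_i\ge \sum_{i:x_i\in\text{vars}(S_j)}v_i\ge 1$ by nonnegativity and the cover constraint \eqref{eq:cover}), we get $\tau^*(q)=1$.

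For the harder direction ($\Rightarrow$), I would take an optimal fractional vertex cover $\mathbf{v}^*=(v_1^*,\dots,v_k^*)$ with $\sum_i v_i^*=\tau^*(q)=1$. The key observation is a tightness argument: for every atom $S_j$,
\[
1 \;\le\; \sum_{i:\,x_i\in\text{vars}(S_j)} v_i^* \;\le\; \sum_{i=1}^k v_i^* \;=\; 1,
\]
where the first inequality is the cover constraint \eqref{eq:cover} and the second uses nonnegativity. Hence every cover constraint is tight, which forces $\sum_{i:\,x_i\notin\text{vars}(S_j)} v_i^*=0$ and therefore $v_i^*=0$ whenever $x_i\notin\text{vars}(S_j)$. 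Since $\sum_i v_i^*=1>0$, the support of $\mathbf{v}^*$ is nonempty; picking any $x_i$ in the support, the previous implication applied to each $j$ shows $x_i\in\text{vars}(S_j)$ for all atoms $S_j$, i.e.\ $x_i$ is shared by all atoms.

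There is no real obstacle here — the only subtle step is recognizing that $\tau^*(q)=1$ plus nonnegativity squeezes every cover constraint to equality, after which the conclusion is immediate. No appeal to Friedgut's inequality or to the lower bound machinery of Section~\ref{subsec:onestep} is needed; the corollary is a purely structural consequence of the LP.
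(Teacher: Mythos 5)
Your proof is correct and takes essentially the same approach as the paper: both directions rest on the vertex-covering LP, with the ``if'' direction exhibiting the integral cover supported on the shared variable and the ``only if'' direction using $\sum_i v_i^*=1$ together with nonnegativity to force $v_i^*=0$ for any variable missing from some atom. You phrase the second direction as a direct tightness argument while the paper phrases it as a proof by contradiction, but it is the same observation.
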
 

\begin{proof}
  The ``if'' direction is straightforward: if $x_i$ occurs in all
  atoms, then $v_i=1$ and $v_j=0$ for all $j\neq i$ is a fractional
  vertex cover with value 1, proving $\tau^*=1$ since $\tau^*(q) \geq
  1$ for any $q$.  
  
  For ``only if'', assume $\tau^*(q)=1$, and consider
  a fractional vertex cover for which $\sum_{i=1}^k v_i = 1$.  We
  prove that there exists a variable $x$ that occurs in all atoms.  If
  not, then every variable $x_j$ is missing from at least one atom
  $S$: since $\sum_{i: x_i \in \text{vars}(S)} v_i \geq 1$, it follows
  that $v_j = 0$, for all variables $v_j$, which is a contradiction.
\end{proof}

Thus, a query can be computed in one round on \mpc(0)
 if and only if it has a variable occurring in all atoms.  
The corollary should be
contrasted with the results in~\cite{DBLP:conf/pods/KoutrisS11}, which
proved that a query is computable in one round iff it is {\em
  tall-flat}.  Any connected tall-flat query has a variable occurring
in all atoms, but the converse is not true in general. The algorithm
in~\cite{DBLP:conf/pods/KoutrisS11} works for {\em any} input data,
including skewed inputs, while here we restrict to matching databases.
For example, $S_1(x,y),S_2(x,y),S_3(x,z)$ can be computed in one round
if all inputs are permutations, but it is not tall-flat, and hence it
cannot be computed in one round on general input data.

\autoref{th:onestep} tells us that a query $q$ can report at most a
$1/p^{\tau^*(q)(1-\varepsilon)-1}$ fraction of answers.  We show 
that there is an algorithm achieving this for matching databases:

\begin{proposition} \label{prop:match:bound} Given a connected query $q$ and
  $\varepsilon < 1 - 1/\tau^*(q)$, there exists an algorithm that reports 
  $\Theta(\E[|q(I)|]/p^{\tau^*(q)(1-\varepsilon)-1})$ answers in expectation
  on any matching database in one round in \mpc$(\varepsilon)$.
\end{proposition}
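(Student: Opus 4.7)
The plan is to adapt the \textsc{HyperCube} algorithm of \autoref{th:onestep:upper} by restricting the computation to a random subdomain of each variable, where the subdomain sizes scale with the optimal fractional vertex cover. Let $\mathbf{v}=(v_1,\dots,v_k)$ be an optimal fractional vertex cover of $q$ so that $\sum_i v_i = \tau^*(q)$, and set $\delta = 1 - \varepsilon - 1/\tau^*(q) > 0$. For each variable $x_i$, draw an independent random subdomain $D_i \subseteq [n]$ of size $s_i = \lceil n/p^{\delta v_i}\rceil$ (for instance by hashing $[n]$ to $\lceil n/s_i\rceil$ buckets and keeping a designated one). Filter each input relation $S_j$ to retain only the tuples whose entries all lie in the corresponding $D_i$'s, and then run \textsc{HyperCube} on the filtered relations with share exponents $e_i = v_i/\tau^*(q)$ on $p$ servers.

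The first step is to verify that the per-server load fits the budget. For a matching $S_j$ of arity $a_j$, the expected number of surviving tuples is $\Theta\!\left(\prod_{i \in \text{vars}(S_j)} s_i \cdot n^{-(a_j - 1)}\right) = \Theta\!\left(n \cdot p^{-\delta \sum_{i \in \text{vars}(S_j)} v_i}\right)$, which is at most $O(n/p^\delta)$ since $\mathbf{v}$ covers every atom. The \textsc{HyperCube} step further distributes these tuples so that each server sees at most a $p^{-\sum_{i \in \text{vars}(S_j)} e_i} \leq p^{-1/\tau^*(q)}$ fraction of them, giving a per-server load of $O(n/p^{\delta + 1/\tau^*(q)}) = O(n/p^{1-\varepsilon})$ from $S_j$. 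Chernoff bounds analogous to those in \autoref{th:onestep:upper} show that the actual load exceeds its expectation by more than a constant factor with probability only $\exp(-\Omega(n/p^{1-\varepsilon}))$.

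The second step is to count the answers. An answer $(\alpha_1,\dots,\alpha_k) \in q(I)$ is reported by the algorithm exactly when $\alpha_i \in D_i$ for all $i$, an event of probability $\prod_i (s_i/n) = \Theta(p^{-\delta\sum_i v_i}) = \Theta(p^{-\delta\tau^*(q)})$ independent of $I$. Since $\delta \tau^*(q) = \tau^*(q)(1-\varepsilon) - 1$, linearity of expectation yields an expected number of reported answers of $\Theta(\E[|q(I)|]/p^{\tau^*(q)(1-\varepsilon)-1})$, matching the upper bound of \autoref{th:onestep}.

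The main subtlety will be the first-moment calculation for matching relations, since the columns of a matching are not independent. This is handled by observing that a uniformly random $a_j$-matching may be generated by fixing column $1$ as the identity on $[n]$ and drawing columns $2,\dots,a_j$ as independent uniformly random permutations; then for a given tuple the events ``column $i$ lies in $D_i$'' become mutually independent, each of probability $s_i/n$, giving the stated first moment. The residual dependencies between distinct tuples of a matching are mild enough that the standard Chernoff-style concentration argument of \autoref{th:onestep:upper} carries over with only negligible modifications.
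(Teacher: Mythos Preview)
Your argument is correct and lands on essentially the same algorithm as the paper, though you arrive at it by a different parameterization. The paper's proof takes shares $p_i=p^{(1-\varepsilon)v_i}$, which produces an \emph{oversized} hypercube with $p^{(1-\varepsilon)\tau^*}>p$ cells, and then assigns the $p$ servers to $p$ uniformly random cells; each output tuple is covered with probability $p/p^{(1-\varepsilon)\tau^*}$, and the per-server load bound follows exactly as in \autoref{th:onestep:upper}. Your version---filter each domain by a random hash bucket of width $p^{\delta v_i}$ and then run ordinary HC on $p$ servers---is equivalent to choosing a \emph{structured} (product-form) set of $p$ cells from that same oversized hypercube rather than a uniformly random one. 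Both choices give the same $\Theta(p^{1-(1-\varepsilon)\tau^*})$ coverage probability per output tuple and the same $O(n/p^{1-\varepsilon})$ load, so neither approach buys anything the other does not; the paper's description is slightly shorter because it avoids the two-stage decomposition.

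One remark on your final paragraph: the ``subtlety'' you raise is not actually present. The proposition is stated for a \emph{fixed} matching database $I$, and the only randomness is in your choice of the subdomains $D_i$ (and the HC hashes). For a fixed tuple $t\in S_j$, the events $\{t_i\in D_i\}$ are independent simply because the $D_i$ are drawn independently; no appeal to a random-matching model is needed for the first moment. Where the matching structure \emph{is} used is in the concentration step, and there it helps rather than hurts: distinct tuples of $S_j$ disagree in every coordinate, so their survival-and-routing events depend on disjoint hash evaluations and are genuinely independent, exactly as in \autoref{th:onestep:upper}.
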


 \begin{proof} 
 The algorithm we describe here
 is similar to the HC algorithm described in Section~\ref{subsec:algorithms}. 
 For each variable $x_i$, define the shares $p_i = p^{(1-\varepsilon)v_i}$,
 where $\mathbf{v} = (v_1, \dots, v_k)$ is the optimal fractional vertex cover.
 We use random hash functions $h_i: [n] \rightarrow [p_i]$ for each $i=1, \dots,k$.
 This creates $p^{(1-\varepsilon)\tau^*(q)}$ hashing buckets in the 
 $k$-dimensional hypercube $[n]^{p_1} \times \dots \times [n]^{p_k}$.
 Notice that, since $\varepsilon < 1-1/\tau^*(q)$, the number of points in the
 hypercube is strictly greater than $p$, so it is not possible to assign each point to one of the
 $p$ servers. Instead, we will pick $p$ points uniformly at random and assign each of
 the $p$ servers to one of the points in the hypercube. Since each potential output 
 tuple is hashed to a random point of the hypercube, the probability that a potential output
 tuple is covered by one of the servers is $p/ p^{(1-\varepsilon)\tau^*(q)}$. Our algorithm will 
 execute exactly as the HC algorithm, but will communicate only tuples that
 are hashed to one of the chosen hypercube points (and thus to one of the servers).
 
 By our previous discussion, the algorithm reports in expectation 
 $p^{1-(1-\varepsilon)\tau^*(q)} \E[|q(I)|]$ tuples.
To conclude the proof, it suffices to show that each of the servers will
 receive $O(n/p^{1-\varepsilon})$ tuples. Indeed, using a similar analysis
 to the HC algorithm, each server receives from a relation $S_j$  in 
 expectation $n/\prod_{i: x_i \in \text{vars}(S_j)} p_i$ tuples. 
Since $\mathbf{v}$ is a fractional cover, we have that
 $\prod_{i: x_i \in \text{vars}(S_j)} p_i = 
 p^{(1-\varepsilon) \sum_{i:x_i \in \text{vars}(S_j)}v_i } \geq p^{1-\varepsilon}$.
Thus, the number of tuples received by each server is in expectation $O(n/p^{1-\varepsilon})$.
Moreover, following the same argument as in the proof of \autoref{th:onestep:upper},
the probability that the number of tuples per server deviates more than a constant
factor from the expectation is exponentially small to $n$.
 \end{proof}

Note that the algorithm is forced to run in one round,
in an \mpc($\varepsilon$) model strictly weaker than its space
exponent, hence it cannot find all the answers: the proposition says
that the algorithm can find an expected number of answers that matches
\autoref{th:onestep}.

So far, our lower bounds were for the {\sc Join-Reporting} problem.
We can extend the lower bounds to the {\sc Join-Witness}
problem. 

\begin{proposition}
For $\varepsilon < 1/2$, there exists no one-round \mpc($\varepsilon$) 
algorithm that solves {\sc Join-Witness} for the query $q(w,x,y,z) = 
R(w),S_1(w,x),S_2(x,y),S_3(y,z),T(z)$.
\end{proposition}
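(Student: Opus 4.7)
The plan is to reduce to the interior chain query $L_3(w,x,y,z) = S_1(w,x), S_2(x,y), S_3(y,z)$ and then apply the argument underlying \autoref{th:onestep} and \autoref{prop:friedgut} to $L_3$. Specifically, I would consider the distribution over matching databases in which $R$ and $T$ are fixed as the full domain $[n]$ while $S_1, S_2, S_3$ are independent uniform random permutations of $[n]$. Under this distribution $R$ and $T$ are constants, so any bits sent by the $R$- and $T$-input servers are functions of the shared randomness alone and carry no information about the hidden permutations; I would therefore assume without loss of generality that these messages are empty. The receiver budget of $O(n \log n / p^{1-\varepsilon})$ bits per server is then spent entirely on information about $S_1, S_2, S_3$, and a {\sc Join-Witness} algorithm for $q$ becomes a one-round \mpc($\varepsilon$) algorithm for $L_3$ that must produce at least one witness. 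For random permutations $L_3(I)$ is always a perfect matching of size $n$, so a witness always exists and {\sc Join-Witness} is non-trivial on every input in the support.

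Next I would repeat the analysis of \autoref{prop:friedgut} for $L_3$, whose fractional covering number is $\tau^*(L_3) = 2$ with optimal edge packing $(u_1, u_2, u_3) = (1, 0, 1)$. Friedgut's inequality combined with the pointwise bound $\mathbf{P}((x,y) \in S_2) \leq 1/n$ yields, at any fixed server, $\mathbf{E}[|K_m(L_3)|] \leq f_1 \cdot f_3 \cdot n$, where $f_j$ is the fraction of the $\log(n!)$ bits encoding $S_j$ that the server receives. The per-server bit budget enforces $f_1 + f_2 + f_3 = O(1/p^{1-\varepsilon})$, and the Jensen-inequality step in the proof of \autoref{prop:friedgut} (or plain AM--GM) gives $f_1 f_3 = O(1/p^{2(1-\varepsilon)})$. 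Summing over the $p$ servers, the total expected number of correct $L_3$-witnesses the algorithm can ever report is $O(n/p^{1-2\varepsilon})$.

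Since $\varepsilon < 1/2$, the exponent $1 - 2\varepsilon$ is strictly positive, so choosing $p = \omega(n^{1/(1-2\varepsilon)})$ drives this expectation to $o(1)$. Markov's inequality then implies that on a uniformly random permutation input every deterministic algorithm fails to report any correct witness with probability $1 - o(1)$, and Yao's principle transfers this to randomized algorithms. The step I expect to require the most care is justifying the reduction that discards the $R$- and $T$-input server messages: I need to rule out that these messages, which are allowed to be nonempty by the model, can inflate the information available to a receiver about $S_1, S_2, S_3$. This holds because those messages depend only on $R$, $T$, and the shared randomness, and since $R$ and $T$ are constants and the shared randomness is already available to every server, the messages can be simulated locally without consuming any of the receiver's bit budget.
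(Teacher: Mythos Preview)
Your reduction to $L_3$ by setting $R = T = [n]$ is sound, and the bound $O(n/p^{1-2\varepsilon})$ on the expected number of known $L_3$-tuples across all servers follows correctly from \autoref{prop:friedgut}. The gap is in the final step. On your input distribution $|q(I)| = n$ always, so your bound is only $o(1)$ in the regime $p^{1-2\varepsilon} = \omega(n)$, i.e., $p \gg n$. This is precisely the degenerate regime the model is meant to avoid: each server's budget $O(N/p^{1-\varepsilon})$ then falls below $\log n$ bits. In the natural regime $p \ll n$ your bound is $\omega(1)$ and Markov gives nothing. There is a second, smaller gap: even when the expected number of \emph{known} witnesses is $o(1)$, this only shows that no server is \emph{certain} of a witness; for {\sc Join-Witness} you must also rule out that a server outputs a non-known tuple that happens to land in $q(I)$, which you do not address.

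The paper's proof fixes both issues with one change: it takes $R$ and $T$ to be random subsets of size $\sqrt{n}$ rather than the full domain. These are small enough to be broadcast within budget (so the reduction to $L_3$ still goes through), but now each $L_3$-tuple survives into $q(I)$ only with probability $(1/\sqrt{n})^2 = 1/n$, giving $\E[|q(I)|] = 1$. The expected number of known $q$-tuples across all servers then drops to $O(1/p^{1-2\varepsilon})$, which is $o(1)$ for every $p = \omega(1)$ in the natural regime; and because there is essentially a unique witness to find, the guessing concern is also handled. The whole point of the unary relations $R$ and $T$ in the statement is to allow this thinning of the answer set down to constant expected size; by fixing them to $[n]$ you discard exactly the leverage they provide.
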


\begin{proof}
Consider the family of inputs where $S_1, S_2, S_3$ are 2-dimensional
matchings, while $R,T$ are uniformly at random chosen subsets of $[n]$ 
of size $\sqrt{n}$. It is easy to see that for this input distribution $I$,
$\E[|q(I)|] = 1$, since the probability that each answer of the subquery 
$q' = S_1, S_2, S_3$ is included in the final output is 
$(1/\sqrt{n}) \cdot (1/\sqrt{n}) = 1/n$,
while $q'$ has exactly $n$ answers.

Since the size of $R,T$ is $\sqrt{n}$ and $n \gg p$, we can assume w.l.o.g. that both
relations are broadcast to all servers during the first round. Further, we can assume
w.l.o.g. as before that $S_1, S_2, S_3$ are initially stored in three separate input servers.

Fix some server $s$; by \autoref{th:onestep}, in expectation the server containsonly 
$O(\E[|q'(I)|]/p^{2(1-\varepsilon)})$ tuples from the subquery $q'$. Since $R,T$ are known to the
server, and $\E[|q'(I)|] = n$, the server will know in expectation for 
$(1/\sqrt{n}) \cdot (1/\sqrt{n}) O(n/p^{2(1-\varepsilon)}) = O(1/p^{2(1-\varepsilon)})$ tuples from $q$. 
Consequently, the servers know in total for $O(1/p^{2(1-\varepsilon)-1})$ output tuples in expectation. 
Since $\varepsilon < 1/2$, and the input has in expectation one answer, we can argue as in 
\autoref{cor:yao} that the probability that any algorithm will provide the tuple as a witness is
polynomially small.
\end{proof}

\section{Multiple Communication Steps}
\label{sec:multistep}

In this section we consider a restricted version of 
the \mpc($\varepsilon$) model, called the {\em tuple-based}
\mpc($\varepsilon$) model, which can simulate multi-round MapReduce for
database queries. We will establish both upper and lower bounds on 
the number of rounds needed to compute any connected query $q$ in
this tuple-based \mpc($\varepsilon$) model, proving \autoref{th:intro2}.

\subsection{An Algorithm for Multiple Rounds}

Given an $\varepsilon \geq 0$, let $\Gamma_\varepsilon^1$ denote the
class of connected queries $q$ for which $\tau^*(q) \leq
1/(1-\varepsilon)$; these are precisely the queries that can be
computed in one round in the \mpc($\varepsilon$) model on matching
databases.  We extend this definition inductively to larger numbers of
rounds: Given $\Gamma^r_\varepsilon$ for some $r\ge 1$, define
$\Gamma_\varepsilon^{r+1}$ to be the set of all connected queries $q$
constructed as follows.  Let $q_1, \ldots, q_m \in
\Gamma_\varepsilon^r$ be $m$ queries, and let $q_0 \in
\Gamma_\varepsilon^1$ be a query over a different vocabulary $V_1,
\ldots, V_m$, such that $|\text{vars}(q_j)|=\text{arity}(V_j)$ for all
$j\in [m]$.  Then, the query $q = q_0[q_1/V_1, \ldots, q_m/V_m]$,
obtained by substituting each view $V_j$ in $q_0$ with its definition
$q_j$, is in $\Gamma_\varepsilon^{r+1}$.  In other words,
$\Gamma^r_\varepsilon$ consists of queries that have a {\em query
  plan} of depth $r$, where each operator is a query computable in one
step.

The following proposition is straightforward.
 
\begin{proposition} \label{prop:multistep} 
Every query in
  $\Gamma_\varepsilon^r$ can be computed by an \mpc($\varepsilon$) algorithm
  in $r$ rounds on any matching database.
\end{proposition}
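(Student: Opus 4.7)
The plan is to prove this by induction on $r$, mirroring the inductive definition of $\Gamma_\varepsilon^r$. The base case $r=1$ is exactly \autoref{th:onestep:upper}: a query $q \in \Gamma_\varepsilon^1$ satisfies $\tau^*(q) \le 1/(1-\varepsilon)$, so picking an optimal fractional vertex cover and running the \textsc{HyperCube} algorithm with share exponents $e_i = v_i/\tau^*(q)$ computes $q$ on any matching database in one round with exponentially small failure probability.

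For the inductive step, suppose $q \in \Gamma_\varepsilon^{r+1}$ is built from $q_0 \in \Gamma_\varepsilon^1$ over view symbols $V_1,\dots,V_m$ and subqueries $q_1,\dots,q_m \in \Gamma_\varepsilon^r$ via $q = q_0[q_1/V_1,\dots,q_m/V_m]$. First I would partition the $p$ workers into $m$ groups of size $\Theta(p/m)$, devoting group $j$ to computing $q_j$ on the shared matching database input. By the inductive hypothesis, each group of $\Theta(p/m)$ servers can compute its $q_j$ in $r$ rounds with per-server load $O(n/(p/m)^{1-\varepsilon}) = O(n/p^{1-\varepsilon})$, since $m$ depends only on the fixed query $q$ and hence affects the bound only by a constant factor. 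Once the materialized view $V_j = q_j(I)$ is produced (stored distributedly across the group), the final round runs the \textsc{HyperCube} algorithm for $q_0$ using the $V_j$'s as input relations to all $p$ workers, completing the computation in $r+1$ rounds.

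The main obstacle is verifying that the intermediate views $V_j$ satisfy enough of the ``matching'' structure for the base-case analysis of \textsc{HyperCube} to apply to $q_0$ in the final round. Here I would use the following observation: because each $q_j$ is a connected full conjunctive query and the initial input is a matching database, the output $q_j(I)$ has at most $n$ tuples and every attribute of $V_j$ is a key (any two output tuples differ in every coordinate, since agreement in one coordinate would violate the key property inherited from the base matchings). Consequently, $V_j$ is a ``partial matching'' in exactly the sense needed by the load-balancing step in the proof of \autoref{th:onestep:upper}: the random hash functions send distinct tuples to independent destinations coordinate-wise, and the Chernoff-bound argument goes through unchanged, giving per-server load $O(n/p^{1-\varepsilon})$ with failure probability $\exp(-\Omega(n/p^{1-\varepsilon}))$.

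Finally, I would collect the failure probabilities by a union bound over all $r+1$ rounds and all $m$ subcomputations; since $r$ and $m$ depend only on $q$, the total failure probability remains $\exp(-\Omega(n/p^{1-\varepsilon}))$. The correctness of the final round is immediate from the correctness of \textsc{HyperCube} applied to $q_0$: every answer of $q$ corresponds to a joint assignment of the head variables of $q_0$ whose witnesses in the $V_j$'s meet at the server indexed by the hash of that assignment.
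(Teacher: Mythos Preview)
Your proposal is correct and is precisely the argument the paper leaves implicit: the paper gives no proof of this proposition, simply calling it ``straightforward,'' and your induction on $r$ with the \textsc{HyperCube} algorithm as the one-round building block is exactly what the inductive definition of $\Gamma_\varepsilon^r$ is designed to support. Your observation that the intermediate views $V_j$ are partial matchings (every attribute a key, at most $n$ tuples) so that the Chernoff load-balancing argument of \autoref{th:onestep:upper} still applies is the one point that deserves checking, and you handle it correctly.
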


\begin{example} \label{ex:queryplan} Let $\varepsilon=1/2$. The
  query $L_k$ in \autoref{tab:queries} for $k=16$ has a query plan
  of depth $r=2$.  The first step computes in parallel four queries,
  $v_1 = S_1, S_2, S_3, S_4$, \ldots, $v_4 = S_{13}, S_{14}, S_{15},
  S_{16}$.  Each is isomorphic to $L_4$, therefore
  $\tau^*(q_1)=\cdots=\tau^*(q_4) = 2$ and each can be computed in one
  step.  The second step computes the query $q_0 = V_1,
  V_2, V_3, V_4$, which is also isomorphic to $L_4$.  We can
  generalize this approach for any $L_k$:
  for any $\varepsilon \geq 0$, let $k_{\varepsilon}$ be the
  largest integer such that $\tau^*(L_{k_{\varepsilon}}) \leq
  1/(1-\varepsilon)$: $k_{\varepsilon} = 2 \lfloor 1/(1-\varepsilon)
  \rfloor$.  Then, for any $k \geq k_{\varepsilon}$, $L_k$ can be
  computed using $L_{k_{\varepsilon}}$ as a building block at each
  round: the plan will have a depth of $\lceil \log k / \log
  k_{\varepsilon} \rceil$.

  We also consider the query $SP_k =
  \bigwedge_{i=1}^k R_i(z,x_i), S_i(x_i, y_i)$. Since $\tau^*(SP_k) =
  k$, the space exponent for one round is $1-1/k$. However, $SP_k$ has
  a query plan of depth 2 for \mpc(0), by computing the joins $q_i =
  R_i(z,x_i), S_i(x_i, y_i)$ in the first round and in the second
  round joining all $q_i$ on the common variable $z$.  Thus, if we
  insist in answering $SP_k$ in one round, we need a
  huge replication $O(p^{1-1/k})$, but we can compute it in two rounds
  with replication $O(1)$.
\end{example}

We next present an upper bound on the number of rounds needed to compute any
query.  Let $\text{rad}(q) = \min_u \max_v d(u,v)$ denote the {\em
  radius} of a query $q$, where $d(u,v)$ denotes the distance between
two nodes in the hypergraph.  For example, $\text{rad}(L_k) = \lceil
k/2 \rceil$ and $\text{rad}(C_k) = \lfloor k/2 \rfloor$.

\begin{sloppypar}
\begin{lemma} 
\label{lem:multitree}
Fix $\varepsilon \geq 0$, let $k_{\varepsilon} = 2 \lfloor
1/(1-\varepsilon) \rfloor$, and let $q$ be any connected query.  Let
$r(q)=\lceil \log (\text{rad}(q)) / \log k_{\varepsilon} \rceil +1$ if
$q$ is tree-like, and let $r(q)=\lceil \log (\text{rad}(q)+1) / \log
k_{\varepsilon} \rceil +1$ otherwise.  Then, $q$ can be computed in
$r(q)$ rounds on any matching database input by repeated application
of the HC algorithm in the \mpc($\varepsilon$) model.
\end{lemma}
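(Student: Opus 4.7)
The plan is to prove by strong induction on $r(q)$ that $q \in \Gamma_\varepsilon^{r(q)}$, from which \autoref{prop:multistep} immediately yields the upper bound on the number of HC rounds.

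For the base case $r(q)=1$, the formula forces $\text{rad}(q) \leq 1$ when $q$ is tree-like and $\text{rad}(q)=0$ otherwise. In the first case every atom contains the center vertex; in the second $q$ is a single atom. Either way, by \autoref{cor:root:variable} we have $\tau^*(q) = 1 \leq 1/(1-\varepsilon)$, so $q \in \Gamma_\varepsilon^1$.

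For the inductive step, let $\rho = \text{rad}(q) > 1$ and fix a center $u$ of the hypergraph of $q$. I would reduce the radius by a factor of roughly $k_\varepsilon$ in one level of the query plan, in the spirit of Example~\ref{ex:queryplan}. Specifically, I would identify a collection of ``splitter'' vertices at depths $\lceil \rho/k_\varepsilon\rceil, 2\lceil\rho/k_\varepsilon\rceil,\ldots$ from $u$ along the geodesics of $q$, and use them to carve $q$ into connected subqueries $q_1,\ldots,q_m$ each of radius at most $\lceil\rho/k_\varepsilon\rceil$. I would then let $q_0$ be the contracted query obtained by replacing each $q_j$ with a view atom $V_j$ whose arguments are the boundary variables that $q_j$ shares with the rest of $q$. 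By construction the view atoms of $q_0$ form (essentially) an $L_{k_\varepsilon}$ along each geodesic of $q$, so $\tau^*(q_0) \leq k_\varepsilon/2 \leq 1/(1-\varepsilon)$ and $q_0 \in \Gamma_\varepsilon^1$ by \autoref{th:onestep:upper}. A direct calculation using $\text{rad}(q_j) \leq \lceil\rho/k_\varepsilon\rceil$ together with the identity $\lceil \log(\rho/k_\varepsilon)/\log k_\varepsilon\rceil + 1 = \lceil \log\rho/\log k_\varepsilon\rceil$ shows $r(q_j) \leq r(q)-1$, so the inductive hypothesis gives $q_j \in \Gamma_\varepsilon^{r(q)-1}$ and therefore $q = q_0[q_1/V_1,\ldots,q_m/V_m] \in \Gamma_\varepsilon^{r(q)}$.

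The main obstacle will be making this splitter decomposition rigorous for all connected queries, particularly those with substantial branching off the geodesics or with cycles. For a bushy tree-like $q$, the subtrees hanging off each ``splitter segment'' must be absorbed into the appropriate $q_j$ without increasing its radius beyond $\lceil\rho/k_\varepsilon\rceil$; this can be arranged by routing attached subtrees to the nearest splitter, and their own internal structure is handled recursively since the induction applies to every connected subquery of smaller $r$-value. For a non-tree-like $q$, any natural decomposition must break at least one cycle at a single point, effectively increasing the relevant distance by $1$; this is precisely what accounts for the $\text{rad}(q)+1$ appearing in the formula in the non-tree-like case. Once the decomposition is fixed, verifying the recurrence on $r(\cdot)$ and the bound $\tau^*(q_0) \leq k_\varepsilon/2$ is routine.
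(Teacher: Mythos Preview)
Your inductive scheme has a genuine gap in the step where you assert $\tau^*(q_0)\le k_\varepsilon/2$. This is true when $q$ is a path, but fails as soon as the query branches. Concretely, take $\varepsilon=0$ (so $k_\varepsilon=2$) and let $q$ be a spider: a center $v$ with $m$ disjoint paths of length $\rho$ attached. Your splitter at depth $\lceil\rho/2\rceil$ produces one connected top piece $q_1$ (the spider of radius $\lceil\rho/2\rceil$) and $m$ disjoint bottom path pieces $q_2,\ldots,q_{m+1}$. In $q_0$ the view atoms $V_2,\ldots,V_{m+1}$ are pairwise variable-disjoint, so the edge packing $u_2=\cdots=u_{m+1}=1$, $u_1=0$ is feasible and gives $\tau^*(q_0)\ge m$. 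Thus $q_0\notin\Gamma^1_\varepsilon$ for $m\ge 2$, and your induction does not go through. The remark that ``attached subtrees are absorbed into the appropriate $q_j$'' does not help: the branching that matters is the branching of $q_0$ itself at the splitter levels, and that is unavoidable with a depth-slicing decomposition.

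The paper's argument avoids this by using a different decomposition: it covers $q$ by a constant number of (possibly overlapping) paths from the center $v$ to the far vertices, computes each path $L_\ell$ with $\ell\le\text{rad}(q)$ in $\lceil\log(\text{rad}(q))/\log k_\varepsilon\rceil$ rounds in parallel (using the plan from \autoref{ex:queryplan}), and then joins all the path results in a single final round. The crucial point is that in this final join every atom contains $v$, so by \autoref{cor:root:variable} $\tau^*=1$ regardless of how many paths there are. The extra ``$+1$'' for non-tree-like queries comes from the need to route around a back edge with a path of length $\text{rad}(q)+1$. If you want to keep your inductive framing, the fix is to take the $q_j$ to be these radial paths rather than depth slices; then $q_0$ always has a common variable and your $\Gamma^1_\varepsilon$ claim becomes $\tau^*(q_0)=1$, which is what the paper is implicitly using.
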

\end{sloppypar}

\begin{proof}
By definition of $\text{rad}(q)$, there exists some node
$v \in \text{vars}(q)$,
such that the maximum distance of $v$ to any other node in the hypergraph of
$q$ is at most $\text{rad}(q)$. 
If $q$ is tree-like then we can decompose $q$ into a set of at most
$|\text{atoms}(q)|^{\text{rad}(q)}$
(possibly overlapping) paths ${\cal P}$ of length $\leq \text{rad}(q)$,
each having $v$ as one endpoint.
Since it is essentially isomorphic to $L_{\ell}$,
a path of length $\ell \leq \text{rad}(q)$ can be computed 
in at most  $\lceil \log (\text{rad}(q)) / \log k_{\varepsilon} \rceil$
rounds using the query plan from \autoref{prop:multistep} together with 
repeated use of the one-round HC algorithm for paths of length $k_\varepsilon$
as shown in \autoref{th:onestep:upper} for $\tau=1/(1-\varepsilon)$.
Moreover, all the paths in ${\cal P}$ can be computed in parallel, because
 $|{\cal P}|$ is a constant depending only on $q$.
Since every path will contain variable $v$, we can compute the join of
all the paths in one final round without any replication.
The only difference for general connected queries is that $q$ may also contain
atoms that join vertices at distance $\text{rad}(q)$ from $v$ that are
not on any of the paths of length $\text{rad}(q)$ from $v$: these
can be covered using paths of length $\text{rad}(q)+1$ from $v$.
\end{proof}

As an application of this proposition, \autoref{tab:complexity} shows
the number of  rounds required by different types of
queries.

\begin{table}
  \centering
  \begin{tabular}{|c|c|c|c|} \hline
  q & $\varepsilon$           & $r$                     & $r = f(\varepsilon)$ \\
    query    & space exponent        & rounds for $\varepsilon=0$           & rounds/space tradeoff\\ \hline

$C_k$ & $1-2/k$ & $\lceil \log k \rceil$ & $\sim \frac{\log k}{\log(2/(1-\varepsilon))}$ \\ \hline
$L_k$ & $1-\frac{1}{\lceil k/2 \rceil}$ & $\lceil \log k \rceil$ & $\sim \frac{\log k}{\log(2/(1-\varepsilon))}$ \\ \hline
$T_k$ & $0$ & 1 & NA \\ \hline
$SP_k$ & $1-1/k$ & 2 & NA \\ \hline
  \end{tabular}

  \caption{The tradeoff between space and communication rounds for several queries.}
  \label{tab:complexity}
\end{table}

\subsection{Lower Bounds for Multiple Rounds}

Our lower bound results for multiple rounds are restricted in two
ways: they apply only to an \mpc\ model where communication at
rounds $\geq 2$ is of a restricted form, and they match the upper
bounds only for a restricted class of queries.

\subsubsection{Tuple-Based \mpc} 

\label{subsec:multiround:upper}

Recall that $M^1_u = (M^1_{1u}, \ldots,
M^1_{\ell u})$, where $M^1_{ju}$ denotes the message sent during round
1 by the input server for $S_j$ to the worker $u$.  Let $I$ be the
input database instance, and $q$ be the query we want to compute.  A
{\em join tuple} is any tuple in $q'(I)$, where $q'$ is
any connected subquery of $q$.

The {\em tuple-based} \mpc($\varepsilon$) model imposes the following
two restrictions during rounds $r \geq 2$, for every worker $u$: (a)
the message $M^r_{uv}$ sent to $v$ is a set of join tuples, and (b)
for every join tuple $t$, the worker $u$ decides whether to include
$t$ in $M^r_{uv}$ based only on $t, u, v, r$ and $M^1_{ju}$, for all
$j$ s.t. $t$ contains a base tuple in $S_j$.

The restricted model still allows unrestricted communication during
the first round; the information $M^1_u$ received by server $u$ in the
first round is available throughout the computation.  However, during
the following rounds, server $u$ can only send messages consisting of
join tuples, and, moreover, the destination of these join tuples
can depend only on the tuple itself and on $M^1_u$.  Since a join tuple
is represented using $\Theta(\log n)$ bits, each server receives
$O(n/p^{1-\varepsilon})$ join tuples at each round.  
For convenience, when we have fixed the constant $c$ in the bound on the 
number bits or tuples received by each processor at each step, we refer
to the algorithm as a tuple-based \mpc($\varepsilon,c$) algorithm.

The restriction of communication to join tuples (except for the first round
during which arbitrary (e.g., statistical) information can be sent) is 
natural and the tuple-based \mpc\ model captures a wide variety of algorithms
including those based on MapReduce.
Since the servers can perform arbitrary inferences
based on the messages that they receive, even a limitation to messages that
are join tuples starting in the second round, without a restriction on how
they are routed, would still essentially have been equivalent to the fully
general \mpc\ model: For example, any server wishing to send a sequence of
bits to
another server can encode the bits using a sequence of tuples that the two
exchanged in previous rounds, or (with slight loss in efficiency) using the 
understanding that the tuples themselves are not important, but some
arbitrary fixed Boolean function of those tuples is the true message being
communicated.  This explains the need for the condition on routing tuples
that the tuple-based \mpc\ model imposes.

We now describe 
the lower bound for multiple rounds in the tuple-based \mpc\ model.

\subsubsection{A Lower Bound}

We give here a general lower bound for connected, conjunctive queries,
and show how to apply it to $L_k$, to tree-like queries, and to $C_k$;
these results prove \autoref{th:intro2}.  We postpone the proof to the
next subsection.

\begin{definition} \label{def:m} Let $q$ be a connected, conjunctive
  query.  A set $M\subseteq\text{atoms}(q)$ is {\em $\varepsilon$-good
    for $q$} if it satisfies:
    \begin{packed_enum}
    \item Every subquery of $q$ that is in $\Gamma_{\varepsilon}^1$
      contains at most one atom in $M$. ($\Gamma_{\varepsilon}^1$
      defined in Sec.~\ref{subsec:multiround:upper})
    \item $\chi(\contracted) = 0$, where $\contracted =
      \text{atoms}(q) - M$.  (Hence by \autoref{lemma:chi}, $\chi(q/
      \contracted)=\chi(q)$.  This condition is equivalent to each
      connected component of $\contracted$ being tree-like.)
    \end{packed_enum}
    An {\em $(\varepsilon,r)$-plan} $\cal M$ is a sequence
    $M_1,\ldots, M_r$, with $M_0=\text{atoms}(q)\supset M_1
    \supset \cdots M_r$ such that
    (a) for all $j\in [r]$, $M_{j+1}$ is $\varepsilon$-good for
      $q/\contracted_j$ where $\contracted_j=\text{atoms}(q)-M_j$, and
    (b) $q/\contracted_r\;\notin \Gamma^1_\varepsilon$.
\end{definition}

\begin{theorem} \label{th:multiround} If $q$ has a
  $(\varepsilon,r)$-plan then every randomized algorithm
  running in $r+1$ rounds on the tuple-based \mpc($\varepsilon$) model
  with $p=\omega(1)$ processors fails to compute $q$ with probability
  $\Omega(n^{\chi(q)})$.
\end{theorem}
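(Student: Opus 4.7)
I will proceed by induction on the length $r$ of the $(\varepsilon,r)$-plan $\mathcal{M} = M_1 \supset \cdots \supset M_r$. The base case $r=0$ is exactly the single-round lower bound: condition (b) of \autoref{def:m} gives $q = q/\contracted_0 \notin \Gamma^1_\varepsilon$, equivalently $\varepsilon < 1-1/\tau^*(q)$, so \autoref{cor:yao} already produces the claimed $\Omega(n^{\chi(q)})$ failure probability in $r+1 = 1$ round.

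For the inductive step, assume the statement for every query admitting an $(\varepsilon, r-1)$-plan, and consider $q$ with plan $\mathcal{M}$. The strategy is to argue that, after round $1$, the task faced by rounds $2,\ldots,r+1$ is effectively an $r$-round tuple-based \mpc{} computation of the contracted query $q' := q/\contracted_1$ on a suitably reduced random instance, and that $M_2,\ldots,M_r$ forms an $(\varepsilon,r-1)$-plan for $q'$. The latter is a direct bookkeeping check: $q'/(\contracted_j - \contracted_1)$ coincides with $q/\contracted_{j+1}$, so condition (a) of \autoref{def:m} transfers by \autoref{lemma:chi}(b), condition (b) transfers because a $\Gamma^1_\varepsilon$ subquery of $q'$ lifts along the contraction to a $\Gamma^1_\varepsilon$ subquery of $q$, and by definition of the plan we have $q/\contracted_r \notin \Gamma^1_\varepsilon$.

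The main technical content is a multi-atom strengthening of \autoref{prop:friedgut} applied to the first round. Because $M_1$ is $\varepsilon$-good for $q$, every connected subquery in $\Gamma^1_\varepsilon$ touches at most one atom of $M_1$, and each tree-like connected component of $\contracted_1$ lies in $\Gamma^1_\varepsilon$ and is already computable in one round by \autoref{th:onestep:upper}. Consequently, applying Friedgut's inequality to the extended query obtained from $q'$ by adjoining unary "already-known" atoms for every variable in the contracted tree-like components, and using \autoref{lem:entropy_ratio} atomwise, yields the analog of the one-round knowledge bound: for each server and each $S_j \in M_1$, $\E[|K_{m^1(I)}(S_j)|] \le f_j \cdot n$ with $\sum_{j: S_j \in M_1} f_j(a_j - 1) \le c(a-\ell)/p^{1-\varepsilon}$, exactly as in \autoref{subsec:onestepB}. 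Thus after round $1$ each server holds a randomly thinned sub-instance of $q'$ in which each $M_1$-atom has expected size $\le f_j \cdot n$.

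Rounds $2,\ldots,r+1$ must then compute $q'$ on this reduced instance; the tuple-based restriction is essential here, because the routing of any join tuple in rounds $\ge 2$ depends only on its content and the round-$1$ transcript, which prevents servers from smuggling non-tuple information forward, so the remaining computation can be simulated by an $r$-round tuple-based \mpc($\varepsilon$) algorithm for $q'$. Invoking the inductive hypothesis on $q'$ with plan $M_2,\ldots,M_r$, combining with $\chi(q) = \chi(q')$ (from \autoref{lemma:chi}(b) together with $\chi(\contracted_1) = 0$) and \autoref{lem:expected_size}, and running the same Yao argument as in \autoref{cor:yao}, delivers the claimed $\Omega(n^{\chi(q)})$ failure probability. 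The principal obstacle will be formalising the "effective reduction" from rounds $2,\ldots,r+1$ of the original algorithm to an $r$-round tuple-based algorithm for $q'$: the round-$1$ messages may be correlated across servers and may contain arbitrary global statistics about $\contracted_1$-atoms, so care is needed to argue that this common information does not help rounds $\ge 2$ beyond what the inductive hypothesis already allows. This is precisely the point at which the tuple-based routing restriction is used, ensuring that any information actually leveraged in later rounds is carried by the join tuples themselves and routed by a content-plus-round-1 function, matching the model in which the inductive hypothesis is stated.
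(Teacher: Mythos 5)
Your high-level plan --- contract iteratively along the $(\varepsilon,r)$-plan, bound what round~$1$ can learn, and recurse on the reduced query --- is indeed the skeleton of the paper's argument (the paper's \autoref{lemma:contraction}, \autoref{lemma:onebyp}, and \autoref{th:strong-multiround}). However, the step you yourself flag as ``the principal obstacle'' is precisely where the proof actually lives, and as written it is a genuine gap, not a detail. Two concrete problems.

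First, a factual error: you assert that ``each tree-like connected component of $\contracted_1$ lies in $\Gamma^1_\varepsilon$ and is already computable in one round.'' This is false in general --- $L_3$ is tree-like with $\tau^*=2$, hence $L_3\notin\Gamma^1_0$ --- and, more to the point, it is not how the paper handles $\contracted$. The contraction lemma does not compute the $\contracted$-portion of $q$ in round~$1$; it fixes a single matching database $i_\contracted$ on the $\contracted$-atoms, defines a per-variable permutation $\bar\sigma$ from the tree-like components, and shows that running $A$ with the $\contracted$-relations replaced by identity matchings computes $q/\contracted$ up to the change of variables $\bar\sigma$. Condition~(2) of $\varepsilon$-goodness ($\chi(\contracted)=0$) is what makes $\bar\sigma$ well-defined; it has nothing to do with $\contracted$ being in $\Gamma^1_\varepsilon$.

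Second, the reduction from rounds $2,\ldots,r+1$ of $A$ to an $r$-round algorithm for $q/\contracted_1$ is not addressed. The paper's ``retraction'' construction is the key device: because routing in rounds $\ge 2$ is a function only of $(t,u,v,r,M^1_{ju})$, every base tuple's full set of destinations over the entire computation is determined already in round~$1$, so all base-tuple deliveries can be front-loaded to round~$1$ and round~$2$ can be skipped entirely --- at the cost of inflating the per-round communication bound from $c$ to $c(r+1)$, which the analysis must and does track. What cannot be front-loaded is knowledge of join tuples spanning two or more $M_1$-atoms discovered in round~$1$; condition~(1) of $\varepsilon$-goodness guarantees these come only from subqueries outside $\Gamma^1_\varepsilon$, and the paper bounds their contribution separately via the set $J^{A,q}_\varepsilon$ and the semijoin lemma (\autoref{lemma:onebyp}), yielding the explicit per-round error term $g_{q',c(r+1)}\cdot\E[|q(I)|]/p^{\tau^*(\mathcal M)(1-\varepsilon)-1}$. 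Without the retraction and the $J_\varepsilon$ bookkeeping, a naive induction has no way to rule out that round-$1$ global statistics on $i_{\contracted_1}$ (which are common knowledge across servers) give rounds $\ge 2$ an advantage that a fresh $r$-round algorithm for $q/\contracted_1$ would not have. You correctly sense that the tuple-based restriction is the crux, but you need to instantiate it as the retraction argument to make the induction go through.
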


We prove the theorem in the next section.  Here, we show how to apply
it to three cases.  Assume $p=\omega(1)$, and recall that
$k_{\varepsilon} = 2 \lfloor 1/(1-\varepsilon) \rfloor$
(\autoref{ex:queryplan}).  First, consider $L_k$.

\begin{lemma} \label{lemma:lk:lower} Any tuple-based
  \mpc($\varepsilon$) algorithm that computes $L_k$ needs at least $
  \lceil \log k / \log k_{\varepsilon} \rceil$ rounds.
\end{lemma}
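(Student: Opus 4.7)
\medskip\noindent\textbf{Proof plan for \autoref{lemma:lk:lower}.}

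The plan is to invoke \autoref{th:multiround}, which says that if $L_k$ admits an $(\varepsilon,r)$-plan, then any tuple-based \mpc($\varepsilon$) algorithm using $r+1$ rounds must fail with non-negligible probability; hence $L_k$ requires at least $r+2$ rounds. So the whole task reduces to constructing an $(\varepsilon,r)$-plan $M_1\supset M_2\supset\cdots\supset M_r$ for $L_k$ with $r$ as large as possible, namely $r\ge \lceil \log k/\log k_\varepsilon\rceil-2$.

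First I would identify the one-round class $\Gamma^1_\varepsilon$ concretely for paths: a connected subquery of $L_m$ is itself some $L_{m'}$, and $\tau^*(L_{m'})=\lceil m'/2\rceil$, so $L_{m'}\in\Gamma^1_\varepsilon$ iff $\lceil m'/2\rceil\le 1/(1-\varepsilon)$, i.e.\ iff $m'\le k_\varepsilon$. Thus the $\varepsilon$-goodness conditions for a subset $M$ of atoms of some path $L_m$ simplify drastically: (1) translates to ``between any two consecutive atoms of $M$ there are at least $k_\varepsilon-1$ other atoms,'' i.e.\ the atoms picked in $M$ are spaced at positions $\geq k_\varepsilon$ apart along the path; (2) is automatic because the connected components of $\overline M$ are themselves sub-paths, which are tree-like ($\chi=0$).

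Next I would build the plan inductively. Set $q_0=L_k$ and $k_0=k$. Given $q_j=L_{k_j}$, let $M_{j+1}$ be a maximally-spaced set of atoms along this path, i.e.\ pick atoms at positions $1,1+k_\varepsilon,1+2k_\varepsilon,\ldots$; this yields $|M_{j+1}|=k_{j+1}=\lceil k_j/k_\varepsilon\rceil$. The key structural observation is that contracting all non-selected atoms of $q_j$ collapses each inter-selected segment to a single vertex, so $q_{j+1}=q_j/\overline{M_{j+1}}$ is again a path, with $k_{j+1}$ atoms. By the analysis above $M_{j+1}$ is $\varepsilon$-good for $q_j$. Continue this process as long as the current path is still outside $\Gamma^1_\varepsilon$, i.e.\ while $k_j>k_\varepsilon$, and let $r$ be the largest index for which this still holds. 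Unrolling the recursion $k_{j+1}=\lceil k_j/k_\varepsilon\rceil$ gives $k_r > k_\varepsilon$ as long as $k_\varepsilon^{\,r+1}<k$, so we can take $r=\lceil \log k/\log k_\varepsilon\rceil-2$. At this $r$ condition (b) of \autoref{def:m} holds by design: $q_r=L_{k_r}\notin\Gamma^1_\varepsilon$.

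Finally, \autoref{th:multiround} applied to this $(\varepsilon,r)$-plan rules out all algorithms using $r+1$ rounds, yielding the lower bound of $r+2=\lceil \log k/\log k_\varepsilon\rceil$ rounds on $L_k$. The only real care needed is in the ceiling/floor bookkeeping of the recursion $k_{j+1}=\lceil k_j/k_\varepsilon\rceil$, which is where I expect the bulk of the (easy) calculation to live; conceptually, the main content of the proof is purely the translation of the $\varepsilon$-goodness conditions for $L_k$ into the simple spacing constraint, plus the observation that contracting a path yields a path.
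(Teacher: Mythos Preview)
Your proposal is correct and follows essentially the same route as the paper: identify that connected subqueries of a path are themselves paths and that $L_{m'}\in\Gamma^1_\varepsilon$ iff $m'\le k_\varepsilon$; translate $\varepsilon$-goodness into the spacing condition ``consecutive chosen atoms are $\ge k_\varepsilon$ apart'' (with condition~(2) automatic since sub-paths are tree-like); then pick every $k_\varepsilon$-th atom so that contraction sends $L_m$ to $L_{\lceil m/k_\varepsilon\rceil}$, and iterate.

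The only notable difference is bookkeeping: the paper states the plan has depth $r=\lceil\log k/\log k_\varepsilon\rceil-1$, whereas you take $r=\lceil\log k/\log k_\varepsilon\rceil-2$ and then conclude ``needs $\ge r+2$ rounds''. Your arithmetic is the one that cleanly matches the statement of \autoref{th:multiround} (an $(\varepsilon,r)$-plan rules out $r+1$ rounds, hence $\ge r+2$ are required) and is consistent with the upper bound for $L_k$; the paper's stated $r$ is off by one at powers $k=k_\varepsilon^m$, since after $m-1$ contractions one lands on $L_{k_\varepsilon}\in\Gamma^1_\varepsilon$, violating condition~(b). So your version is not merely equivalent but slightly cleaner on the boundary cases. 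You should, as you note, separately dispose of the trivial range $k\le k_\varepsilon$ (where the claimed bound is $\le 1$ round).
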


\begin{proof}
  We show inductively how to produce an $(\varepsilon,r)$-plan for
  $L_k$ with $r=\lceil \log k / \log k_{\varepsilon} \rceil-1$.
  Recall that $\Gamma^1_\varepsilon$ consists of connected queries for
  which $\tau^*(q) \leq 1/(1-\varepsilon)$: thus, the subqueries of
  $L_k$ that are in $\Gamma^1_\varepsilon$ are precisely queries of
  the form $S_j(x_{j-1},x_j),S_{j+1}(x_j,x_{j+1}),\ldots,
  S_{j+k_0-1}(x_{j+k_0-2},x_{j+k_0-1})$, in other words they are
  isomorphic to $L_{k_0}$, where $k_0\leq k_\varepsilon$.  Therefore,
  we obtain an $\varepsilon$-good set $M$ for $L_\ell$ if we include
  every $k_{\varepsilon}$-th atom in $L_\ell$, starting with the first
  atom: $S_1, S_{k_\varepsilon+1}, S_{2k_\varepsilon+1},\ldots$ Then
  $L_k/\contracted_1 = S_1(x_0,x_1), S_{k_\varepsilon+1}(x_1,
  x_{k_\varepsilon+1}), S_{2k_\varepsilon+1}(x_{k_\varepsilon+1},
  x_{2k_\varepsilon+1}), \ldots$ is isomorphic to $L_{\lceil
    k/k_{\varepsilon} \rceil}$.  Similarly, for $j=2,..,r$, choose
  $M_j$ to consist of every $k_{\varepsilon}$-th atom starting at the
  first atom in $L_k/\contracted_{j-1}$.  Finally,
  $L_k/\contracted_{j-1}$ will be isomorphic to a path query of length
  $L_\ell$ for some $\ell\ge k_{\varepsilon} +1$ and hence is not in
  $\Gamma^1_\epsilon$.  Thus $M_1,\ldots, M_r$ is the desired
  $(\varepsilon,r)$-plan and the lower bound follows from
  \autoref{th:multiround}.
\end{proof}

Combined with \autoref{ex:queryplan}, it implies that $L_k$ requires
precisely $ \lceil \log k / \log k_{\varepsilon} \rceil$ rounds on the
tuple-based \mpc($\varepsilon$).

Second, we give a lower bound for tree-like queries, and for that we
use a simple observation:
\begin{proposition} \label{prop:tree-like}
If $q$ is a tree-like query, and $q'$ is any connected subquery of $q$, $q'$ needs
at least as many rounds as $q$ in the tuple-based \mpc($\varepsilon$) model. 
\end{proposition}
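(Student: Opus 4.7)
My plan is to prove the statement by a reduction: I will show that any $r$-round tuple-based \mpc($\varepsilon$) algorithm $A'$ for $q'$ can be converted into an $r$-round algorithm for $q$, which implies the stated lower bound on the rounds needed by $q'$ via contraposition. The key structural facts I will use are that, because $q'$ is a connected subquery of the tree-like query $q$, the atoms of $q\setminus q'$ form a forest of ``branches'' each attached to $q'$ at a single interface variable; each branch, together with its interface variable, is itself a connected subquery of $q$ and hence tree-like by the property recalled just before \autoref{lemma:chi}; and, on a matching database, a tree-like connected subquery rooted at a vertex $y$ defines a partial function mapping any value of $y$ to the unique consistent extension to the branch's other variables, since every attribute of a matching is a key.

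The reduction takes an input instance $I$ for $q$, feeds its $q'$-restriction to $A'$, and in parallel pre-processes the remaining branch data so that each extension can be applied locally at the moment $A'$ produces the corresponding answer of $q'$. Concretely, in round $1$ the input servers for atoms in $q\setminus q'$ hash-distribute their data by the interface variables, piggybacking on $A'$'s first-round messages; for branches of depth greater than $1$, successive rounds compose matchings along each branch using HC-style hashing (\autoref{th:onestep:upper}), interleaved with the rounds of $A'$. After $r$ rounds, each processor that holds an answer of $q'$ also holds the unique tuple extending it to an answer of $q$, and it outputs the joined tuple locally with no additional round of communication.

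The main obstacle, and where I expect the technical core of the argument to lie, is coordinating this pre-distribution with $A'$'s routing, which we must treat as a black box: we cannot predict which processor will produce which answer of $q'$, yet we need every processor to hold the correct extension data when the moment comes. My plan is to use controlled replication of $O(p^{\varepsilon})$ per branch atom, which stays within the $O(n/p^{1-\varepsilon})$ per-round budget because each branch has only $n$ tuples on a matching database and tree-likeness ensures that the composed ``extension matchings'' never exceed $n$ tuples either. The delicate steps will be verifying that this replication budget is truly sufficient against any tuple-based routing $A'$ may use, and that the composition of deep branches fits inside $A'$'s round count rather than requiring extra rounds of its own; both rest essentially on the matching-database uniqueness together with the tree-like structure of $q$.
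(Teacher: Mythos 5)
Your reduction runs in the opposite direction from the paper's, and that direction cannot work. The paper's proof goes from an $r$-round algorithm $A$ for the \emph{larger} query $q$ to an $r$-round algorithm for the subquery $q'$: embed each $q'$-instance into a $q$-instance by assigning the identity permutation $\set{(1,1,\ldots),(2,2,\ldots),\ldots}$ to every relation of $q\setminus q'$, run $A$ unchanged on the padded instance, and project the answers onto $\text{vars}(q')$; tree-likeness guarantees the projection recovers $q'$ exactly. This establishes that $q$ needs at least as many rounds as any of its connected subqueries $q'$, which is the inequality actually used in the corollary that follows the proposition (pushing the $L_{\text{diam}(q)}$ lower bound from \autoref{lemma:lk:lower} up to $q$). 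As printed, the proposition statement has $q$ and $q'$ transposed; you read it literally and set out to prove the reverse inequality, but that inequality is false: with $\varepsilon=0$, take $q=L_4$ and $q'=L_2$; then $L_2\in\Gamma^1_0$ is solvable in one round while $L_4$ provably needs two.

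Even apart from the direction, your construction does not close, and you identify the crux yourself: you treat $A'$ as a black box, so you cannot know in advance which server will produce which answer of $q'$, and in the tuple-based model routing after round one depends on first-round messages that are not available before round one. Delivering the (unique, matching-determined) branch extension keyed by an interface value to whichever server eventually assembles the matching $q'$-answer is a join on the interface variable, and without prior knowledge of $A'$'s placement it costs an extra round. Worse, deep branches defeat the interleaving outright: with $q=L_k$ and $q'=L_2$, the branch is essentially $L_{k-2}$ and composing it takes $\Theta(\log k)$ HC rounds no matter how you overlap them with $A'$'s single round, so the plan cannot fit inside $A'$'s round budget. The padding argument in the other direction avoids all of this because it adds no new data and no communication beyond what $A$ already performs.
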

\begin{proof}
  Given any tuple-based \mpc($\varepsilon$) algorithm $A$ for
  computing $q$ in $r$ rounds we construct a tuple-based
  \mpc($\varepsilon$) algorithm $A'$ that computes $q'$ in $r$ rounds.
  $A'$ will interpret each instance over $q'$ as part of
  an instance for $q$ by using the relations in $q'$ and using the
  identity permutation ($S_j =
  \set{(1,1,\ldots),(2,2,\ldots),\ldots}$) for each relation in $q
  \setminus q'$.  Then, $A'$ runs exactly as $A$ for $r$ rounds; after
  the final round, $A'$ projects out for every tuple all the variables
  not in $q'$.  The correctness of $A'$ follows from the fact that $q$
  is tree-like.
\end{proof}

Define $\text{diam}(q)$, the {\em diameter} of a query $q$, to be the
longest distance between any two nodes in the hypergraph of $q$.  In
general, $\text{rad}(q) \leq \text{diam}(q) \leq 2\ \text{rad}(q)$.
For example, $\text{rad}(L_k) = \lfloor k/2 \rfloor$,
$\text{diam}(L_k)=k$ and $\text{rad}(C_k) = \text{diam}(C_k)=\lfloor
k/2 \rfloor$. \autoref{lemma:lk:lower} and \autoref{prop:tree-like}
imply:

\begin{sloppypar}
\begin{corollary}
  Any tuple-based \mpc($\varepsilon$) algorithm that computes a
  tree-like query $q$ needs at least 
  $\lceil \log_{k_{\varepsilon}} (\text{diam}(q)) \rceil$ rounds.
\end{corollary}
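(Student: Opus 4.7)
The plan is to reduce to the $L_k$ lower bound from \autoref{lemma:lk:lower}: find a connected subquery of $q$ isomorphic (or closely analogous) to $L_d$ for $d=\text{diam}(q)$, apply that lemma to it, and then use \autoref{prop:tree-like} to transfer the lower bound back to $q$.

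Concretely, I would pick variables $u,v$ in the hypergraph of $q$ at distance exactly $d$ and fix a shortest path between them. Because $q$ is tree-like, this path traverses exactly $d$ atoms $T_1,\ldots,T_d$, with $T_i$ containing two consecutive path vertices $v_{i-1}$ and $v_i$ (where $v_0=u$ and $v_d=v$). Let $q'$ be the connected subquery consisting of these atoms; by the closure property noted at the end of \autoref{subsec:cq}, $q'$ is again tree-like. In the binary case every $T_i$ has arity $2$, so $q'$ is literally isomorphic to $L_d$ and \autoref{lemma:lk:lower} yields the $\lceil\log_{k_\varepsilon}(d)\rceil$-round lower bound for $q'$, which \autoref{prop:tree-like} lifts to $q$.

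The main obstacle is the non-binary case, where each $T_i$ may carry additional variables and $q'$ is not literally $L_d$. I would handle this by re-running the $(\varepsilon,r)$-plan from the proof of \autoref{lemma:lk:lower} directly on $q'$: mark every $k_\varepsilon$-th atom along the path and verify the two conditions of \autoref{def:m}. Condition (i) --- no $\Gamma^1_\varepsilon$-subquery of $q'$ contains two marked atoms --- still holds because the tree-like structure of $q'$ rules out shortcuts, so any connected subquery with $\tau^*\le 1/(1-\varepsilon)$ can span at most $k_\varepsilon$ consecutive path atoms; condition (ii) --- $\chi(\contracted_j)=0$ at every stage --- follows from \autoref{lemma:chi}(a) applied componentwise to the tree-like pieces that sit between consecutive marked atoms. \autoref{th:multiround} then gives the desired $\lceil\log_{k_\varepsilon}(d)\rceil$-round lower bound for $q'$, which \autoref{prop:tree-like} transfers to $q$, yielding the corollary.
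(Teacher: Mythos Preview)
Your proposal is correct and matches the paper's approach: the paper simply states that the corollary follows from \autoref{lemma:lk:lower} and \autoref{prop:tree-like}, i.e., take a diameter-realizing path subquery and transfer the $L_d$ lower bound to $q$. You go a step further than the paper by explicitly handling the non-binary case (where the path subquery $q'$ is not literally $L_d$) via a direct $(\varepsilon,r)$-plan construction on $q'$; this extra care is sound---the tree-like structure indeed forces the extra variables of each path atom to be private, so connected subqueries of $q'$ are consecutive segments with the same $\tau^*$ as the corresponding $L_k$---and fills in a detail the paper leaves implicit.
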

\end{sloppypar}

\begin{sloppypar}
  Let us compare the lower bound $r_{\text{low}}=\lceil
  \log_{k_{\varepsilon}} (\text{diam}(q)) \rceil$ and the upper bound
  $r_{\text{up}}=\lceil \log_{k_{\varepsilon}} (\text{rad}(q)) \rceil
  +1$ (\autoref{lem:multitree}): $\text{diam}(q) \leq 2\text{rad}(q)$
  implies $r_{\text{low}} \leq r_{\text{up}}$, while $\text{rad}(q)
  \leq \text{diam}(q)$ implies $r_{\text{up}} \leq r_{\text{low}} +1$.
  The gap between the lower bound and the upper bound is at most 1,
  proving \autoref{th:intro2}.  When $\varepsilon < 1/2$, these bounds
  are matching, since $k_\varepsilon = 2$ and $2\text{rad}(q)-1\leq
  \text{diam}(q)$ for tree-like queries.  The tradeoff between the
  space exponent $\varepsilon$ and the number of rounds $r$ for
  tree-like queries is $r \cdot \log \frac{2}{1-\varepsilon} \approx
  \log (\text{rad}(q))$.
\end{sloppypar}

Third, we study one instance of a non tree-like query:

\begin{lemma}
  Any tuple-based \mpc($\varepsilon$) algorithm that computes the query $C_k$
  needs at least $ \lceil \log (k/(m_{\varepsilon}+1)) / \log
  k_{\varepsilon} \rceil + 1$ rounds, where $m_{\varepsilon} = \lfloor
  2/(1-\varepsilon)\rfloor$.
\end{lemma}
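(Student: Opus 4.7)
The plan is to invoke \autoref{th:multiround} via an explicit $(\varepsilon,r)$-plan for $C_k$ with $r=\lceil\log_{k_\varepsilon}(k/(m_\varepsilon+1))\rceil-1$; the theorem then delivers the claimed lower bound of $r+2=\lceil\log_{k_\varepsilon}(k/(m_\varepsilon+1))\rceil+1$ rounds. I would assume $k>m_\varepsilon$, since otherwise the bound is at most one round and holds trivially. The key preparatory observation is that for any $k'>m_\varepsilon$, the connected subqueries of $C_{k'}$ lying in $\Gamma^1_\varepsilon$ are exactly the paths $L_\ell$ with $\ell\leq k_\varepsilon$: this follows because $\tau^*(C_{k'})=k'/2>1/(1-\varepsilon)$ while $\tau^*(L_\ell)=\lceil\ell/2\rceil\leq\lfloor 1/(1-\varepsilon)\rfloor=k_\varepsilon/2$ exactly when $\ell\leq k_\varepsilon$.

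Setting $k_0=k$, I would define $M_1,\ldots,M_r$ inductively so that $C_k/\overline{M_{i-1}}\cong C_{k_{i-1}}$ throughout. Given $C_{k_{i-1}}$, take $M_i$ to consist of atoms at cyclic positions $1,1+k_\varepsilon,1+2k_\varepsilon,\ldots$ on the current cycle, as many as possible subject to every cyclic gap (including the wraparound) being at least $k_\varepsilon$; this yields $|M_i|=\lfloor k_{i-1}/k_\varepsilon\rfloor=:k_i$. The spacing establishes condition (1) of $\varepsilon$-goodness since every $L_\ell$ subquery with $\ell\leq k_\varepsilon$ meets $M_i$ in at most one atom. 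The complement $\overline{M_i}$ relative to $C_{k_{i-1}}$ splits into $k_i$ disjoint arcs, each a tree-like path, so $\chi(\overline{M_i})=0$ establishes condition (2). Contracting the arcs collapses each into a single vertex, giving $C_{k_{i-1}}/\overline{M_i}\cong C_{k_i}$, and iterating the identity $\lfloor\lfloor n/a\rfloor/a\rfloor=\lfloor n/a^2\rfloor$ produces the clean recursion $k_r=\lfloor k/k_\varepsilon^r\rfloor$.

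To confirm the terminal condition $C_{k_r}\notin\Gamma^1_\varepsilon$, I would split on whether the logarithm is an integer. If $\log_{k_\varepsilon}(k/(m_\varepsilon+1))$ is not an integer, then $r=\lfloor\log_{k_\varepsilon}(k/(m_\varepsilon+1))\rfloor$, so $k_\varepsilon^r<k/(m_\varepsilon+1)$ and hence $k/k_\varepsilon^r>m_\varepsilon+1$; at the integer boundary $k=(m_\varepsilon+1)k_\varepsilon^{r+1}$, we get $k/k_\varepsilon^r=(m_\varepsilon+1)k_\varepsilon\geq 2(m_\varepsilon+1)>m_\varepsilon+1$. Either way $k_r=\lfloor k/k_\varepsilon^r\rfloor\geq m_\varepsilon+1>m_\varepsilon$, and by the preparatory observation $C_{k_r}\notin\Gamma^1_\varepsilon$, certifying the plan. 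Applying \autoref{th:multiround} then delivers the desired bound.

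The main obstacle is the accounting for the cyclic structure: unlike in the path case of \autoref{lemma:lk:lower}, the wraparound gap forces $|M_i|=\lfloor k_{i-1}/k_\varepsilon\rfloor$ rather than the ceiling available for $L_k$, and the iteration must halt while $C_{k_r}$ is strictly larger than $m_\varepsilon$ (otherwise the terminal cycle itself lands in $\Gamma^1_\varepsilon$). Together these shifts are what change the bound from $\lceil\log_{k_\varepsilon}(k)\rceil$ for paths to $\lceil\log_{k_\varepsilon}(k/(m_\varepsilon+1))\rceil+1$ for cycles; verifying that $r=\lceil\log_{k_\varepsilon}(k/(m_\varepsilon+1))\rceil-1$ realizes exactly this balance across both cases of the logarithm is the main arithmetic to pin down, while the rest of the argument is a direct adaptation of the construction for $L_k$.
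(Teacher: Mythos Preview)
Your proposal is correct and follows essentially the same approach as the paper: both proofs construct an $(\varepsilon,r)$-plan by repeatedly selecting atoms spaced at least $k_\varepsilon$ apart around the current cycle so that $C_\ell/\overline{M}\cong C_{\lfloor \ell/k_\varepsilon\rfloor}$, iterate until the residual cycle has length at least $m_\varepsilon+1$, and then invoke \autoref{th:multiround}. Your write-up is more explicit than the paper's sketch (in particular, your verification that $k_r=\lfloor k/k_\varepsilon^r\rfloor\ge m_\varepsilon+1$ for $r=\lceil\log_{k_\varepsilon}(k/(m_\varepsilon+1))\rceil-1$ in both the integer and non-integer cases fills in arithmetic the paper leaves to the reader), but the underlying construction is identical.
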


\begin{proof}
  Observe that any set $M$ of atoms that are (at least)
  $k_\varepsilon$ apart along any cycle $C_\ell$ is $\epsilon$-good
  for $C_\ell$ and $C_\ell/\contracted$ is isomorphic to $C_{\lfloor
    \ell/k_\varepsilon\rfloor}$.  If $k\ge k_{\varepsilon}^r
  (m_\varepsilon+1)$, we can repeatedly choose such $\varepsilon$-good
  sets to construct an $(\varepsilon,r)$-plan $M_1,\ldots, M_r$ such
  that the final contracted query $C_k/\contracted_r$ contains a cycle
  $C_{\ell'}$ with $\ell'\ge m_{\varepsilon} +1$ (and therefore cannot
  be computed in 1 round by any \mpc($\varepsilon$) algorithm).  The
  result now follows from \autoref{th:multiround}.
\end{proof}

Here, too, we have a gap of 1 between this lower bound and the upper
bound in \autoref{lem:multitree}. Consider $C_5$ and $\varepsilon=0$;
$\text{rad}(C_5)=\text{diam}(C_5)=2$,
$k_\varepsilon=m_\varepsilon=2$. The lower bound is $\lfloor \log 5/3
\rfloor + 1 = 2$ rounds, the upper bound is $\lceil \log 3 \rceil + 1
= 3$ round.  The exact number of rounds for $C_5$ is open.

As a final application, we show how to apply \autoref{lemma:lk:lower}
to show that transitive closure requires many rounds. In particular, we consider
the problem \textsc{Connected-Components}, for which, given an undirected graph 
$G = (V,E)$ the requirement is to label the nodes of each connected component
with the same label, unique to that component.

\begin{theorem}
\label{th:connected-comps}
  For any fixed $\varepsilon < 1$, there is no $p$-server algorithm in the
  tuple-based \mpc($\varepsilon$) model that uses $o(\log p)$ rounds
  and computes \textsc{Connected-Components} on an arbitrary input graph.
\end{theorem}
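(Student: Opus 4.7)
The plan is to reduce from the lower bound for $L_k$ in \autoref{lemma:lk:lower}. Suppose for contradiction that some tuple-based \mpc($\varepsilon$) algorithm $\mathcal{A}$ computes \textsc{Connected-Components} in $r=o(\log p)$ rounds on every input graph. I would show how to use $\mathcal{A}$ to solve $L_k$ on matching databases with $k=p$ in $r+1$ rounds, which contradicts \autoref{lemma:lk:lower}: for fixed $\varepsilon<1$ the integer $k_\varepsilon=2\lfloor 1/(1-\varepsilon)\rfloor$ is bounded, so $\lceil \log k/\log k_\varepsilon\rceil=\Theta(\log p)$.

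Given an $L_k$ matching instance $S_1,\ldots,S_k$ of domain size $n$, construct the layered undirected graph $G$ on vertex set $\{(i,a):0\le i\le k,\ a\in [n]\}$ with an edge $\{(j-1,a),(j,b)\}$ for every tuple $(a,b)\in S_j$. Because each $S_j$ is a permutation, $G$ decomposes into $n$ vertex-disjoint paths of length $k$, whose vertex sequences are exactly the answers to $L_k$. Run $\mathcal{A}$ on $G$ for $r$ rounds to obtain a labeling that is constant on each path; during $\mathcal{A}$'s rounds, labels can be propagated along tuples, so by the end of round $r$ every server holding a base tuple $S_j(a,b)$ also holds the CC label $\lambda$ of its endpoints, packaged into a single join tuple $t=(S_j(a,b),\lambda)$. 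In one more round, route each such $t$ to the server specified by a fixed hash of $\lambda$; this routing is tuple-based because the label is part of the tuple's own content. A server that receives all $k$ tuples sharing a common label reassembles them in order along the path and emits the corresponding $L_k$ answer. Uniform hashing places $O(n/p)$ paths (i.e., $O(N/p)$ bits) on each server in expectation, safely within the $O(N/p^{1-\varepsilon})$ budget, with deviations bounded via a Chernoff argument as in \autoref{th:onestep:upper}.

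The main obstacle is confirming that the final aggregation round genuinely respects the tuple-based restriction of Section~\ref{subsec:multiround:upper}: routing in rounds $\geq 2$ may depend only on the tuple content, source, destination, round, and the first-round messages of relations contributing to the tuple. I would resolve this by having $\mathcal{A}$ attach label information inside the join tuples it already exchanges in its own rounds, so that the ``labeled edge'' tuple $t$ handed to the last round carries $\lambda$ as part of its own content; hash-based routing of $t$ by $\lambda$ is then a function of $t$ alone, as required. Given this packaging, the composed algorithm computes $L_k$ in $r+1=o(\log p)=o(\log k)$ rounds, contradicting \autoref{lemma:lk:lower} and proving the theorem.
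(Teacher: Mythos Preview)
Your reduction idea---build a layered graph whose paths encode $L_k$ answers, run the CC algorithm, then reassemble---is exactly the paper's approach, but the argument has a real gap at the point where you invoke \autoref{lemma:lk:lower}.

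\textbf{Main gap.} \autoref{lemma:lk:lower} is proved via \autoref{th:multiround}, whose conclusion (failure probability $\Omega(n^{\chi(q)})$) hides a constant that depends on the query $q$: concretely, it comes from comparing $\gamma_c(q,\mathcal{M})$ against $p^{\tau^*(\mathcal{M})(1-\varepsilon)-1}$, and the paper explicitly remarks after \autoref{th:strong-multiround} that this gives a nontrivial bound only when $q$ is of constant size. You take $k=p$, so $L_k$ is \emph{not} constant-sized, and one must track how $\gamma_c(L_k,\mathcal{M})$ grows with $k$. The paper does exactly this: it invokes the quantitative \autoref{th:strong-multiround}, bounds $\gamma_c(L_k,\mathcal{M})\le c' k(\log k)^{c''}$, and then---crucially---chooses $k=\lfloor p^{\delta}\rfloor$ with $\delta=1/(2t)$ (for $\varepsilon=1-1/t$) so that this polynomial-in-$k$ blowup is dominated by $p^{\tau^*(\mathcal{M})(1-\varepsilon)-1}=p^{1/t}$. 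With your choice $k=p$, the factor $\gamma_c(L_k,\mathcal{M})\approx p(\log p)^{c''}$ swamps $p^{1/t}$ and the bound becomes vacuous. So you need both the explicit-constant theorem and the smaller choice of $k$; a bare appeal to \autoref{lemma:lk:lower} does not suffice.

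\textbf{Secondary issue.} Your device of packaging the CC label $\lambda$ into the tuple and routing by $\lambda$ does not obviously respect the tuple-based restriction: join tuples are, by definition, elements of $q'(I)$ for connected subqueries $q'$, and routing in rounds $\ge 2$ may depend only on such a tuple together with \emph{first-round} messages from the relations it contains---not on a label computed over many rounds. The paper also glosses over this step (``executes a join on the labels''), so some informality is tolerated here, but your justification as written appeals to exactly the kind of dependence the model forbids.
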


The basic idea of the proof of this theorem is to construct input graphs
for \textsc{Connected-Components} whose components correspond to the output
tuples for $L_k$ for $k=p^\delta$ for some small constant $\delta$ depending
on $\varepsilon$ and use the round lower bound for solving $L_k$.
In this instance, the size of the query $L_k$ is not fixed, but 
depends on the number of processors $p$.  The lower bound in
\autoref{th:multiround} does not apply in this case but in the next section
we will prove a more precise and general result, 
\autoref{th:strong-multiround},
from which we can derive both \autoref{th:multiround} and
\autoref{th:connected-comps}.

\subsubsection{Proofs of Theorems~\ref{th:multiround} and~\ref{th:connected-comps}}

\begin{sloppypar}
Given an $(\varepsilon,r)$-plan $\cal M$ (\autoref{def:m}) for a
query $q$, define $\tau^*(\cal M)$ to be the minimum of
$\tau^*(q/\contracted_r)$, and the minimum of $\tau^*(q')$, where $q'$
ranges over all connected subqueries of $q/\contracted_{j-1}$, $j\in
[r]$, such that $q' \not\in \Gamma^1_\varepsilon$.  Since every $q'$
satisfies $\tau^*(q')(1-\varepsilon)> 1$ (by $q' \not\in
\Gamma^1_\varepsilon$), and $\tau^*(q/\contracted_r)
(1-\varepsilon)>1$ (by the definition of goodness), we have
$\tau^*(\mathcal{M})(1-\varepsilon)>1$.
Further, define the set 
\begin{align*}
\mathcal{S}(q) = \setof{q'}{q' \notin \Gamma^1_{\varepsilon},\  
q' \text{ is a minimal connected subquery of } q}.
\end{align*}
and let
\begin{align*}
\gamma_{c}(q,\mathcal{M})=
g_{q/\contracted_r,c(r+1)} +
\sum_{j=1}^{r} \sum_{q' \in \mathcal{S}(q/\contracted_{j-1})} g_{q',c(r+1)} 
\end{align*}
where $g_{q',c'}=(c'(a(q')-\ell(q'))/\tau^*(q'))^{\tau^*(q')}$ is the constant
defined in \autoref{prop:friedgut}, $a(q')$ is the total arity of $q'$, and
$\ell(q')$ is the number of atoms in $q'$.
\end{sloppypar}

\begin{theorem} \label{th:strong-multiround} 
  If $q$ has an $(\varepsilon,r)$-plan $\cal M$ then 
  any deterministic tuple-based \mpc($\varepsilon,c$) 
  algorithm running in $r+1$ rounds
  reports at most $\gamma_c(q,\mathcal{M}) \cdot \mathbf{E}[|q(I)|]/p^{\tau^*(\mathcal{M})(1-\varepsilon)-1}$
  correct answers in expectation over uniformly chosen matching database $I$.
\end{theorem}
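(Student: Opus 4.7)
\medskip

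\noindent\textbf{Proof proposal.} The plan is to proceed by induction on the plan length $r$, using Proposition~\ref{prop:friedgut} (the Friedgut-based one-round knowledge bound) as the core estimate at each level. The base case $r=0$ is immediate: the plan $\mathcal{M}$ is empty, the algorithm has one round, and $q=q/\contracted_0\notin\Gamma^1_\varepsilon$, so Proposition~\ref{prop:friedgut} with constant $c(0+1)=c$ directly yields $\mathbf{E}[|K_{m(I)}(q)|]\leq g_{q,c}\cdot\mathbf{E}[|q(I)|]/p^{\tau^*(q)(1-\varepsilon)}$. Multiplying by $p$ and noting $\gamma_c(q,\emptyset)=g_{q,c}$ and $\tau^*(\emptyset)=\tau^*(q)$ gives the statement.

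\medskip

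\noindent For the inductive step, I would fix a server $u$ and peel off round~$1$. The cumulative message $u$ receives over all $r+1$ rounds is at most $c(r+1)\cdot N/p^{1-\varepsilon}$ bits; in particular $M^1_u$ fits in this budget. The goal is to partition the answers $u$ can ever report into two kinds of contributions. Type~(i): answers whose derivation at $u$ requires $u$ to already ``know'' (from $M^1_u$ alone) at least one tuple of some minimal subquery $q'\in\mathcal{S}(q)$, i.e., a minimal connected subquery of $q$ not in $\Gamma^1_\varepsilon$. Type~(ii): everything else, which by the tuple-based restriction must be derived by forwarding join tuples whose routing depends only on tuple content and $M^1_u$. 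For each $q'\in\mathcal{S}(q)$, apply Proposition~\ref{prop:friedgut} with constant $c(r+1)$ to the message $M^1_u$ to bound the expected knowledge of $q'$-tuples at $u$ by $g_{q',c(r+1)}\cdot\mathbf{E}[|q'(I)|]/p^{\tau^*(q')(1-\varepsilon)}$. Summing over $q'\in\mathcal{S}(q)$ and over all $p$ servers, and using $\mathbf{E}[|q'(I)|]\leq \mathbf{E}[|q(I)|]$ (extending any $q'$-tuple to $q$ via the tree-like components of $\contracted_1$, by Lemma~\ref{lem:expected_size}), gives exactly the round-$1$ term of $\gamma_c(q,\mathcal{M})$.

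\medskip

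\noindent The heart of the inductive step is the reduction for Type~(ii). Goodness condition~(2) on $M_1$ says $\chi(\contracted_1)=0$, so each connected component of $\contracted_1$ is tree-like, and by Lemma~\ref{lem:expected_size} its join has expected size $n$ over a random matching database—it behaves like a virtual matching relation. Replace each such component by a virtual base relation to obtain an instance of $q/\contracted_1$. Goodness condition~(1) on $M_2$ (applied to $q/\contracted_1$) together with the rest of the original plan is precisely a valid $(\varepsilon,r-1)$-plan $\mathcal{M}'=M_2,\ldots,M_r$ for $q/\contracted_1$. Because the tuple-based restriction forces the routing in rounds $\geq 2$ to depend only on the tuple and on $M^1_u$, rounds $2,\ldots,r+1$ form a genuine tuple-based $\mpc(\varepsilon,c)$ algorithm on this virtual instance, and the per-round budget still corresponds to $O(n/p^{1-\varepsilon})$ join tuples. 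Apply the inductive hypothesis to this shorter plan, noting that the constant $c(r+1)$ used in the inner Friedgut applications is an overestimate of the $cr$ that would otherwise suffice; this yields the remaining terms of $\gamma_c(q,\mathcal{M})$, and the overall denominator is $p^{\tau^*(\mathcal{M})(1-\varepsilon)-1}$ since $\tau^*(\mathcal{M})$ is simply the minimum of the round-1 threshold and $\tau^*(\mathcal{M}')$.

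\medskip

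\noindent The main obstacle is making the Type~(ii) reduction airtight. Three sub-issues must be handled. First, one must verify that the virtual instance over $q/\contracted_1$ is distributionally indistinguishable (for the purposes of Friedgut's inequality and Lemma~\ref{lem:expected_size}) from a matching database over the contracted vocabulary; this reduces to the $\chi=0$ property of $\contracted_1$. Second, the content-plus-$M^1_u$ routing rule must be shown to literally be a legal tuple-based rule in the reduced algorithm, which is where condition~(b) of the tuple-based model is used critically. Third, one must track how goodness condition~(1) guarantees that every $\Gamma^1_\varepsilon$ subquery of $q/\contracted_1$ touches at most one atom of $M_2$, so that the inductive plan is well-formed. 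Once these three points are pinned down, the bookkeeping that $\gamma_c$ and $\tau^*(\mathcal{M})$ combine correctly is routine, and Theorem~\ref{th:multiround} follows by the same Yao-lemma passage to randomized algorithms as in Corollary~\ref{cor:yao}. Theorem~\ref{th:connected-comps} then comes by instantiating this result on the path query $L_k$ with $k=p^{\delta}$, since the diameter in the construction grows polynomially in $p$ while the plan length controls the round count.
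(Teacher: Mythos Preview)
Your high-level strategy—peel off round~1, split the output into ``answers that rely on a round-1 known tuple of some hard subquery $q'$'' versus ``everything else'', and induct on the contracted query—is exactly the paper's approach. But there is a genuine gap in your Type~(i) bound, and a second, smaller gap in the Type~(ii) reduction.

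\medskip
\noindent\textbf{The Type~(i) bound is wrong as stated.} You bound $\E[|K_{m}(q')|]$ via Proposition~\ref{prop:friedgut}, then invoke $\E[|q'(I)|]\le \E[|q(I)|]$ to convert this into a bound in terms of $\E[|q(I)|]$. That inequality is false in general: take $q=C_5$ and $q'=L_3$ (a minimal subquery not in $\Gamma^1_0$); then $\E[|q'(I)|]=n$ while $\E[|q(I)|]=1$ by \autoref{lem:expected_size}. More fundamentally, even when the inequality happens to hold, bounding $\E[|K_m(q')|]$ does not bound the quantity you need, which is $\E[|q(I)\ltimes K_m(q')|]$. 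In a matching database each known $q'$-tuple extends to at most one $q$-answer, so the naive bound gives $\E[|q(I)\ltimes K_m(q')|]\le \E[|K_m(q')|]\le g_{q',c(r+1)}\,\E[|q'(I)|]/p^{\tau^*(q')(1-\varepsilon)-1}$, which for $C_5$ is off by a factor of $n$. The paper closes this gap with a separate semijoin lemma (\autoref{lemma:onebyp}): if $B(I')\subseteq q'(I')$ and $\E[|B(I')|]\le \gamma\,\E[|q'(I')|]$, then $\E[|q(I)\ltimes B(I')|]\le \gamma\,\E[|q(I)|]$. The point is a symmetry argument—over a random matching database the algorithm cannot preferentially learn exactly those $q'$-tuples that happen to extend to $q$-answers. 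This lemma is essential for non-tree-like $q$ and is missing from your outline.

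\medskip
\noindent\textbf{The Type~(ii) reduction needs the retraction trick.} You assert that rounds $2,\ldots,r+1$ ``form a genuine tuple-based \mpc($\varepsilon,c$) algorithm on the virtual instance'', flag the routing rule as an obstacle, and move on. The paper's construction is more delicate: first \emph{fix} $i_{\contracted_1}$ and conjugate by a permutation $\bar\sigma$ so that the $\contracted_1$-relations become identities (the ``contraction'' step); then perform a \emph{retraction} in which every atomic tuple is forwarded in round~1 to all servers that would ever receive it (this is exactly where the tuple-based routing restriction is used), making round~2 vacuous for atomic tuples, so the algorithm compresses to $r$ rounds. The non-atomic tuples known after round~1 that are lost in this compression are precisely those in $J^{A,q}_\varepsilon$, and goodness condition~(1) is what forces any such tuple to come from some $q'\notin\Gamma^1_\varepsilon$. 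Without this construction, ``rounds $2,\ldots,r+1$ of $A$'' is not itself a valid $r$-round tuple-based algorithm for $q/\contracted_1$: its first round is already restricted to tuple messages, and those messages may depend on $M^1_u$, which in turn depends on the random $i_{\contracted_1}$ you have not yet fixed.
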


Observe that for a constant-sized query $q$ and constant $r$,
$\gamma_c(q,\mathcal{M})$ is a constant.
The argument in \autoref{cor:yao} then extends immediately to this case,
implying that every randomized tuple-based \mpc($\varepsilon$)
algorithm with $p=\omega(1)$ and $r+1$ rounds will fail to compute
$q$ with probability $\Omega(n^{\chi(q)})$.  This proves
\autoref{th:multiround}.

The rest of this section gives the proof of this theorem.  The
intuition is this.  Consider a $\varepsilon$-good set $M$; then any
matching database $i$ consists of two parts, $i=(i_M, i_\contracted)$,
where $i_M$ are the relations for atoms in $M$, and $i_\contracted$
are the other relations. We show that, for a fixed instance
$i_\contracted$, the algorithm $A$ can be used to compute
$q/\contracted(i_M)$ in $r+1$ rounds; however, the first round is almost
useless, because the algorithm can discover only a tiny number of join
tuples with two or more atoms $S_j \in M$, since every subquery $q'$
of $q$ that has two $M$-atoms is not in $\Gamma^1_\varepsilon$.  This
shows that the algorithm computes $q/\contracted(i_M)$ in only $r$
rounds, and we repeat the argument until a one-round algorithm remains.

First, we need some notation.  For a connected subquery $q'$ of $q$,
$q'(I)$ denotes as usual the answer to $q'$ on an instance $I$.
Whenever $\text{atoms}(q') \subseteq \text{atoms}(q'')$, then we say
that a tuple $t'' \in q''(I)$ {\em contains} a tuple $t' \in q'(I)$,
if $t'$ is equal to the projection of $t''$ on the variables of $q'$;
if $A \subseteq q''(I), B \subseteq q'(I)$, then $A \ltimes B$, called
the {\em semijoin}, denotes the subset of tuples $t'' \in A$ that
contain some tuple $t' \in B$.

Let $A$ be a deterministic algorithm with $r+1$ rounds, $k\in [r+1]$ a
round number, $u$ a server, and $q'$ a subquery of $q$.  For a
matching database input $i$, define $m_{A,u,k}(i)$ to be the vector of
messages received by server $u$ during the first $k$ rounds of the
execution of $A$ on input $i$.  Define $m_{A,k}(i)=(m_1,\ldots, m_p)$,
where $m_u=m_{A,u,k}(i)$ for all $u\in [p]$, and:
\begin{align*}
    K^{A,u,k}_m(q') & =  \{t' \in [n]^{\text{vars}(q')}\mid \forall \mbox{ matching databases }i,
      m_{A,u,k}(i)=m \Rightarrow t'\in q'(i)\}\\
    K^{A,k}_m(q') & = \bigcup_u K^{A,u,k}_{m_u}(q') \\
    A(i) & =K^{A,r+1}_{m_{A,r+1}(i)}(q).
\end{align*}
$K^{A,u,k}_{m_{A,u,k}(i)}(q')$ and $K^{A,k}_{m_{A,k}(i)}(q')$ denote the set
of join tuples from $q'$ known at round $k$ by server $u$, and by all servers,
respectively, on input $i$.
$A(i)$ is w.l.o.g.\ the final answer of $A$ on input $i$.
Define 
\begin{align*}
J^{A,q}(i)&= \bigcup \{K^{A,1}_{m_{A,1}(i)}(q')\mid q'\mbox{ connected subquery of }q\} \\
J^{A,q}_\varepsilon(i)&= \bigcup \{K^{A,1}_{m_{A,1}(i)}(q')\mid q'\notin \Gamma^1_\varepsilon\mbox{ connected subquery of }q\}
\end{align*}
$J^{A,q}_\varepsilon(i)$ is precisely the set of join tuples known
after the first round, but which correspond to subqueries that are
themselves not computable in one round; thus, the number of tuples in
$J^{A,q}_\varepsilon(i)$ will be small.  Next, we need two lemmas.

 To prove \autoref{th:strong-multiround}, we need two lemmas.

\begin{lemma} \label{lemma:contraction} Let $q$ be a query, and $M$ be
  any $\varepsilon$-good set for $q$.  If $A$ is an algorithm with
  $r+1$ rounds for $q$, then for any matching database $i_\contracted$
  over the atoms of $\contracted$, there exists an algorithm $A'$ with
  $r$ rounds for $q/\contracted$ 
  using the same number of processors and the same total number of bits
  of communication
  received per processor such that, for every matching
  database $i_M$ defined over the atoms of $M$:
  \begin{align*}
    |A(i_M, i_\contracted)|\le |q(i_M,i_\contracted) \ltimes
    J^{A,q}_\varepsilon(i_M,i_\contracted)|+ |A'(i_M)|.
  \end{align*}
\end{lemma}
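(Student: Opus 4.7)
The plan is to construct $A'$ by compressing $A$'s first two rounds into $A'$'s single first (unrestricted) round and then replaying $A$'s rounds $3,\ldots,r+1$ as $A'$'s rounds $2,\ldots,r$. The enabling observation is the $\varepsilon$-goodness of $M$: every connected subquery of $q$ containing two or more atoms of $M$ lies outside $\Gamma^1_\varepsilon$, so the only ``hard'' join tuples that any worker of $A$ can possibly know after round 1 are exactly those captured by $J^{A,q}_\varepsilon(i_M,i_\contracted)$, while every other piece of round-1 knowledge concerns an ``easy'' subquery with at most one $M$-atom and is therefore reconstructible from the hardcoded $i_\contracted$ together with a single relation from $i_M$.

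For the construction I would fix $i_\contracted$, hardcode it into $A'$, and give $A'$ one input server per $M$-atom. In round 1 of $A'$ each worker $u$ will receive (a) the bits of $A$'s message $M^1_u$ on input $(i_M,i_\contracted)$, with the $M$-atom portions sent by $A'$'s own input servers using the same rule as $A$'s input servers and the $\contracted$-atom portions reconstructed by $u$ from the hardcoded $i_\contracted$; (b) every join tuple that $A$ would deliver to $u$ in round 2 whose underlying subquery contains at most one $M$-atom; and (c) every $M$-atom base tuple that $A$'s round 2 would deliver to $u$, whether on its own or implicitly as a component of a hard join tuple. Rounds $2,\ldots,r$ of $A'$ then re-enact $A$'s rounds $3,\ldots,r+1$ with the identical tuple-routing rules. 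Round 1 of $A'$ uses at most a constant multiple of $A$'s per-round budget, absorbed into the $O(\cdot)$ hidden constant; since the lemma tracks total bits received per processor, the resource bound is preserved.

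For the inequality I would split $A(i_M,i_\contracted)$ into tuples that lie in $q(i_M,i_\contracted) \ltimes J^{A,q}_\varepsilon(i_M,i_\contracted)$ and the rest. For a tuple $t$ in the rest, $t$'s certification in $A$ cannot invoke a round-1 hard tuple, because any such invocation would force the projection of $t$ onto the variables of the corresponding hard subquery into $J^{A,q}_\varepsilon$, contradicting $t \notin q \ltimes J^{A,q}_\varepsilon$. An induction on round index then shows that every base tuple and every non-hard join tuple an $A$-worker holds after round $k+1$ is already present in the corresponding $A'$-worker's state after round $k$; consequently the certification of $t$ goes through in $A'$ and yields a corresponding tuple of $q/\contracted(i_M)$. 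Tree-likeness of every component of $\contracted$ together with the matching-database assumption makes the lift from $q/\contracted(i_M)$ to $q(i_M,i_\contracted)$ unique, so the correspondence is injective and summing yields $|A(i_M,i_\contracted)| - |q(i_M,i_\contracted) \ltimes J^{A,q}_\varepsilon(i_M,i_\contracted)| \le |A'(i_M)|$.

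The main obstacle will be item (c) of the construction together with the induction it supports. The difficulty is that an $A$-worker may pick up an $M$-atom base tuple only as a by-product of a hard join tuple delivered in round 2, and no single $A'$ input server can determine the routing of such a hard tuple because the routing depends jointly on the round-1 messages of several $M$-atoms that different input servers hold. I would handle this by first replacing $A$ with an equivalent algorithm that explicitly forwards every constituent base tuple alongside each join tuple it routes; this only multiplies the total bits received per processor by a constant, and the tuple-based routing restriction of rounds $\ge 2$ guarantees that the modification does not change the certifications available to class-two outputs. Once this normalization is in hand, every base tuple implicitly needed by a class-two certification in $A$ is explicitly routed, hence deliverable by a single $A'$ input server, and the induction and injective correspondence through the tree-like structure of $\contracted$ become routine.
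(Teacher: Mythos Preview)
Your overall strategy---collapse $A$'s first two rounds into one unrestricted round of $A'$ and then replay rounds $3,\ldots,r+1$---matches the paper's, and you have correctly located the only real obstacle in item~(c). But the normalization fix does not work. After you ``explicitly forward every constituent base tuple alongside each join tuple'', the routing rule for a base tuple $t\in S_j$ that rides along with a hard tuple $t'$ is: send $t$ wherever $t'$ goes. The destination of $t'$ depends on $M^1_{j'u}$ for \emph{every} atom $S_{j'}$ occurring in $t'$, including $M$-atoms other than $S_j$. Hence the normalized algorithm no longer satisfies the tuple-based routing restriction for $t$ (which must depend only on $t,u,v,r,M^1_{ju}$), and the input server for $S_j$ in $A'$---which sees only $S_j$---still cannot compute where to deliver $t$. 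So item~(c) remains just as unimplementable after normalization as before.

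The paper avoids this by a different retraction. Instead of reproducing what $A$ delivers in round~2, it has each input server for $S_j$ pre-send, already in round~1, every base tuple $t\in S_j$ to all workers that would ever receive $t$ \emph{as a base-tuple message} over the entire run of $A_M$. This set of destinations is computable from $S_j$ alone precisely because the tuple-based restriction makes the routing of $t$ (sent on its own) depend only on $M^1_{ju'}$ at every future round. Round~2 then sends nothing; rounds $k\ge 3$ replay $A_M$'s routing but only for tuples the worker actually knows. The resulting algorithm may know strictly fewer non-atomic tuples than $A_M$, and the deficit is bounded by the set $J_+^{A_M}$ of non-atomic tuples known after round~1, which by $\varepsilon$-goodness of $M$ maps into $J^{A,q}_\varepsilon$. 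This is exactly your class-one/class-two split, but achieved without ever needing to track base tuples that arrive only inside hard join tuples.

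One further difference: the paper precedes the retraction by a \emph{contraction} step, replacing $(i_M,i_\contracted)$ by $(\bar\sigma(i_M),\mathbf{id}_\contracted)$ via permutations $\bar\sigma$ read off from the tree-like components of $\contracted$. This makes the hardcoded $\contracted$-part equal to the identity matching and turns the correspondence between $q(i)$ and $q/\contracted(i_M)$ into a literal projection rather than a lift through the tree-like structure. Your hardcoding-plus-injective-lift is morally the same, but the permutation reduction is cleaner and makes the ``same total number of bits received per processor'' claim immediate.
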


In other words, the algorithm returns no more answers than the (very
few) tuples in $J$, plus what another algorithm $A'$ (to be defined)
computes for $q/\contracted$ in {\em one fewer} round.

\begin{proof} The proof requires two constructions.

  {\em 1. Contraction.} Call $q/\contracted$ the {\em contracted}
  query.  While the original query $q$ takes as input the complete
  database $i = (i_M, i_\contracted)$, the input to the contracted
  query is only $i_M$.  We show how to use the algorithm $A$ for $q$
  to derive an algorithm, denoted $A_M$, for $q/\contracted$.

  For each connected component $C$ of $\contracted$, choose a
  representative variable $z_c\in\text{vars}(C)$; also denote $S_C$
  the result of applying the query $C$ to $i_\contracted$; $S_c$ is a
  matching, because $C$ is tree-like.  Denote $\bar \sigma =
  \setof{\sigma_x}{x \in \text{vars}(q)}$, where, for every variable
  $x \in \text{vars}(q)$, $\sigma_x$ is the following permutation on
  $[n]$: if $x \not \in \text{vars}(\contracted)$ then $\sigma_x=$ the
  identity; otherwise $\sigma_x = \Pi_{xz_c}(S_C)$, for the unique
  connected component s.t. $x \in \text{vars}(C)$.  We think of $\bar
  \sigma$ as permuting the domain of each attribute $x \in
  \text{vars}(q)$.  Then $\bar \sigma(q(i)) = q(\bar \sigma(i))$, and
  $\bar \sigma(i_\contracted) = \mathbf{id}_\contracted$ the identity
  matching database (where each relation in $\contracted$ is
  $\set{(1,1,\ldots),(2,2,\ldots),\ldots}$), and therefore:
  \begin{align*}
    q/\contracted(i_M) = &\bar \sigma^{-1}(\Pi_{\text{vars}(q/\contracted)}(q(\bar \sigma(i_M),\mathbf{id}_\contracted)))
  \end{align*}
  (We assume $\text{vars}(q/\contracted) \subseteq \text{vars}(q)$;
  for that, when we contract a set of nodes of the hypergraph, we
  replace them with one of the nodes in the set.)

  The algorithm $A_M$ for $q/\contracted(i_M)$ is this.  First, each
  input server for $S_j \in M$ replaces $S_j$ with $\bar \sigma(S_j)$
  (since $i_\contracted$ is fixed, it is known to all servers, hence,
  so is $\bar \sigma$); next, run $A$ unchanged, substituting all
  relations $S_j \in \contracted$ with the identity; finally, apply
  $\bar \sigma^{-1}$ to the answers and return them.  We have:
    \begin{align}
      A_M(i_M) = \bar \sigma^{-1}(\Pi_{\text{vars}(q/\contracted)}(A(\bar
      \sigma(i_M), \mathbf{id}_\contracted))) \label{eq:contracted}
    \end{align}

  {\em 2. Retraction.} Next, we transform $A_M$ into a new algorithm
  $R_{A_M}$ called the {\em retraction} of $A_M$, as follows:

  (a) During round 1 of $R_{A_M}$, each input server for $S_j$ sends
  (in addition to the messages sent by $A_M$) every tuple in $t \in
  S_j$ to all servers $u$ that eventually receive $t$.  In other
  words, the input server sends $t$ to every $u$ for which there
  exists $k \in [r+1]$ such that $t \in
  K^{A_M,u,k}_{m_{A_M,u,k}(I_M)}(S_j)$.  This is possible because of
  the restrictions in the tuple-based \mpc($\varepsilon$)
  model: all destinations of  $t$ depend only on $S_j$,
  and hence can be computed by the input server.  
  Note that this does not increase the total number of bits received by any
  processor, though it does shift more of those bits to the first round.
  $R_{A_M}$ will not
  send any atomic tuples during rounds $k \geq 2$.  (b) In round $2$,
  $R_{A_M}$ sends {\em no} tuples. (c) In rounds $k \geq 3$, $R_{A_M}$
  sends a tuple $t$ from $u$ to $v$ if server $u$ knows $t$ at round
  $k$, and algorithm $A_M$ sends $t$ from $u$ to $v$ at round $k$.

  It follows that, for each round $k$, and for each subquery $q'$ of
  $q/\contracted$ with at least two atoms, $K^{R_{A_M},u,k}_{m(i)}(q')
  \subseteq K^{A_M,u,k}_{m(i)}(q')$: in other words, $R_{A_M}$ knows a
  subset of the non-atomic tuples known by $A_M$.  Moreover, let
  $J_+^{A_M}(i_M)$ be the set of non-atomic tuples known by $A_M$
  after round 1, $J_+^{A_M}(i_M) = \bigcup
  \setof{K^{R_{A_M},u,1}_{m(i)}(q')}{q' \mbox{ has at least two
      atoms}}$: these are the tuples that we refused to sent in round
  2.  Then:
  \begin{align}
    A_M(i_M) \subseteq (q/\contracted(i_M) \ltimes J_+^{A_M}) \cup  R_{A_M}(i_M) \label{eq:retract}
  \end{align}

\begin{sloppypar}
  Since $R_{A_M}$ wastes one round, we can compress it to an algorithm
  $A'$ with only $r$ rounds.  To prove the lemma, we convert
  \eqref{eq:retract} into a statement about $A$.
  \eqref{eq:contracted} already showed that $A_M(i_M)$ is related to
  $A(i_M, i_\contracted)$.  Now we show how $J_+^{A_M}$ is related to
  $J_\varepsilon^{A,q}(i)$:
  $J_+^{A_M}(i_M) \subseteq  \sigma^{-1}(\Pi_{\text{vars}(q/\contracted)}(J_\varepsilon^{A,q}(\bar  \sigma(i))))$
  because, by the definition of $\varepsilon$-goodness, if a subquery
  $q'$ of $q$ has two atoms in $M$, then $q' \not\in
  \Gamma^1_\varepsilon$.  \eqref{eq:retract} becomes:
\end{sloppypar}
  \begin{align*}
    A_M(i_M) \subseteq (q/\contracted(i_M) \ltimes  \Pi_{\text{vars}(q/\contracted)}(J_\varepsilon^{A,q}(i))) \cup \bar \sigma^{-1}(A'(i_M))
  \end{align*}
  The lemma follows from 
  $$q/\contracted(i_M) \ltimes
  \Pi_{\text{vars}(q/\contracted)}(J_\varepsilon^{A,q}(i)) \subseteq
  \Pi_{\text{vars}(q/\contracted)}(q(i) \ltimes
  J_\varepsilon^{A,q}(i))$$ 
  and $|A_M(i_M)| = |A(i_M, i_\contracted)|$,
  by \eqref{eq:contracted}.
\end{proof}

\begin{lemma}
\label{lemma:onebyp}
Let $q$ be a conjunctive query, and $q'$ a subquery; if $i$ is a
database instance for $q$, we write $i'$ for its restriction to the
relations occurring in $q'$.  Let $B$ be any algorithm for $q'$
(meaning that, for every matching database $i'$, $B(i') \subseteq
q'(i')$), and assume that $\E[|B(I')|] \leq \gamma \cdot
\E[|q'(I')|]$.  Then,
$\E[|q(I) \ltimes B(I')|] \leq \gamma \E[|q(I)|]$
where $I$ is a uniformly chosen matching database.
\end{lemma}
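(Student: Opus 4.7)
The plan is to exploit the independence structure of the uniform matching distribution. Since $q$ has no self-joins, the relation symbols used in $\text{atoms}(q')$ are disjoint from those in $R=\text{atoms}(q)\setminus\text{atoms}(q')$, so the instance factors as $I=(I',I_R)$ with $I'$ and $I_R$ independently and uniformly chosen. Crucially, $B(I')$ is a function of $I'$ alone (together with any independent randomness), and is therefore independent of $I_R$. I would first make this independence explicit, and then compute the two expectations in parallel so that they differ by exactly one ratio that can be controlled by the hypothesis on $B$.

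Concretely, fix a potential answer $t\in[n]^{\text{vars}(q)}$ and let $\pi(t)\in[n]^{\text{vars}(q')}$ be its projection onto the variables of $q'$. Decompose the event $\{t\in q(I)\}$ as $E'_t\cap E''_t$, where $E'_t=\{\pi(t)\in q'(I')\}$ depends only on $I'$ and $E''_t$ is the event that every atom in $R$ is satisfied by $t$ in $I_R$. As in the proof of \autoref{lem:expected_size}, for a matching database $\P[\mathbf{a}_j\in S_j]=n^{1-a_j}$ and the relations are independent, so $\P[E''_t]$ equals the constant $C=\prod_{S_j\in R}n^{1-a_j}$, independent of $t$. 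Since $B(I')\subseteq q'(I')$, the event $\{\pi(t)\in B(I')\}$ is contained in $E'_t$, and hence
\begin{align*}
\P[\,t\in q(I)\ltimes B(I')\,]
= \P[\,E''_t\cap\{\pi(t)\in B(I')\}\,]
= \P[E''_t]\cdot\P[\pi(t)\in B(I')]
= C\cdot\P[\pi(t)\in B(I')]
\end{align*}
by independence of $I_R$ from $I'$. Summing over $t$ and using that each $t'\in[n]^{\text{vars}(q')}$ has exactly $n^{k-k'}$ preimages under $\pi$, where $k=|\text{vars}(q)|$ and $k'=|\text{vars}(q')|$, gives
\begin{align*}
\E[\,|q(I)\ltimes B(I')|\,]=Cn^{k-k'}\,\E[\,|B(I')|\,].
\end{align*}

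Running the identical computation with $q'(I')$ in place of $B(I')$ (using $\P[E'_t]\P[E''_t]=\P[E_t]$ by the same independence) yields $\E[|q(I)|]=Cn^{k-k'}\,\E[|q'(I')|]$. Combining the two identities with the hypothesis $\E[|B(I')|]\le\gamma\,\E[|q'(I')|]$ proves the claim. There is no serious obstacle: the entire argument rests on a single independence observation, namely that the no-self-joins assumption together with the product structure of the matching-database distribution makes $B(I')$ independent of $I_R$; once this is in hand, the bound follows from linearity of expectation and the fact that the combinatorial factor $Cn^{k-k'}$ appears in both numerator and denominator.
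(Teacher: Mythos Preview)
Your proof is correct and follows essentially the same route as the paper's: both decompose $q(I)\ltimes B(I')$ according to the projection onto $\text{vars}(q')$, exploit the independence of $I'$ and $I_R$ in the uniform matching distribution, and observe that the same combinatorial factor appears for $B$ and for $q'$ so that the ratio is exactly $\E[|B(I')|]/\E[|q'(I')|]\le\gamma$. Your explicit computation of $\P[E''_t]=C=\prod_{S_j\in R}n^{1-a_j}$ is just a more direct variant of the paper's symmetry argument that $\E[|\sigma_{\bar y=\bar a}(q(I))|]$ is independent of $\bar a$.
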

While, in general, $q'$ may return many more answers than $q$, the
lemma says that, if $B$ returns only a fraction of $q'$, then $q
\ltimes B$ returns only the same fraction of $q$.
\begin{proof}
  Let $\bar y = (y_1, \ldots, y_k)$ be the variables occurring in
  $q'$.  For any $\bar a \in [n]^k$, let $\sigma_{\bar y=\bar
    a}(q(i))$ denote the subset of tuples $t \in q(i)$ whose
  projection on $\bar y$ equals $\bar a$.  By symmetry, the quantity
  $\E[|\sigma_{\bar y = \bar a}(q(I))|]$ is independent of $\bar a$,
  and therefore equals $\E[|q(I)|]/n^k$.  Notice that $\sigma_{\bar
    y=\bar a}(B(i'))$ is either $\emptyset$ or $\set{\bar a}$.  We
  have:
  \begin{align*}
    \E[|q(I) \ltimes B(I')|] & = \sum_{\bar a \in [n]^k} \E[|\sigma_{\bar y = \bar a}(q(I)) \ltimes \sigma_{\bar y=\bar a}(B(I'))|]\\
& =  \sum_{\bar a \in [n]^k}\E[|\sigma_{\bar y = \bar a}(q(I))|] \cdot \P(\bar a \in B(I'))\\
& =  \E[|q(I)|] \cdot \sum_{\bar a \in [n]^k}\P(\bar a \in B(I'))/n^k \\
& = \E[|q(I)|]\cdot \E[|B(I')|]/n^k
  \end{align*}
  Repeating the same calculations for $q'$ instead of $B$,
  \begin{align*}
  \E[|q(I) \ltimes q'(I')|] = & \E[|q(I)|] \E[|q'(I')|]/n^k
\end{align*}
The lemma follows immediately, by using the fact that, by definition,
$q(i) \ltimes q'(i') = q(i)$.
\end{proof}

Finally, we prove \autoref{th:strong-multiround}.

\begin{proof}[Proof of \autoref{th:strong-multiround}]
  Given the $(\varepsilon,r)$-plan $\text{atoms}(q)$
  $=M_0\supset\ldots\supset M_r$, define $\hat
  M_k=\contracted_k-\contracted_{k-1}$, for $k \ge 1$.  We build up
  $i_{\contracted_r}$ by iteratively choosing matching databases
  $i_{\hat M_k}=\contracted_k-\contracted_{k-1}$ for $k=1,\ldots,
  r$ and applying \autoref{lemma:contraction} with $q$ replaced by
  $q/\contracted_{k-1}$ and $M$ replaced by $M_k$ to obtain algorithms
  $A^k=A^k_{(i_{\hat M_1},\ldots,i_{\hat M_k})}$ for $q/\hat
  M_1\cdots\hat M_k$ such that the following inequality holds for
  every choice of matching databases given by $i_{M_r}$ and
  $i_{\contracted_r}=(i_{\hat M_1},\ldots,i_{\hat M_r})$:
\begin{align}
|A(i_{M_r},i_{\contracted_r})|\nonumber
  & = |A(i_{M_r}, i_{\hat M_1}, \ldots, i_{\hat M_r})| \nonumber\\
    & \leq |q(i_{M_r},i_{\contracted_r}) \ltimes J^{A,q}_\varepsilon(i_{M_r}, i_{\hat M_1}, \ldots, i_{\hat M_r})| \nonumber\\
  &+ |q(i_{M_r},i_{\contracted_r}) \ltimes J^{A^1,q/\hat M_1}_\varepsilon(i_{M_r}, i_{\hat M_2}, \ldots, i_{\hat M_r})| \nonumber\\
  & + \ldots \nonumber\\
  & + |q(i_{M_r},i_{\contracted_r}) \ltimes J^{A^{r-1},q/\hat M_1\cdots\hat M_{r-1}}_\varepsilon(i_{M_r}, i_{\hat M_r})| \nonumber\\
  & +|A^r(i_{M_r})|\label{eq:sum}
\end{align}
We now average~\eqref{eq:sum} over a uniformly chosen matching database $I$
and upper bound each of the resulting terms:
For all $k \in[r]$ we have $\chi(q/\contracted_k)=\chi(q)$ (see
\autoref{def:m}), and hence, by \autoref{lem:expected_size}, we have
$\E[|q(I)|]=\E[|(q/\contracted_k)(I_{M_k})|]$.
By definition, we have $\tau^*(q/\contracted_r)\ge \tau^*(\mathcal{M})$.
Then, by \autoref{th:onestep}, \autoref{prop:friedgut}, and the fact that the
number of bits/tuples received by each processor in the first round of
algorithm $A^r$ is at most $r+1$ times the bound for the original algorithm $A$,
\begin{align*}
\E[|A^r(I_{M_r})|] 
\leq g_{q/\contracted_r,c(r+1)} \frac{\E[|(q/\contracted_r)(I_{M_r})|]}{p^{\tau^*(q/\contracted_r)(1-\varepsilon)-1}} 
 \leq g_{q/\contracted_r,c(r+1)} \frac{\E[|q(I)|]}{p^{\tau^*(\mathcal{M})(1-\varepsilon)-1}}
\end{align*}
\begin{sloppypar}
Note that $I_{M_{k-1}}=(I_{M_r},I_{\hat M_k},\ldots, I_{\hat M_{r}})$
and consider the expected number of tuples in $J = J^{A^{k-1},q/\hat
  M_1\cdots\hat M_{k-1}}_\varepsilon(I_{M_{k-1}})$.  The algorithm
$A^{k-1}=A^{k-1}_{I_{\contracted_{k-1}}}$ itself depends on the choice
of $I_{\contracted_{k-1}}$; still, we show that $J$ has a small number
of tuples.  Every subquery $q'$ of $q/\hat M_1\cdots\hat M_{k-1}$ that
is not in $\Gamma^1_\varepsilon$ (hence contributes to $J$) has
$\tau^*(q')\ge \tau^*(\mathcal{M})$.  By \autoref{th:onestep}, 
\autoref{prop:friedgut}, 
for
each fixing $I_{\contracted_{k-1}}=i_{\contracted_{k-1}}$, the
expected number of tuples produced for subquery $q'$ by $B_{q'}$ ,
where $B_{q'}$ is the portion of the first round of
$A^{k-1}_{i_{\contracted_{k-1}}}$ that produces tuples for $q'$,
satisfies $\E[|B_{q'}(I_{M_{k-1}})|]
\leq g_{q',c(r+1)} \E[|q'(I_{M_{k-1}})|]/p^{\tau^*(\mathcal{M})(1-\varepsilon)-1}$ since each processor in a round of $A^{k-1}_{i_{\contracted_{k-1}}}$  (and
hence $B_{q'}$) receives at most $r+1$ times the communication bound for 
a round of $A$.
We now apply \autoref{lemma:onebyp} to derive 
\end{sloppypar}
\begin{align*}
\E[|  q(I)\ltimes B_{q'}(I_{M_{k-1}})|]
 & = \E[|(q/\contracted_{k-1})(I_{M_{k-1}})\ltimes B_{q'}(I_{M_{k-1}})|]\\
&\leq g_{q',c(r+1)} (\E[|(q/\contracted_{k-1})(I_{M_{k-1}})|]/p^{\tau^*(\mathcal{M})(1-\varepsilon)-1}\\
& = g_{q',c(r+1)} (\E[|q(I)|]/p^{\tau^*(\mathcal{M})(1-\varepsilon)-1}.
\end{align*}
Averaging over all choices of $I_{\contracted_{k-1}}=i_{\contracted_{k-1}}$
and  summing over the number of different queries $q'$  in 
$\mathcal{S}(q/\hat M_1\cdots\hat M_{k-1}) = \mathcal{S}(q/\contracted_{k-1})$,
where we recall that  $\mathcal{S}(q/\contracted_{k-1})$ is the set of all
minimal connected subqueries $q'$ of $q/\contracted_{k-1}$ that are not
in $\Gamma^1_\varepsilon$,
we obtain 
\begin{align*}
\E[|q(I) \ltimes &J^{A^{k-1},q/\hat M_1\cdots\hat M_{k-1}}_\varepsilon(I_{M_{k-1}})|]
\leq \left( \sum_{q' \in \mathcal{S}(q/\contracted_{k-1})} g_{q',c(r+1)} \right)
\frac{\E[|q(I)|]}{p^{\tau^*(\mathcal{M})(1-\varepsilon)-1}}.
\end{align*}
Combining the bounds obtained for the $r+1$ terms in~\eqref{eq:sum}, we conclude
that
\begin{align*}
\E[|A(I)|]&\leq \left(g_{q/\contracted_r,c(r+1)} +
\sum_{k=1}^{r} \sum_{q' \in \mathcal{S}(q/\contracted_{k-1})} g_{q',c(r+1)} \right)
\frac{\E[|q(I)|]}{p^{\tau^*(\mathcal{M})(1-\varepsilon)-1}}\\
&=\gamma_c(q,\mathcal{M})\cdot \frac{\E[|q(I)|]}{p^{\tau^*(\mathcal{M})(1-\varepsilon)-1}}
\end{align*}
which proves \autoref{th:strong-multiround}.
\end{proof}

We now can apply the explicit bounds of \autoref{th:strong-multiround} to
prove~\autoref{th:connected-comps}.

\begin{proof}[Proof of \autoref{th:connected-comps}]
  Since larger $\varepsilon$ implies a more powerful algorithm, we
  assume without loss of generality that $\varepsilon=1-1/t$ for some
  integer constant $t\ge 1$.
  Let $\delta=1/(2t)$.
  The family of input graphs and the initial
  distribution of the edges to servers will look like an input to
  $L_k$, where $k =\lfloor p^\delta\rfloor$.
  In particular, the $n$ vertices of the input
  graph $G$ will be partitioned into $k+1$ sets $P_1, \dots, P_{k+1}$, 
  each partition containing $n/(k+1)$ vertices. The edges of $G$
  will form permutations between adjacent partitions, $P_i, P_{i+1}$,
  for $i=1, \dots, k$. Thus, $G$ will contain $nk/(k+1) < n$ edges. This
  construction creates essentially $k$ binary relations, each of size
  $n/(k+1)$. 
  
  Since $k<p$, we can assume that the adversary initially places the edges
  of the graph so that each server is given edges only from one relation. 
  It is now
  easy to see that any tuple-based algorithm in \mpc($\varepsilon$) that solves
  \textsc{Connected-Components} for an arbitrary graph $G$ of the above family
  in $r$ rounds implies an $(r+1)$-round tuple-based algorithm in
  \mpc($\varepsilon$) that solves 
  $L_k$ when each relation has size $n/(k+1)$ and $k = p$. Indeed, the new
  algorithm runs the algorithm for connected components for the first $r$
  rounds, and
  then executes a join on the labels of each node. 
  Since each tuple in $L_k$ corresponds
  exactly to a connected component in $G$, the join will recover all
  the tuples of $L_k$.
  
  Since the query size is not independent of the number of servers $p$,
  we have to carefully compute the constants for our lower bounds.
  To conclude our proof, consider an \mpc($\varepsilon,c$) algorithm
  for $L_k$.
  Let $r=\lceil \log_{k_{\varepsilon}} k \rceil -1$.
  We will use the $(\varepsilon,r)$-plan $\mathcal{M}$ for $L_k$
  presented in the proof of \autoref{lemma:lk:lower}, apply
  \autoref{th:strong-multiround},
  and compute the factor $\gamma_c{L_k,\mathcal{M}}$.
  First, notice that each query $L_k/\contracted_j$ for $j=0, \dots, r$ is
  isomorphic to $L_{k/k_{\varepsilon}^j}$.
  Then, the set $\mathcal{S}(L_{k/k_{\varepsilon}^j})$
  consists of at most $k/k_{\varepsilon}^j$ paths $q'$ of
  length $k_{\varepsilon}+1$.
  Observe that $k_\varepsilon=2t$ since $\varepsilon=1-1/t$.
  Also, by the choice of $r$,
  $L_k/\contracted_r$ is isomorphic to $L_\ell$ where $\ell\ge k_\varepsilon+1$
  and $\ell< k_\varepsilon^2$.
  Since $L_{k'}$ has total arity $2k'$, $k'$ atoms, and
  $\tau^*(L_{k'})=\lceil k'/2\rceil$, we derive that
  $g_{L_{k'},c}=(2k'c/\lceil k'/2\rceil)^{\lceil k'/2\rceil}
  \le (4c)^{\lceil k'/2\rceil}$.
  Thus, we have
  \begin{align*}
\gamma_c(L_k,\mathcal{M}) 
&= g_{L_k/\contracted_r,c(r+1)} +
\sum_{j=1}^{r} \sum_{q' \in \mathcal{S}(L_k/\contracted_{j-1})} g_{q',c(r+1)} \\
&\le (4c(r+1))^{\lceil k_{\varepsilon}^2/2\rceil} + 
\sum_{j=1}^{r} \frac{k}{k_{\varepsilon}^{j-1}} (4c(r+1))^{\lceil (k_{\varepsilon}+1)/2\rceil}\\
&\leq (2k+1) (4c(r+1))^{\lceil k_{\varepsilon}^2/2\rceil}\\
&        \leq (2k+1) (4c\lceil\log_{k_\varepsilon} k\rceil)^{\lceil k_{\varepsilon}^2/2\rceil}.
\end{align*}
In particular this implies that $\gamma_c(L_k,\mathcal{M})$ is at most
$c' k \cdot (\log_2 k)^{c''}$ for some constants $c'$ and $c''$ depending
only on $\varepsilon$ and $c$.
Consequently, \autoref{th:strong-multiround} implies that any tuple-based 
\mpc($\varepsilon$) algorithm 
using at most $\lceil \log_{k_{\varepsilon}} k \rceil-1$ rounds 
  reports at most a
\begin{align*}
\frac{c' k \cdot (\log_2 k)^{c''}}
  {p^{\tau^*(\mathcal{M})(1-\varepsilon)-1}}
\le c' p^{(1+\delta)-\tau^*(\mathcal{M})(1-\varepsilon)} (\delta\log_2 p)^{c''}
\end{align*}
fraction of the $n/(k+1)$ required output tuples for the $L_k$ query.
Now by construction,
$\tau^*(\mathcal{M})=\tau^*(K_{k_\varepsilon+1})=\lceil (k_\varepsilon+1)/2\rceil=t+1$
since $k_\varepsilon=2t$. 
Now since $1-\varepsilon=1/t$ and $\delta=1/(2t)$, we see that the fraction
of required tuples reported is at most
$$c' p^{(1+1/(2t))-(t+1)/t)} (\delta\log_2 p)^{c''}\le c'p^{-1/(2t)} (\log_2 p)^{c''}$$
which is $o(1)$ in $p$ since $t\ge 1$, $c'$, $c''$ are constants.

This implies that any algorithm that computes \textsc{Connected-Components} on $G$ requires at least  $\lceil \log_{k_\varepsilon} \lfloor p^\delta\rfloor \rceil -2 = \Omega(\log p)$ rounds,
  since $k_\varepsilon$ and $\delta$ are constants.
\end{proof}

\section{Conclusion}
\label{sec:conclusion}

We have introduced powerful models for capturing tradeoffs between
rounds and amount of communication required for
parallel computation of relational queries.  For one round 
on the most general model we have shown that queries are characterized by
$\tau^*$ which determines the space exponent
$\varepsilon=1-1/\tau^*$ that governs the replication rate as a function of
the number of processors.
For multiple rounds we derived a strong lower bound
tradeoff between the number of rounds
$r$ and the replication rate of $r \cdot \log 2/(1-\varepsilon) \approx \log
(\text{rad}(q))$ for more restricted tuple-based communication.
For both, we showed matching or nearly matching upper bounds given by 
simple and  natural algorithms. 

\cut{
In this paper, we explored the tradeoff between the amount of
communication and rounds for parallel computation of relational
queries. For one round, we gave matching upper and lower bounds for
the amount of communication needed, by showing that the covering number 
$\tau^*$ of a query and its space exponent $\varepsilon$ are related as 
$\varepsilon = 1-1/\tau^*$.  For multiple rounds, we
showed lower bounds for a restricted type of tuple-based
communication, and provided matching and near-matching bounds for
several classes of queries.  The tradeoff between the number of rounds
$r$ and the replication rate (expressed through the space exponent
$\varepsilon$) is $r \cdot \log 2/(1-\varepsilon) \approx \log
(\text{rad}(q))$.  Future work includes a study of multi-round lower
bounds for an unrestricted \mpc($\varepsilon$) model, as well as
extensions to problems beyond conjunctive queries.
}

\bibliographystyle{abbrv}
\bibliography{bib}

\end{document}